\theoremstyle:=definition,openquestion,remark,plain\do{%
	\expandafter\g@addto@macro\csname th@\theoremstyle\endcsname{%
		\addtolength\thm@preskip\parskip
	}%
}
\theoremstyle{plain}
\newtheorem{theorem}{Theorem}
\newtheorem*{lemma*}{Lemma}
\theoremstyle{definition}
\newcounter{defnctr}
\newtheorem{definition}[defnctr]{Definition}
\newcounter{propctr}
\newtheorem{proposition}[propctr]{Proposition}
\newcounter{lemmactr}
\newtheorem{lemma}[lemmactr]{Lemma}
\theoremstyle{openquestion}
\theoremstyle{remark}
\newcounter{remarkctr}
\newtheorem{remark}[remarkctr]{Remark}
\newcommand{\namedref}[2]{\hyperref[#2]{#1~\ref*{#2}}}
\newcommand{\subfigureref}[2]{\hyperref[#1]{Figure~\ref*{#1}#2}}
\newtheorem*{rep@theorem}{\rep@title}
\newcommand{\newreptheorem}[2]{%
\newenvironment{rep#1}[1]{%
 \def\rep@title{\normalfont \textbf{#2 \ref{##1}}} 
 \begin{rep@theorem}}%
 {\end{rep@theorem}}}
\newcounter{relctr} 
\everydisplay\expandafter{\the\everydisplay\setcounter{relctr}{0}} 
\newcommand\labrel[2]{%
  \begingroup
    \refstepcounter{relctr}%
    \stackrel{\textnormal{(\alph{relctr})}}{\mathstrut{#1}}%
    \originallabel{#2}%
  \endgroup
}
\newcommand{\norm}[1]{\lVert#1\rVert}
\newcommand{\E}[1]{\mathbb{E}[#1]}
\newcommand{\Var}[1]{\mathrm{Var}(#1)}
\newcommand{\Cov}[1]{\mathrm{Cov}(#1)}
\newcommand{\N}{\mathcal{N}}
\newcommand{\abs}[1]{\left|#1\right|}
\DeclareMathOperator*{\argmin}{arg\,min}
\title{Universal Rate-Distortion-Perception Representations for Lossy Compression}
\author{%
  George Zhang \\
  Electrical and Computer Engineering \\
  University of Toronto \\
  \texttt{gq.zhang@mail.utoronto.ca} \\
  \And
  Jingjing Qian \\
  Electrical and Computer Engineering \\
  McMaster University \\
  \texttt{qianj40@mcmaster.ca} \\
  \AND
  Jun Chen  \\
  Electrical and Computer Engineering \\
  McMaster University \\
  \texttt{chenjun@mcmaster.ca} \\
  \And
  Ashish Khisti \\
  Electrical and Computer Engineering \\
  University of Toronto \\
  \texttt{akhisti@ece.utoronto.ca} \\
}
\begin{document}

\maketitle

\begin{abstract} 
In the context of lossy compression, Blau \& Michaeli \cite{blau2019rethinking} adopt a mathematical notion of perceptual quality and define the information rate-distortion-perception function, generalizing the classical rate-distortion tradeoff. We consider the notion of universal representations in which one may fix an encoder and vary the decoder to achieve any point within a collection of distortion and perception constraints. We prove that the corresponding information-theoretic universal rate-distortion-perception function is operationally achievable in an approximate sense. Under MSE distortion, we show that the entire distortion-perception tradeoff of a Gaussian source can be achieved by a single encoder of the same rate asymptotically. We then characterize the achievable distortion-perception region for a fixed representation in the case of arbitrary distributions, and identify conditions under which the aforementioned results continue to hold approximately. This motivates the study of practical constructions that are approximately universal across the RDP tradeoff, thereby alleviating the need to design a new encoder for each objective. We provide experimental results on MNIST and SVHN suggesting that on image compression tasks, the operational tradeoffs achieved by machine learning models with a fixed encoder suffer only a small penalty when compared to their variable encoder counterparts.
\end{abstract}

\section{Introduction}
Unlike in lossless compression, the decoder in a lossy compression system has flexibility in how to reconstruct the source. Conventionally, some measure of distortion such as mean squared error, PSNR or SSIM/MS-SSIM \cite{wang2003multiscale,wang2004image} is used as a quality measure. Accordingly, lossy compression algorithms are analyzed through rate-distortion theory, wherein the objective is to minimize the amount of distortion for a given rate. However, it has been observed that low distortion is not necessarily synonymous with high perceptual quality; indeed, deep learning based image compression has inspired works in which authors have noted that increased perceptual quality may come at the cost of increased distortion \cite{blau2018perception, agustsson2019generative}. This culminated in the work of Blau \& Michaeli \cite{blau2019rethinking} who propose the rate-distortion-perception theoretical framework.

The main idea was to introduce a third \textit{perception} axis which more closely mimics what humans would deem to be visually pleasing. Unlike distortion, judgement of perceptual quality is taken to be inherently no-reference. The mathematical proxy for perceptual quality then comes in the form of a divergence between the source and the reconstruction \textit{distributions}, motivated by the idea that perfect perceptual quality is achieved when they are identical. Leveraging generative adversarial networks \cite{goodfellow2014generative} in the training procedure has made such a task possible for complex data-driven settings with efficacy even at very low rates \cite{tschannen2018deep}. Naturally, this induces a tradeoff between optimizing for perceptual quality and optimizing for distortion. But in designing a lossy compression system, one may wonder where exactly this tradeoff lies: is the objective tightly coupled with optimizing the representations generated by the encoder, or can most of this tradeoff be achieved by simply changing the decoding scheme?

Our contributions are as follows. We define the notion of \textit{universal} representations which are generated by a fixed encoding scheme for the purpose of operating at multiple perception-distortion tradeoff points attained by varying the decoder. We then prove a coding theorem establishing the relationship between this operational definition and an information universal rate-distortion-perception function. Under MSE distortion loss, we study this function for the special case of the Gaussian distribution and show that the penalty in fixing the representation map with fixed rate can be small in many interesting regimes. For general distributions, we characterize the achievable distortion-perception region with respect to an arbitrary representation and establish a certain approximate universality property.

We then turn to study how the operational tradeoffs achieved by machine learning models on image compression under a fixed encoder compared to varying encoders. Our results suggest that there is not much loss in reusing encoders trained for a specific point on the distortion-perception tradeoff across other points. The practical implication of this is to reduce the number of models to be trained within deep-learning enhanced compression systems. Building on \cite{theis2021advantages, theis2021coding}, one of the key steps in our techniques is the assumption of \textit{common randomness} between the sender and receiver which will turn out to reduce the coding cost. Throughout this work, we focus on the scenario where a rate is fixed in advance. We address the scenario when the rate is changed in the supplementary.


\section{Related Works}

Image quality measures include full-reference metrics (which require a ground truth as reference), or no-reference metrics (which only use statistical features of inputs). Common full-reference metrics include MSE, SSIM/MS-SSIM \cite{wang2004image,wang2003multiscale}, PSNR or deep feature based distances \cite{johnson2016perceptual,zhang2018unreasonable}. No-reference metrics include BRISQUE/NIQE/PIQE \cite{mittal2011blind,mittal2012making,venkatanath2015blind} and Fréchet Inception Distance \cite{heusel2017gans}. Roughly speaking, one can consider the former set to be distortion measures and the latter set to be perception measures in the rate-distortion-perception framework. Since GANs capable of synthesizing highly realistic samples have emerged, using trained discriminators as a proxy for perceptual quality in deep learning based systems has also been explored \cite{larsen2016autoencoding}. This idea is principled as various GAN objectives can be interpreted as estimating particular statistical distances \cite{nowozin2016f,arjovsky2017wasserstein,mroueh2017mcgan}.

Rate-distortion theory has long served as a theoretical foundation for lossy compression \cite{cover1999elements}. Within machine learning, variations of rate-distortion theory have been introduced to address representation learning \cite{tschannen2018recent, alemi2018fixing, brekelmans2019exact}, wherein a central task is to extract useful information from data on some sort of budget, and also in the related field of generative modelling \cite{huang2020evaluating}. On the other hand, distribution-preserving lossy compression problems have also been studied in classical information theory literature \cite{zamir2001natural,saldi2013randomized,saldi2015output}.

More recently, in an effort to reduce blurriness and other artifacts, machine learning research in lossy compression has attempted to incorporate GAN regularization into compressive autoencoders \cite{tschannen2018deep,blau2019rethinking,agustsson2019generative}, which were traditionally optimized only for distortion loss \cite{toderici2017full,mentzer2018conditional}. This has led to highly successful data-driven models operating at very low rates, even for high-resolution images \cite{mentzer2020high}. An earlier work of Blau \& Michaeli \cite{blau2018perception} studied only the perception-distortion tradeoff within deep learning enhanced image restoration using GANs. This idea was then incorporated with distribution-preserving lossy compression \cite{tschannen2018deep} to study the rate-distortion-perception tradeoff in full generality \cite{blau2019rethinking}.

The work most similar to ours is \cite{yan2021perceptual}, who observe that an optimal encoder for the ``classic” rate-distortion function is also optimal for perfect perceptual compression at twice the distortion. Our work investigates the intermediate regime and also includes common randomness as a modelling assumption, which in principle allows us to achieve perfect perceptual quality at lower than twice the distortion. The concurrent work \cite{freirich2021theory} also establishes the achievable distortion-perception region as in our Theorem \ref{thm:general_universal_pd} and provides a geometric interpretation of the optimal interpolator in Wasserstein space.

\section{Rate-Distortion-Perception Representations}
\begin{figure*}[t]
    \vspace*{-8mm}
    \centering
    \subfigure[]{%
        \includegraphics[width=0.5\textwidth]{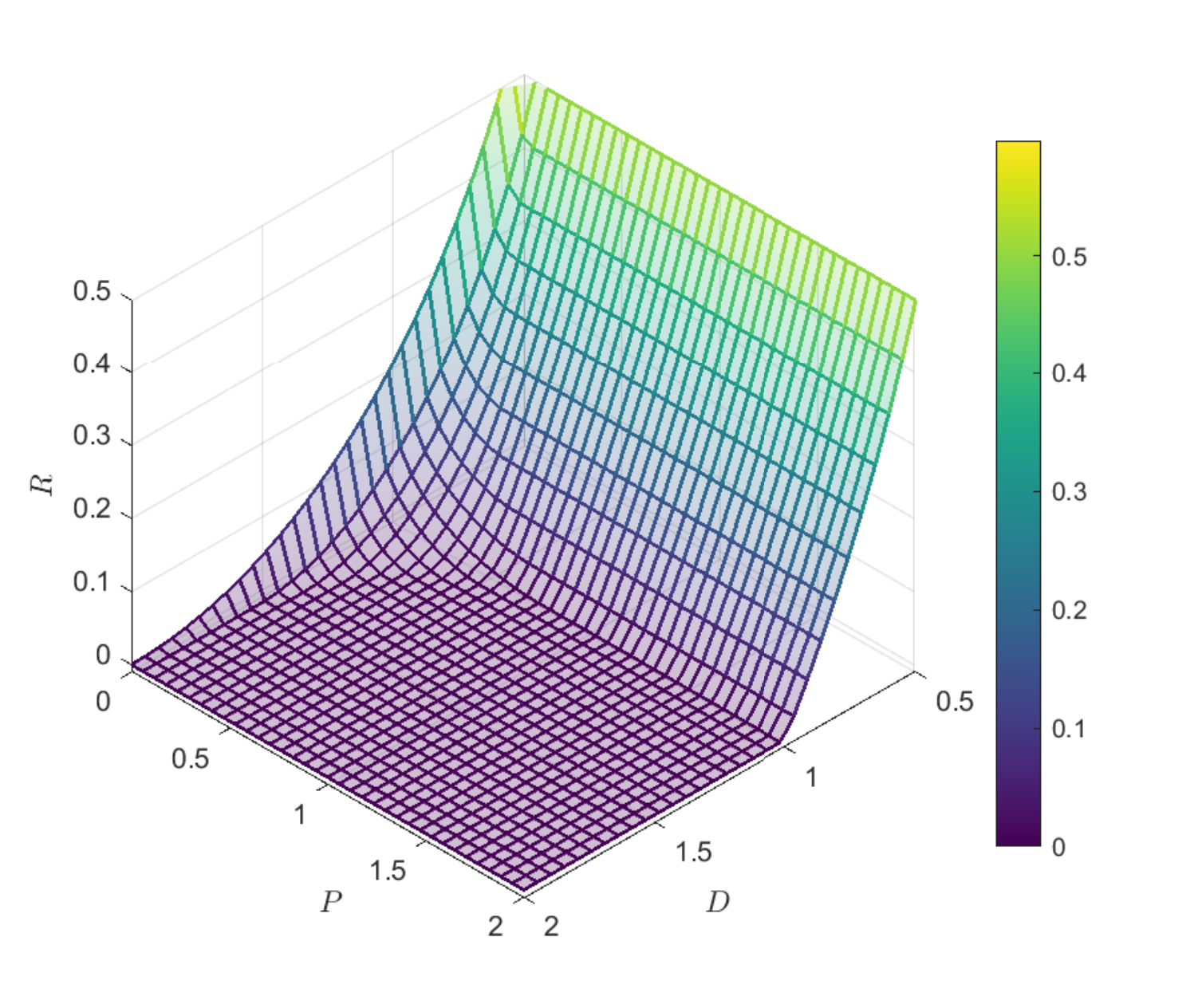}
        \label{fig:rdp_plot_gaussian}
    } 
    \subfigure[]{%
        \includegraphics[width=0.45\textwidth]{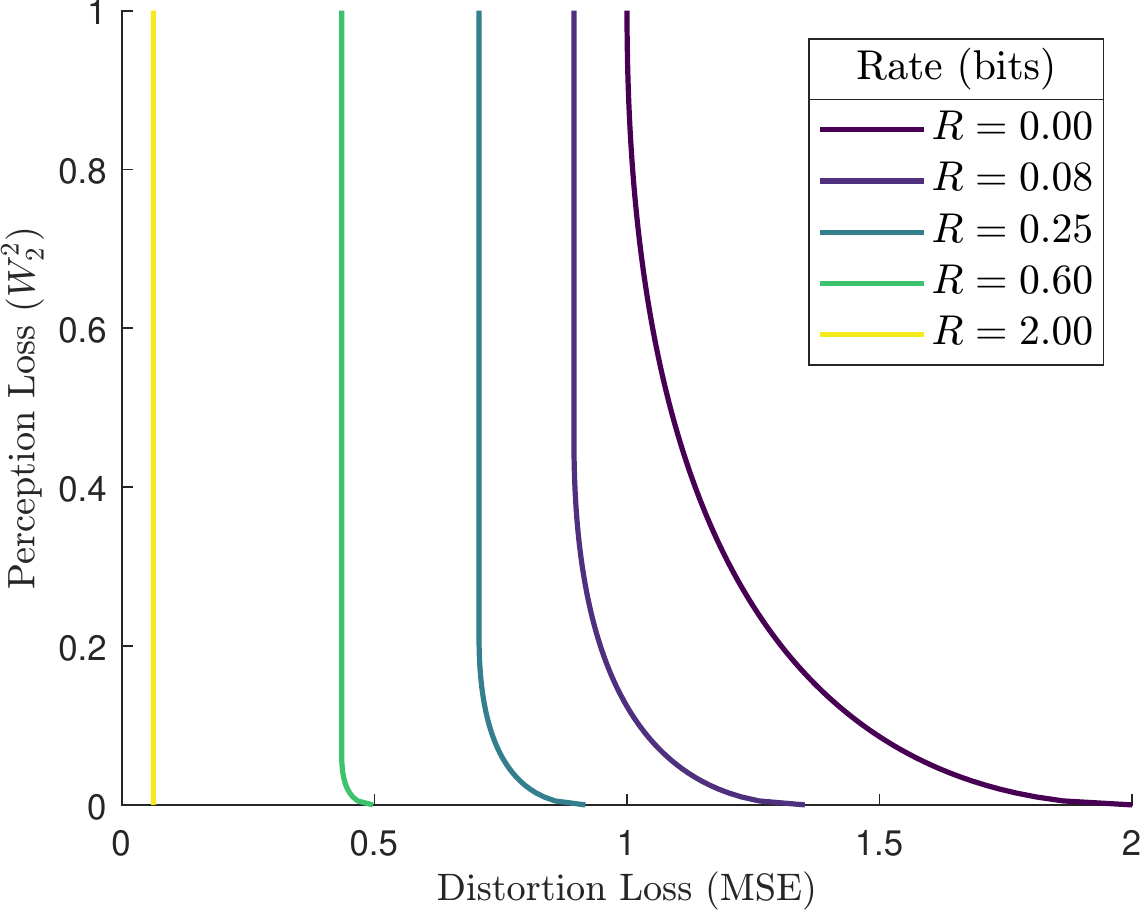}
        \label{fig:dpr_plot_gaussian}
    } 
    \vspace*{-2mm}
    \caption{\subref{fig:rdp_plot_gaussian} The information Rate-Distortion-Perception function $R(D,P)$ for a standard Gaussian source $X$. \subref{fig:dpr_plot_gaussian} Distortion-perception cross-sections across multiple rates. The tension between perception and distortion is most visible at low rates. When both $P$ and $D$ are active, Theorem \ref{thm:pd_gaussian_universality} implies that the rate needed to achieve an entire cross-section along fixed rate is the same as the rate to achieve any single point on the cross-section in the asymptotic setting.}
    \label{fig:changing_decoder}
    \vspace*{-2.5mm}
\end{figure*}
The backbone of rate-distortion theory characterizes an (operational) objective expressing what can be achieved by encoders and decoders with a quantization bottleneck in terms of an \textit{information} function which is more convenient to analyze. Let $X \sim p_X$ be an information source to be compressed through quantization. The quality of the compressed source is measured by a distortion function $\Delta: \mathcal{X} \times \mathcal{X} \to \mathbb{R}_{\geq 0}$ satisfying $\Delta(x,\hat{x})=0$ if and only if $x=\hat{x}$. We distinguish between the \textit{one-shot} scenario in which we compress one symbol at a time, and the \textit{asymptotic} scenario in which we encode $n$ i.i.d. samples from $X$ jointly and analyze the behaviour as $n \to \infty$. The minimum rate needed to meet the distortion constraint $D$ on average is denoted by $R^*(D)$ in the one-shot setting and by $R^{(\infty)}(D)$ in the asymptotic setting. These are studied through the information rate-distortion function
\begin{equation}
\begin{aligned}
& R(D) = \underset{p_{\hat{X}|{X}}}{\text{inf}} I(X; \hat{X}) \quad \text{s.t.} \quad \E{\Delta(X, \hat{X})} \leq D,
\end{aligned}
\end{equation}
where $I(X;\hat{X})$ is the mutual information between a source $X$ and reconstruction $\hat{X}$. The principal result of rate-distortion theory states that $R^{(\infty)}(D) = R(D)$ \cite{cover1999elements}. Furthermore, it is also possible to characterize $R^*(D)$ using $R(D)$ as we will soon see.

In light of the discussion on perceptual quality, the flexibility in distortion function is not necessarily a good method to capture how realistic the output may be perceived. To resolve this, Blau \& Michaeli \cite{blau2019rethinking} introduce an additional constraint to match the distributions of $X$ and $\hat{X}$ in the form of a non-negative divergence between probability measures $d(\cdot, \cdot)$ satisfying $d(p,q)=0$ if and only if $p=q$. The one-shot rate-distortion-perception function $R^*(D,P)$ and asymptotic rate-distortion-perception function $R^{(\infty)}(D,P)$ are defined in the same fashion as their rate-distortion counterparts, which we will later make precise.
\begin{definition}[iRDPF]
The information rate-distortion-perception function for a source $X$ is defined as
\begin{equation*}
\begin{aligned}
R(D,P) = &\underset{p_{\hat{X}|{X}}}{\text{inf}} \,\, I(X; \hat{X}) \\
& \text{s.t.} \,\,\,\, \E{\Delta(X, \hat{X})} \leq D, \quad d(p_X,p_{\hat{X}}) \leq P.
\end{aligned}
\end{equation*} 
\end{definition}
The strong functional representation lemma \cite{theis2021coding, li2018strong} establishes relationships between the operational and information functions: 
\begin{gather}
    R^{(\infty)}(D,P) = R(D,P), \\
    R(D,P)\leq R^*(D,P) \leq R(D,P) + \log(R(D,P) + 1) + 5. \label{rdp_achievability_oneshot}
\end{gather}
These results hold also for $R(D)=R(D,\infty)$. We make note that they were developed under a more general set of constraints for which stochastic encoders and decoders with a shared source of randomness were used. In practice, the sender and receiver agree on a random seed beforehand to emulate this behaviour.

\subsection{Gaussian Case}

We now present the closed form expression of $R(D,P)$ for a Gaussian source under MSE distortion and squared Wasserstein-2  perception losses (see also Figure \ref{fig:rdp_plot_gaussian} and Figure \ref{fig:dpr_plot_gaussian}). Recall that the squared Wasserstein-2 distance is defined as 
\begin{equation}
    W_{2}^2(p_X, p_{\hat{X}}) = \inf \E{\norm{X-\hat{X}}^2},
\end{equation}
where the infimum is over all joint distributions of $(X,\hat{X})$ with marginals $p_X$ and $p_{\hat{X}}$. Let $\mu_X=\mathbb{E}[X]$ and $\sigma^2_X=\mathbb{E}[\|X-\mu_X\|^2]$.



\begin{theorem}\label{thm:gaussian_rdp}
For a scalar Gaussian source $X \sim \N(\mu_X,\sigma_X ^2)$, the information rate-distortion-perception function under squared error distortion and squared Wasserstein-2 distance is attained by some $\hat{X}$ jointly Gaussian with $X$ and is given by
\begin{align*}
&R(D, P)
=\begin{cases}
    \frac{1}{2} \log \frac{\sigma_{X}^{2} (\sigma_{X}-\sqrt{P})^{2}}{\sigma_{X}^{2} (\sigma_{X}-\sqrt{P})^{2}-(\frac{\sigma_{X}^{2}+(\sigma_{X}-\sqrt{P})^{2}-D}{2})^{2}} \\
 &\hspace{-1.0in}\mbox{ if } \sqrt{P} \leq \sigma_{X}-\sqrt{|\sigma^2_X-D|}, \\
    \max\{\frac{1}{2}\log\frac{\sigma^2_X}{D},0\} &\hspace{-1.0in}\mbox{ if } \sqrt{P} > \sigma_{X}-\sqrt{|\sigma^2_X-D|}.
\end{cases}
\end{align*}
\end{theorem}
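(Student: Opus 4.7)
The plan is to reduce the infimum defining $R(D,P)$ to a two-parameter deterministic optimization over the second-order statistics of $\hat X$, then invoke the fact that Gaussians are entropy-maximizers to certify that a jointly Gaussian minimizer exists. I would first take $\mu_X=0$ without loss of generality, and observe that replacing $\hat X$ by $\hat X - \E{\hat X}$ can only improve both $\E{(X-\hat X)^2}$ and $W_2^2(p_X, p_{\hat X})$ (the latter via Gelbrich's lower bound $W_2^2(p,q) \ge (\mu_p-\mu_q)^2 + (\sigma_p - \sigma_q)^2$ in one dimension), so the optimizer may be taken with $\E{\hat X}=0$.

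For the converse, let $v = \Var{\hat X}$ and $c = \Cov{X,\hat X}$. The bound
\begin{equation*}
I(X;\hat X) = h(X) - h(X\mid \hat X) \;\ge\; \frac{1}{2}\log \frac{\sigma_X^{2} v}{\sigma_X^{2} v - c^{2}}
\end{equation*}
follows from $h(X\mid\hat X) \le h(X - (c/v)\hat X) \le \frac{1}{2}\log(2\pi e (\sigma_X^2 - c^2/v))$, with equality iff $(X,\hat X)$ is jointly Gaussian. The MSE constraint gives $c \ge (\sigma_X^2 + v - D)/2$ and Gelbrich applied to the perception constraint gives $v \in [((\sigma_X-\sqrt P)_+)^2, (\sigma_X+\sqrt P)^2]$. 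Substituting the tightest admissible $c$ reduces the converse to minimizing
\begin{equation*}
g(v) = \frac{(\max\{\sigma_X^{2} + v - D,\, 0\})^{2}}{4\sigma_X^{2} v}
\end{equation*}
over that interval. A direct calculation shows $g$ is quasi-convex on its nontrivial domain with unique interior critical point $v^{\star} = \sigma_X^2 - D$ whenever $D < \sigma_X^2$, at which $g(v^{\star}) = 1 - D/\sigma_X^2$, recovering the classical Gaussian rate $\tfrac{1}{2}\log(\sigma_X^2/D)$. The statement ``$v^{\star}$ lies in the $W_2$-feasible interval'' rearranges precisely to $\sqrt P > \sigma_X - \sqrt{|\sigma_X^2 - D|}$, which isolates the second branch (the $\max\{\cdot,0\}$ absorbing $D \ge \sigma_X^2$, in which case the constant reconstruction $\hat X \equiv 0$ is feasible). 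In the complementary case $g$ is monotone on the interval, the minimum sits at the boundary $v = (\sigma_X-\sqrt P)^2$, and substitution yields the first branch verbatim.

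For achievability in each branch I would produce the matching jointly Gaussian reconstruction $\hat X = \alpha X + N$ with $N \perp X$ Gaussian, calibrating $(\alpha, \Var{N})$ so that $(\Var{\hat X}, \Cov{X,\hat X})$ equals the optimal $(v^{\star}, c^{\star})$. Joint-Gaussian tightness of the converse inequality then delivers the target $I(X;\hat X)$, and one verifies $\E{(X-\hat X)^2} = D$, $W_2^2 = (\sigma_X - \sigma_{\hat X})^2 = P$ (both marginals are Gaussian), and Cauchy-Schwarz $|c^{\star}| \le \sigma_X \sqrt{v^{\star}}$, which reduces to $P \le D$ and so follows from the Case~1 hypothesis. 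I expect the main obstacle to be keeping the case distinctions tight, particularly the degenerate regime $D \ge \sigma_X^2$, where the MSE constraint alone permits $c=0$ and one must separately justify $R = 0$; this is bookkeeping rather than genuine difficulty. The essential step is the entropy-maximizer bound, which collapses the original variational problem over $p_{\hat X\mid X}$ to a scalar optimization in $v$.
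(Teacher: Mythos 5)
Your argument is correct and lands essentially on the same scalar optimization over second-order statistics $(v,c)$ that the paper's proof reduces to, but the converse is obtained by a slightly more streamlined route. The paper bounds $h(X\mid\hat X)\le h(X-\E{X\mid\hat X})\le \tfrac12\log(2\pi e\,\mathrm{MMSE})$ and then invokes an estimation-theory lemma (the paper's Lemma~\ref{lem:guassian_var}) to replace the MMSE by the Gaussian/linear MMSE $\sigma_X^2-c^2/v$ and thereby reach the jointly Gaussian surrogate $\hat X_G$; you skip that step entirely by conditioning on $\hat X$, subtracting the \emph{linear} estimator $(c/v)\hat X$, and applying the max-entropy bound directly, arriving at the same inequality $I(X;\hat X)\ge\tfrac12\log\frac{\sigma_X^2 v}{\sigma_X^2 v-c^2}$ without needing to compare posterior variances. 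The remainder — Gelbrich's bound for the $W_2$-feasible interval of $v$, pinning $c$ at $\max\{(\sigma_X^2+v-D)/2,0\}$, the unimodality of $g(v)$ with interior critical point $v^\star=\sigma_X^2-D$, the case split on whether $v^\star$ is $W_2$-feasible, and achievability by a jointly Gaussian $\hat X=\alpha X+N$ with a Cauchy–Schwarz check — is equivalent in substance to the paper's active-constraint case analysis over $(\sigma_{\hat X},\theta)$. One cosmetic nit: the condition that $v^\star$ lies in the feasible interval should be stated as $\sqrt P\ge\sigma_X-\sqrt{|\sigma_X^2-D|}$ (nonstrict), though at equality the two branches agree so nothing breaks. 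Your ``equality iff jointly Gaussian'' remark is stronger than needed (joint Gaussianity is sufficient but not necessary for equality in the chain); since you only need sufficiency for achievability, this is harmless.
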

When $\sqrt{P} > \sigma_{X}-\sqrt{|\sigma^2_X-D|}$, the perception constraint is inactive and $R(D,P) = R(D)$.  The choice of $W_2^2(\cdot,\cdot)$ perception loss turns out to not be essential; we show in the supplementary that $R(D,P)$ can also be expressed under the KL-divergence.
\subsection{Universal Representations}
Whereas the RDP function is regarded as the minimal rate for which we can vary an encoder-decoder pair to meet any distortion and perception constraints $(D,P)$, the universal RDP (uRDP) function generalizes this to the case where we fix an encoder and allow only the decoder to adapt in order to meet multiple constraints $(D,P) \in \Theta$. 
For example, one case of interest is when $\Theta$ is the set of all $(D,P)$ pairs associated with a given rate along the iRDP function; how much additional rate is needed if this is to be achieved by a fixed encoder, rather than varying it across each objective? The hope is that the rate to use some fixed encoder across this set is not much larger than the rate to achieve any single point. As we will see for the Gaussian distribution, this is in fact the case in the asymptotic setting, and also approximately true in the one-shot setting. 
Below, we define the one-shot universal rate-distortion-perception function and the information universal rate-distortion-perception function, then establish a relationship between the two. In these definitions we assume $X$ is a random variable and $\Theta$ is an arbitrary non-empty set of $(D,P)$ pairs.
\begin{definition}[ouRDPF]
 A $\Theta$-universal encoder of rate $R$ is said to exist if we can find random variable $U$, encoding function $f_U:\mathcal{X}\rightarrow\mathcal{C}_U$ and decoding functions $g_{U,D,P}:\mathcal{C}_U\rightarrow\hat{\mathcal{X}}$, $(D,P)\in\Theta$ such that
\begin{align*}
\mathbb{E}[\ell(f_U(X))]\leq R,\quad \mathbb{E}[\Delta(X,\hat{X}_{D,P})]\leq D,\quad
&d(p_{X},p_{\hat{X}_{D,P}})\leq P,
\end{align*}
where  $\mathcal{C}_U$ is a  uniquely decodable binary code specified by $U$, $\hat{X}_{D,P} = g_{U,D,P}(f_U(X))$, and $\ell(f_U(X))$ denotes the length of binary codeword $f_U(X)$. The random variable $U$ acts as a shared source of randomness. The infimum of such $R$ is called the one-shot universal rate-distortion-perception function (ouRDPF) and denoted by $R^*(\Theta)$. When $\Theta=\{(D,P)\}$, this specializes to the one-shot rate-distortion-perception function $R^*(D,P)$.
\end{definition}
\begin{definition}[iuRDPF]
Let $Z$ be a representation of $X$ (i.e. generated by some random transform $p_{Z|X}$). Let $\mathcal{P}_{Z|X}(\Theta)$ be the set of transforms $p_{Z|X}$ such that for each $(D,P) \in \Theta$, there exists $p_{\hat{X}_{D,P}|Z}$ for which
\begin{equation*}
    \E{\Delta(X, \hat{X}_{D,P})} \leq D \,\, \text{and} \,\, d(p_X, p_{\hat{X}_{D,P}}) \leq P,
\end{equation*}
where $X\leftrightarrow Z\leftrightarrow\hat{X}_{D,P}$ are assumed to form a Markov chain. Define
\begin{equation}\label{eqn:R_phi}
\begin{split}
R(\Theta)&=\inf_{p_{Z|X} \in \mathcal{P}_{Z|X}(\Theta)} I(X;Z).
\end{split}
\end{equation}
We refer to this as the information universal rate-distortion-perception function (iuRDPF) and say that the random variable $Z$ is a representation which is $\Theta$-universal with respect to $X$. The conditional distributions $p_{\hat{X}_{D,P}|Z}$ induce stochastic mappings transforming the representations to reconstructions $\hat{X}_{D,P}$ in order to meet specific $(D,P)$ constraints. 
\end{definition}

Note that we assume a shared source of stochasticity within the ouRDPF as a tool to prove the achievability of the iuRDPF, but not within the definition of the iuRDPF itself. Moreover, source $X$, reconstruction $\hat{X}_{D,P}$, representation $Z$, and random seed $U$ are all allowed to be multivariate random variables.

\begin{theorem}\label{thm:universal_achievability}
$R(\Theta) \leq R^*(\Theta) \leq R(\Theta) + \log(R(\Theta) + 1) + 5$.
\end{theorem}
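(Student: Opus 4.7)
The plan is to prove the two inequalities separately, in direct analogy with the non-universal bounds in \eqref{rdp_achievability_oneshot}. The left inequality is a converse-flavored step where I would read off a valid representation from any given operational scheme, and the right inequality is a constructive step that leans on the strong functional representation lemma (SFRL) of Li \& El Gamal \cite{li2018strong}. The key observation is that universality changes essentially nothing at the level of the SFRL argument: once a single representation $Z$ works for every $(D,P)\in\Theta$, the classical construction applies uniformly across $\Theta$.

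For the lower bound I would fix any $\Theta$-universal scheme $(U, f_U, \{g_{U,D,P}\})$ of expected length $R$ and set $Z := (f_U(X), U)$. Since $U$ is independent of $X$, the Markov chain $X \leftrightarrow Z \leftrightarrow \hat{X}_{D,P}$ holds because each $\hat{X}_{D,P}=g_{U,D,P}(f_U(X))$ is a deterministic function of $Z$; the distortion and perception constraints are then inherited from the operational scheme, so $p_{Z|X} \in \mathcal{P}_{Z|X}(\Theta)$. A short calculation gives
\begin{equation*}
I(X;Z) = I(X; f_U(X) \mid U) \leq H(f_U(X) \mid U) \leq \E{\ell(f_U(X))} \leq R,
\end{equation*}
where the penultimate inequality is Kraft applied conditionally to the uniquely decodable codes $\mathcal{C}_u$. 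Taking an infimum over representations yields $R(\Theta) \leq R^*(\Theta)$.

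For the upper bound I would fix $\varepsilon>0$, pick $p_{Z|X}\in\mathcal{P}_{Z|X}(\Theta)$ with $I(X;Z)\leq R(\Theta)+\varepsilon$, and apply the SFRL to obtain a random variable $W \perp X$ and a deterministic $\phi$ such that $Z=\phi(X,W)$ realizes the prescribed joint law and $H(Z\mid W) \leq I(X;Z) + \log(I(X;Z)+1) + 4$. The encoder transmits $Z$ losslessly using a prefix-free code tailored to the conditional law $p_{Z\mid W=w}$, incurring expected length at most $H(Z\mid W)+1$. I would take the shared randomness $U$ in the operational scheme to be $W$ augmented with any auxiliary independent randomness needed to sample from the kernels $p_{\hat{X}_{D,P}\mid Z}$; the decoder losslessly recovers $Z$ and then draws $\hat{X}_{D,P}\sim p_{\hat{X}_{D,P}\mid Z}$ for the requested $(D,P)$, so all constraints hold by construction. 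Combining the two bounds and letting $\varepsilon\to 0$ gives the stated inequality.

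The main subtlety I anticipate is that the universal setting requires several \emph{stochastic} decoders to be implemented from the same transmitted representation $Z$, one for each $(D,P)\in\Theta$. This is handled cleanly by folding all the auxiliary sampling randomness into $U$, which crucially uses the shared-randomness assumption baked into the ouRDPF definition; without it one would have to pay extra rate to communicate the sampling noise. A minor technicality is that the infimum defining $R(\Theta)$ need not be attained, which forces the $\varepsilon$-argument above but is harmless by monotonicity of the bound in $I(X;Z)$.
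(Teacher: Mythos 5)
Your proof is correct and follows essentially the same approach as the paper: the lower bound reads off the representation $Z=(f_U(X),U)$ from an arbitrary operational scheme and applies Kraft together with the independence of $U$ and $X$, while the upper bound applies the strong functional representation lemma to losslessly transmit $Z$ and then uses the ordinary functional representation lemma (with extra shared randomness folded into $U$) to have the decoder simulate each $p_{\hat{X}_{D,P}\mid Z}$. The only cosmetic difference is that you carry the $\varepsilon$-slack explicitly when the infimum defining $R(\Theta)$ is not attained, which the paper leaves implicit.
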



In practice, the overhead $\log(R(\Theta) + 1) + 5$ either makes the upper bound an overestimate of $R^*(\Theta)$ or is negligible compared to $R(\Theta)$. This overhead vanishes completely in the asymptotic setting as we will show in the supplementary.
 We can therefore interpret $R(\Theta)$ as the rate required to meet an entire set $\Theta$ of constraints with the encoder fixed. Within the set $\Theta$, it is clear that $\sup_{(D,P) \in \Theta} R(D,P)$ characterizes the rate required to meet the most demanding constraint. Now define
\begin{equation}
    A(\Theta) = R(\Theta) - \sup_{(D,P) \in \Theta} R(D,P),
\end{equation}
which is the rate penalty incurred by meeting \textit{all} constraints in $\Theta$ with the encoder fixed. Let $\Omega(R)= \{ (D,P) : R(D,P) \leq R \}$. It is ideal to have $A(\Omega(R))=0$ for each $R$ so that achieving the entire tradeoff with a single encoder is essentially no more expensive than to achieve any single point on the tradeoff, thereby alleviating the need to design a host of encoders for different distortion-perception objectives with respect to the same rate.

One can also take the following alternative perspective. 
The proof of Theorem \ref{thm:universal_achievability}  shows that every representation $Z$ can be generated from source $X$ using an encoder of rate $I(X;Z)+o(I(X;Z))$, and based on $Z$, the decoder can produce reconstruction $\hat{X}_{D,P}$ by leveraging random seed $U$ to simulate conditional distribution $p_{\hat{X}_{D,P}|Z}$. Therefore, the problem of designing an encoder boils down to identifying a suitable representation. Given a representation $Z$ of $X$, we define the achievable distortion-perception region $\Omega(p_{Z|X})$ as the set of all $(D,P)$ pairs for which there exists $p_{\hat{X}_{D,P}|Z}$ such that $\E{\Delta(X, \hat{X}_{D,P})} \leq D \,\, \text{and} \,\, d(p_X, p_{\hat{X}_{D,P}}) \leq P$. Intuitively, $\Omega(p_{Z|X})$ is the set of all possible distortion-perception constraints that can be met based on representation $Z$. If $\Omega(p_{Z|X})=\Omega(R)$ for some representation $Z$ with $I(X;Z)=R$, then $Z$ has the maximal achievable distortion-perception region in the sense that $\Omega(p_{Z'|X})\subseteq\Omega(p_{Z|X})$ for any $Z'$ with $I(X;Z')\leq R$. In the supplementary material we establish mild regularity conditions for which the existence of such $Z$ is equivalent to the aforementioned desired property $A(\Omega(R))=0$. We shall show that that this ideal scenario actually arises in the Gaussian case and an approximate version can be found more broadly.

\begin{theorem}\label{thm:pd_gaussian_universality}
Let $X \sim \N(\mu_X,\sigma_X ^2)$ be a scalar Gaussian source and assume MSE and $W_2^2(\cdot,\cdot)$ losses. Let $\Theta$ be any non-empty set of $(D,P)$ pairs. Then
\begin{equation}
    A(\Theta)=0.
\end{equation}
Moreover, for any representation $Z$ jointly Gaussian with $X$ such that 
\begin{equation}
    I(X;Z) = \sup_{(D,P) \in \Theta} R(D,P),
\end{equation}
we have
\begin{equation}\label{eqn:sup_rate}
    \Theta \subseteq \Omega(p_{Z|X}) = \Omega(I(X;Z)).
\end{equation}
\end{theorem}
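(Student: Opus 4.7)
My plan is to exhibit an explicit jointly Gaussian representation $Z$ at rate $R_0 := \sup_{(D,P)\in\Theta} R(D,P)$ and, for each $(D,P)\in\Theta$, synthesize a reconstruction as a linear function of $Z$ plus an independent Gaussian perturbation whose joint law with $X$ matches that of the optimizer from Theorem \ref{thm:gaussian_rdp}. The lower bound $R(\Theta)\geq R_0$ is immediate from data processing: for any $Z\in\mathcal{P}_{Z|X}(\Theta)$ and $(D,P)\in\Theta$, the Markov chain $X\leftrightarrow Z\leftrightarrow \hat{X}_{D,P}$ together with feasibility of $\hat{X}_{D,P}$ yields $I(X;Z)\geq I(X;\hat{X}_{D,P})\geq R(D,P)$, and taking the supremum over $\Theta$ gives $I(X;Z)\geq R_0$.

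For the matching upper bound I would take $Z=X+N$ with $N\sim\mathcal{N}(0,\sigma_N^2)$ independent of $X$ and $\sigma_N^2$ chosen so that $I(X;Z)=\tfrac{1}{2}\log(1+\sigma_X^2/\sigma_N^2)=R_0$. Theorem \ref{thm:gaussian_rdp} guarantees that for each $(D,P)\in\Theta$ there is an optimizer $\hat{X}_{D,P}^{*}$ jointly Gaussian with $X$, hence fully described by its mean $\mu^{*}$, variance $\sigma_{*}^{2}$, and covariance $c^{*}=\mathrm{Cov}(X,\hat{X}_{D,P}^{*})$. Define the decoder $\hat{X}_{D,P}=aZ+b+W$ with $W\sim\mathcal{N}(0,\sigma_W^2)$ independent of $Z$, and solve the three moment equations
\[
a\sigma_X^2=c^{*},\qquad a\mu_X+b=\mu^{*},\qquad a^2(\sigma_X^2+\sigma_N^2)+\sigma_W^2=\sigma_{*}^{2}.
\]
The first two uniquely determine $(a,b)$, and the third forces $\sigma_W^2=\sigma_{*}^{2}-(c^{*})^2(\sigma_X^2+\sigma_N^2)/\sigma_X^4$.

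The crux of the argument, and the one step I expect to be delicate, is verifying $\sigma_W^2\geq 0$. Rewriting the condition in terms of the correlation $\rho^2=(c^{*})^2/(\sigma_X^2\sigma_{*}^{2})$ between $X$ and $\hat{X}_{D,P}^{*}$ reduces it to $\rho^2\leq \sigma_X^2/(\sigma_X^2+\sigma_N^2)=1-e^{-2R_0}$, which is precisely the data-processing bound $I(X;\hat{X}_{D,P}^{*})\leq I(X;Z)=R_0$. Since $I(X;\hat{X}_{D,P}^{*})=R(D,P)\leq R_0$ by the definition of $R_0$, the noise variance is non-negative and the construction goes through. The resulting $(X,\hat{X}_{D,P})$ is jointly Gaussian with the same first and second moments as $(X,\hat{X}_{D,P}^{*})$, hence the same joint distribution, so the distortion and perception constraints are met. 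This shows $p_{Z|X}\in\mathcal{P}_{Z|X}(\Theta)$ with $I(X;Z)=R_0$, yielding $R(\Theta)\leq R_0$ and therefore $A(\Theta)=0$.

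For the second half, $\Theta\subseteq\Omega(p_{Z|X})$ is already contained in the construction above. Applying the identical synthesis to every $(D,P)\in\Omega(R_0)$ gives $\Omega(R_0)\subseteq\Omega(p_{Z|X})$, while the reverse inclusion follows once more from data processing: any $\hat{X}$ produced from $Z$ obeys $R(D,P)\leq I(X;\hat{X})\leq R_0$. Finally, any $Z'$ jointly Gaussian with $X$ with $I(X;Z')=R_0$ is, by a sufficient-statistic argument, equivalent up to an invertible affine transform to the canonical $Z=X+N$, so $\Omega(p_{Z'|X})=\Omega(p_{Z|X})=\Omega(R_0)$, completing the proof of \eqref{eqn:sup_rate}.
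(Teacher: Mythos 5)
Your proposal is correct and takes essentially the same approach as the paper: pick a jointly Gaussian $Z$ with $I(X;Z)=\sup_{(D,P)\in\Theta}R(D,P)$ and, for each $(D,P)$, synthesize a jointly Gaussian reconstruction from $Z$. The only tactical difference is that the paper uses a purely deterministic affine decoder $\hat X_{D,P}=\mathrm{sign}(\rho_{XZ})\frac{\sigma_X-\sqrt P}{\sigma_Z}(Z-\mu_Z)+\mu_X$ and targets only boundary points (interior points then follow by dominance), whereas you additionally inject independent Gaussian noise $W$ to match the optimizer's moments at any point and verify $\sigma_W^2\geq 0$ via $\rho^2\leq 1-2^{-2R_0}$ — a correct but slightly heavier route to the same endpoint.
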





Next we consider a general source $X\sim p_X$ and characterize the achievable distortion-perception region for an arbitrary representation $Z$ under MSE loss. We then provide some evidence indicating that every reconstruction $\hat{X}_{D,P}$ achieving some point $(D,P)$ on the distortion-perception tradeoff for a given $R$ likely has the property $\Omega(p_{\hat{X}_{D,P}|X})\approx\Omega(R)$.

\begin{theorem}[Approximate universality for general sources]\label{thm:general_universal_pd}
Assume MSE loss and any perception measure $d(\cdot,\cdot)$. Let $Z$ be any arbitrary representation of $X$. Then
\begin{equation*}
    \normalfont
    \Omega(p_{Z|X}) \subseteq \left\{ (D,P) : D \geq \E{\|X-\tilde{X}\|^2} + \inf_{p_{\hat{X}}: d(p_X, p_{\hat{X}}) \leq P} W^2_2(p_{\tilde{X}}, p_{\hat{X}}) \right\}\subseteq\mbox{cl}(\Omega(p_{Z|X})),
\end{equation*}
where $\tilde{X} = \E{X|Z}$ is the reconstruction minimizing squared error distortion with $X$ under the representation $Z$ and $cl(\cdot)$ denotes set closure. In particular, the two extreme points $(D^{(a)},P^{(a)})=(\mathbb{E}[\|X-\tilde{X}\|^2], d(p_X,p_{\tilde{X}}))$ and $(D^{(b)},P^{(b)})=(\mathbb{E}[\|X-\tilde{X}\|^2]+W^2_2(p_{\tilde{X}},p_{X}),0)$ are contained in $\mbox{cl}(\Omega(p_{Z|X}))$.
\end{theorem}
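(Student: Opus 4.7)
The theorem combines a converse (the left inclusion), an achievability statement (the right inclusion), and explicit realizations of the two extreme points. The plan is to exploit a single structural identity in both directions, namely the Pythagorean decomposition associated with the conditional expectation $\tilde{X} = \E{X|Z}$. The Markov chain $X \leftrightarrow Z \leftrightarrow \hat{X}$ is what makes this identity applicable, and the squared Wasserstein-2 distance enters naturally because it is the smallest squared-error coupling cost between two prescribed marginals.

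For the converse, I would fix any $(D,P) \in \Omega(p_{Z|X})$ realized by some $\hat{X}$ satisfying the Markov chain and both constraints. Conditioning on $Z$ and using that $\tilde{X}$ is $Z$-measurable, that $\E{X - \tilde{X} | Z} = 0$, and that $X$ and $\hat{X}$ are conditionally independent given $Z$, the cross term vanishes:
\begin{equation*}
\E{\langle X - \tilde{X},\, \tilde{X} - \hat{X}\rangle} = \E{\E{\langle X - \tilde{X},\, \tilde{X} - \hat{X}\rangle | Z}} = \E{\langle \E{X - \tilde{X} | Z},\, \E{\tilde{X} - \hat{X} | Z}\rangle} = 0.
\end{equation*}
This yields $\E{\|X-\hat{X}\|^2} = \E{\|X-\tilde{X}\|^2} + \E{\|\tilde{X}-\hat{X}\|^2}$, and since $(\tilde{X},\hat{X})$ is one particular coupling between $p_{\tilde{X}}$ and $p_{\hat{X}}$, the second term dominates $W_2^2(p_{\tilde{X}}, p_{\hat{X}})$. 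Infimizing over $p_{\hat{X}}$ subject to $d(p_X,p_{\hat{X}}) \leq P$ gives the claimed lower bound on $D$.

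For the achievability inclusion, I would fix $(D,P)$ in the middle set and, for every $\epsilon>0$, select $p_{\hat{X}^\epsilon}$ with $d(p_X,p_{\hat{X}^\epsilon}) \leq P$ whose $W_2^2$ distance to $p_{\tilde{X}}$ is within $\epsilon$ of the infimum, together with an optimal $W_2$ coupling $\pi^\epsilon$ between $p_{\tilde{X}}$ and $p_{\hat{X}^\epsilon}$. Defining $p_{\hat{X}^\epsilon|Z}$ by composing the deterministic map $Z \mapsto \tilde{X}$ with the regular conditional $\pi^\epsilon(\cdot|\tilde{x})$ preserves both the Markov chain $X \leftrightarrow Z \leftrightarrow \hat{X}^\epsilon$ and the marginal $p_{\hat{X}^\epsilon}$ (since the push-forward of $p_Z$ under $\tilde{X}(\cdot)$ is $p_{\tilde{X}}$). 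Applying the Pythagorean decomposition as an equality then yields $\E{\|X-\hat{X}^\epsilon\|^2} = \E{\|X-\tilde{X}\|^2} + W_2^2(p_{\tilde{X}},p_{\hat{X}^\epsilon}) \leq D+\epsilon$, placing $(D,P)$ in $\mbox{cl}(\Omega(p_{Z|X}))$. The extreme point $(D^{(a)},P^{(a)})$ is realized directly by $\hat{X}=\tilde{X}$, while $(D^{(b)},P^{(b)})$ is obtained by coupling $\tilde{X}$ with a copy of $X$ via an optimal $W_2$ plan (since $P=0$ forces $p_{\hat{X}}=p_X$).

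The step I expect to be most delicate is the achievability construction: one must verify that the regular conditional from the optimal transport plan, piped through $\tilde{X}(Z)$, defines a genuine Markov kernel from $Z$ to $\hat{X}$ whose induced joint law respects both the Markov chain and the prescribed marginal $p_{\hat{X}^\epsilon}$. This reduces to a standard disintegration/measurable-selection argument together with existence of optimal $W_2$ couplings, both of which hold on $\mathbb{R}^n$ under finite second moments but warrant care in more general settings; the closure in the statement is precisely what absorbs any residual $\epsilon$-slack arising when the outer infimum over $p_{\hat{X}}$ is not attained.
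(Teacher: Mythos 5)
Your proof is correct and follows essentially the same approach as the paper's: both directions rest on the Pythagorean (orthogonality) decomposition $\E{\|X-\hat{X}\|^2} = \E{\|X-\tilde{X}\|^2} + \E{\|\tilde{X}-\hat{X}\|^2}$ enabled by the Markov chain and the MMSE property of $\tilde{X}=\E{X|Z}$, with the converse lower-bounding $\E{\|\tilde{X}-\hat{X}\|^2}$ by $W_2^2$ and the achievability realizing a near-optimal coupling through $\tilde{X}(Z)$, and the extreme points obtained by taking $p_{\hat{X}}=p_{\tilde{X}}$ and $p_{\hat{X}}=p_X$. The only cosmetic difference is that you invoke an exact optimal $W_2$ coupling in the achievability step while the paper works with an $\epsilon$-near-optimal one, a distinction absorbed by the closure either way.
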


\begin{wrapfigure}{r}{0.45\textwidth}
    \vspace{-0.2cm}
    \includegraphics[width=0.45\textwidth]{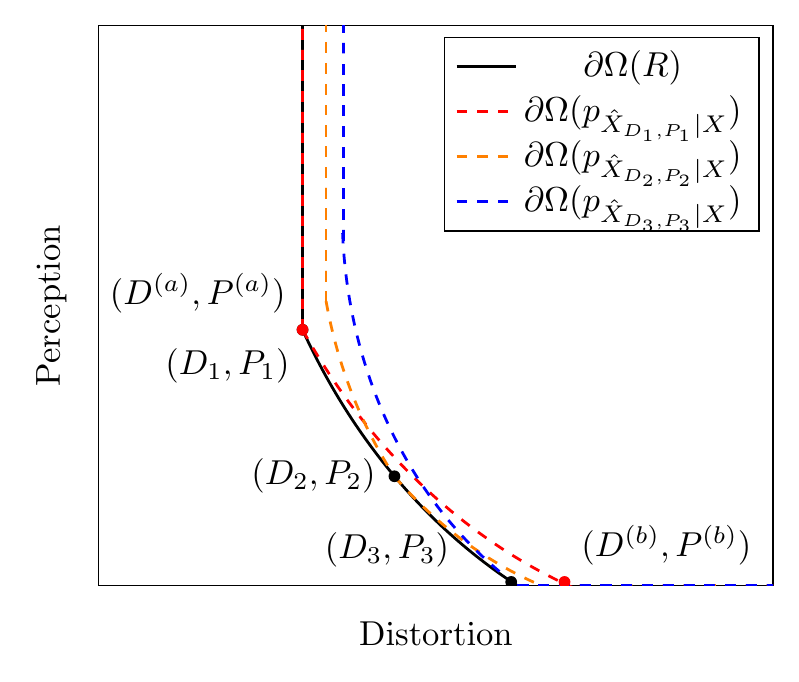}
    \vspace{-0.6cm}
    \caption{
          Approximate universality for a general source. Illustrated are boundaries of achievable distortion-perception regions for three representations:  minimal distortion $(D_1,P_1)$ for $R(D_1,P_1)=R(D_1,\infty)$, midpoint $(D_2,P_2)$, and perfect perceptual quality $(D_3,P_3)$ where $P_3=0$. We have $\Omega(p_{\hat{X}_{D_i,P_i}|X})\approx\Omega(R)$, especially when $R$ is small. The extreme points $(D^{(a)},P^{(a)})$ and $(D^{(b)},P^{(b)})$ for $\hat{X}_{D_1,P_1}$ are shown. $(D^{(a)},P^{(a)})$ coincides with $(D_1,P_1)$.
        }
    \label{fig:general}
    \vspace{-1.5cm}
\end{wrapfigure}
To gain a better understanding, let $Z$ be an optimal reconstruction $\hat{X}_{D,P}$ associated with some point $(D,P)$ on the distortion-perception tradeoff for a given $R$, i.e., $I(X;\hat{X}_{D,P})=R(D,P)=R$ (assuming that $D$ and/or $P$ cannot be decreased without violating $R(D,P)=R$), $\mathbb{E}[\|X-\hat{X}_{D,P}\|^2]=D$,  $d(p_X,p_{\hat{X}_{D,P}})=P$. We assume for simplicity that $\hat{X}_{D,P}$ exists for every $(D,P)$ on the tradeoff. Such $(D,P)$ is on the boundary of $\mbox{cl}(\Omega(p_{\hat{X}_{D,P}|X}))$. Theorem \ref{thm:general_universal_pd} indicates that $\mbox{cl}(\Omega(p_{\hat{X}_{D,P}|X}))$ contains two extreme points: the upper-left $(D^{(a)},P^{(a)})$ and the lower-right $(D^{(b)},P^{(b)})$. Under the assumption that $d(\cdot,\cdot)$ is convex in its second argument, $\mbox{cl}(\Omega(p_{\hat{X}_{D,P}|X}))$ is a convex region containing the aforementioned points. 

Figure \ref{fig:general} illustrates $\Omega(R)$ and $\Omega(p_{\hat{X}_{D,P}|X})$ for several different choices of $(D,P)$. When $R=0$,  $\Omega(p_{\hat{X}_{D,P}|X})=\Omega(R)$ for any such $\hat{X}_{D,P}$. So we have $\Omega(p_{\hat{X}_{D,P}|X})\approx\Omega(R)$
in the low-rate regime where the tension between distortion and perception is most visible. More general quantitative results are provided in the supplementary. Let $\sigma_X^2=\E{\norm{X-\E{X}}^2}$. If $\hat{X}_{D_1,P_1}$ is chosen to be the optimal reconstruction in the conventional rate-distortion sense associated with point $(D_1,P_1)$, then the upper-left extreme points of  $\Omega(p_{\hat{X}_{D_1,P_1}|X})$ and $\Omega(R)$ coincide (i.e., $(D^{(a)},P^{(a)})=(D_1,P_1)$) and the lower-right extreme points of $\Omega(p_{\hat{X}_{D_1,P_1}|X})$ and $\Omega(R)$ (i.e., $(D^{(b)},0)$ and $(D_3,0)$  with $R(D_3,0)=R(D_1,\infty)$) must be close to each other in the sense that
\begin{equation}
 \frac{1}{2}\sigma_X^2\geq D^{(b)}-D_3\stackrel{D_1\approx0\mbox{ or }\sigma_X^2}{\approx}0,\qquad2\geq\frac{D^{(b)}}{D_3}\stackrel{D_1\approx\sigma_X^2}{\approx} 1,  
\end{equation}
 which
suggests that $\Omega(p_{\hat{X}_{D_1,P_1}|X})$ is not much smaller than $\Omega(R)$. 
Moreover, in this case we have
\begin{equation}
D^{(b)} \leq 2\E{\|X-\tilde{X}\|^2} \leq 2D_1,
\end{equation}
which implies that $(2D_1,0)$ is dominated by extreme point $(D^{(b)},0)$  and consequently must be contained in
$\mbox{cl}(\Omega(p_{\hat{X}_{D_1,P_1}|X}))$.  Therefore, the optimal representation in the conventional rate-distortion sense can be leveraged to meet any perception constraint with no more than a two-fold increase in distortion. As a corollary, one recovers Theorem 2 in Blau \& Michaeli ($R(2D,0) \leq R(D,\infty)$). Hence, the numerical connection between $R(2D,0)$ and $R(D,\infty)$ is a manifestation of the existence of approximately $\Omega(I(X;Z))$-universal representations $Z$. These analyses motivate the study of practical constructions for which we seek to achieve multiple $(D,P)$ pairs with a single encoder.

\subsection{Successive Refinement}

Up until now, we have established the notion of distortion-perception universality for a given rate. We can paint a more complete picture by extending this universality along the rate axis as well, known classically as successive refinement \cite{equitz1991successive} when restricted to the rate-distortion function. Informally, given two sets of $(D,P)$ pairs $\Theta_1$ and $\Theta_2$, we say that rate pair $(R_1,R_2)$ is (operationally) \textit{rate-distortion-perception refinable} if there exists a base encoder optimal for $R(\Theta_1)$ which, when combined with a second refining encoder, is also optimal for $R(\Theta_2)$. In other words, bits are transmitted in two stages and each stage achieves optimal rate-distortion-perception performance. This nice property is not true of general distributions but we show in supplementary section \ref{subsec:sr} that it holds in the asypmtotic Gaussian case, thus generalizing Theorem \ref{thm:pd_gaussian_universality}. Nonetheless, building on \cite{lastras2001all} we prove an approximate refinability property of general distributions and in section \ref{subsec:exp_sr} provide experimental results demonstrating approximate refinability on image compression using deep learning. 

\section{Experimental Results}

\begin{figure*}[t]
    \vspace*{-4.5mm}
    \centering
    \includegraphics[width=1.0\textwidth]{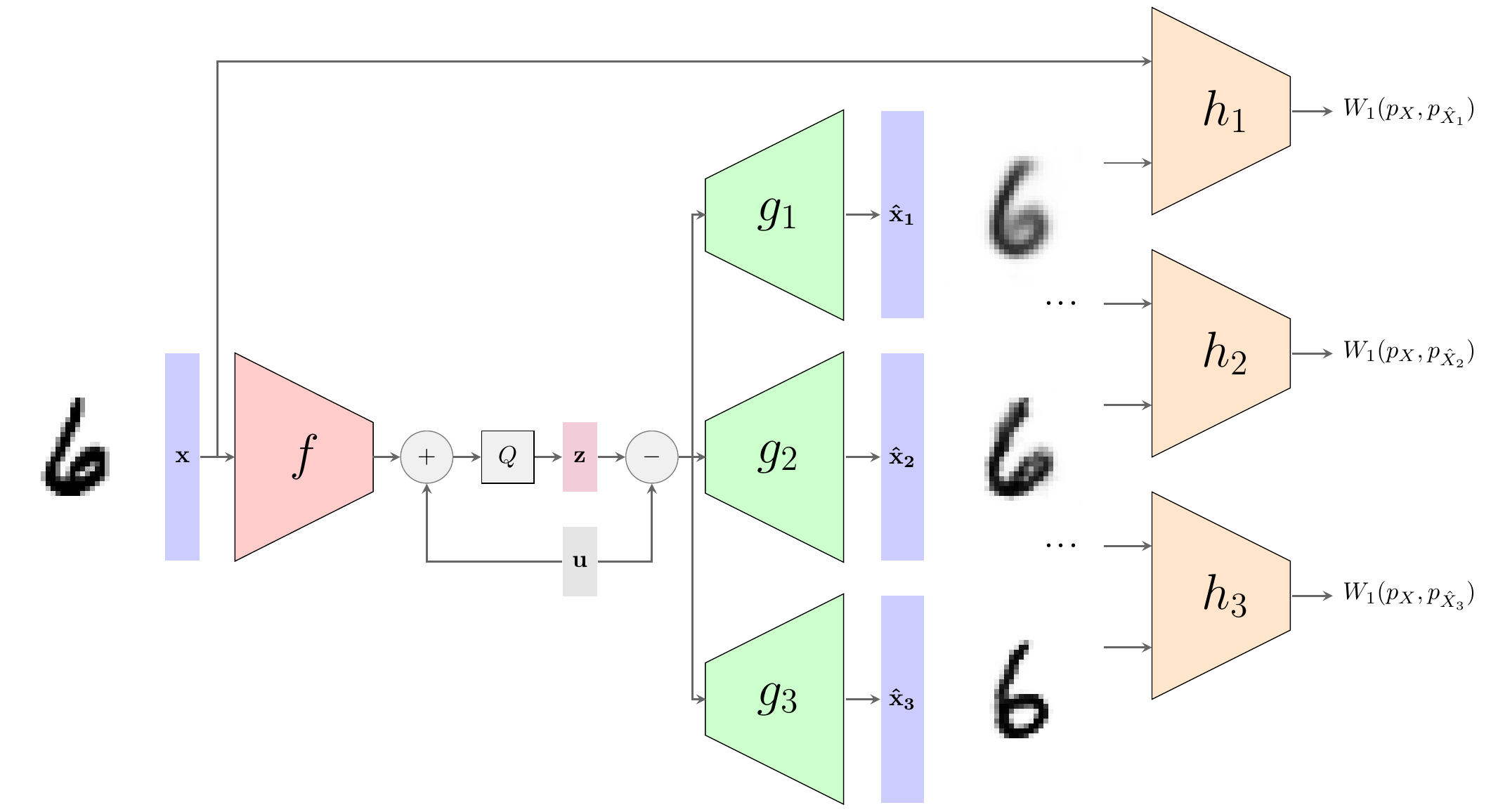}
    \vspace*{-0.2mm}
    \caption{An illustration of the experimental setup for the universal model. A single encoder $f$ is trained for an initial perception-distortion tradeoff and has its weights frozen. Subsequently many other decoders $\{g_i\}$ are optimized for different tradeoff points using the representations $z$ produced by $f$. The sender and receiver have access to a shared source of randomness $u$ for universal quantization \cite{ziv1985universal,theis2021advantages}. $Q$ denotes the quantizer. Separate critic networks $\{h_i\}$ are trained along with each decoder to promote perceptual quality. In this figure, the top decoder places most weight on distortion loss whereas the bottom decoder places most weight on perceptual loss. This has the effect of reducing the blurriness, but comes at the cost of a less faithful reconstruction of the original (in extreme cases even changing the identity of the digit). The perception losses $W_1(p_X,p_{\hat{X}_i})$ are estimated using the critics $\{h_i\}$ by replacing the expectations in Equation (\ref{eqn:W_1_dual}) with samples from the test set.}
    \label{fig:architecture}
\end{figure*}

\begin{figure*}[!ht] 
    \vspace*{-4mm}
    \centering
    \subfigure[]{%
        \raisebox{-0.3cm}{\includegraphics[width=0.5\textwidth]{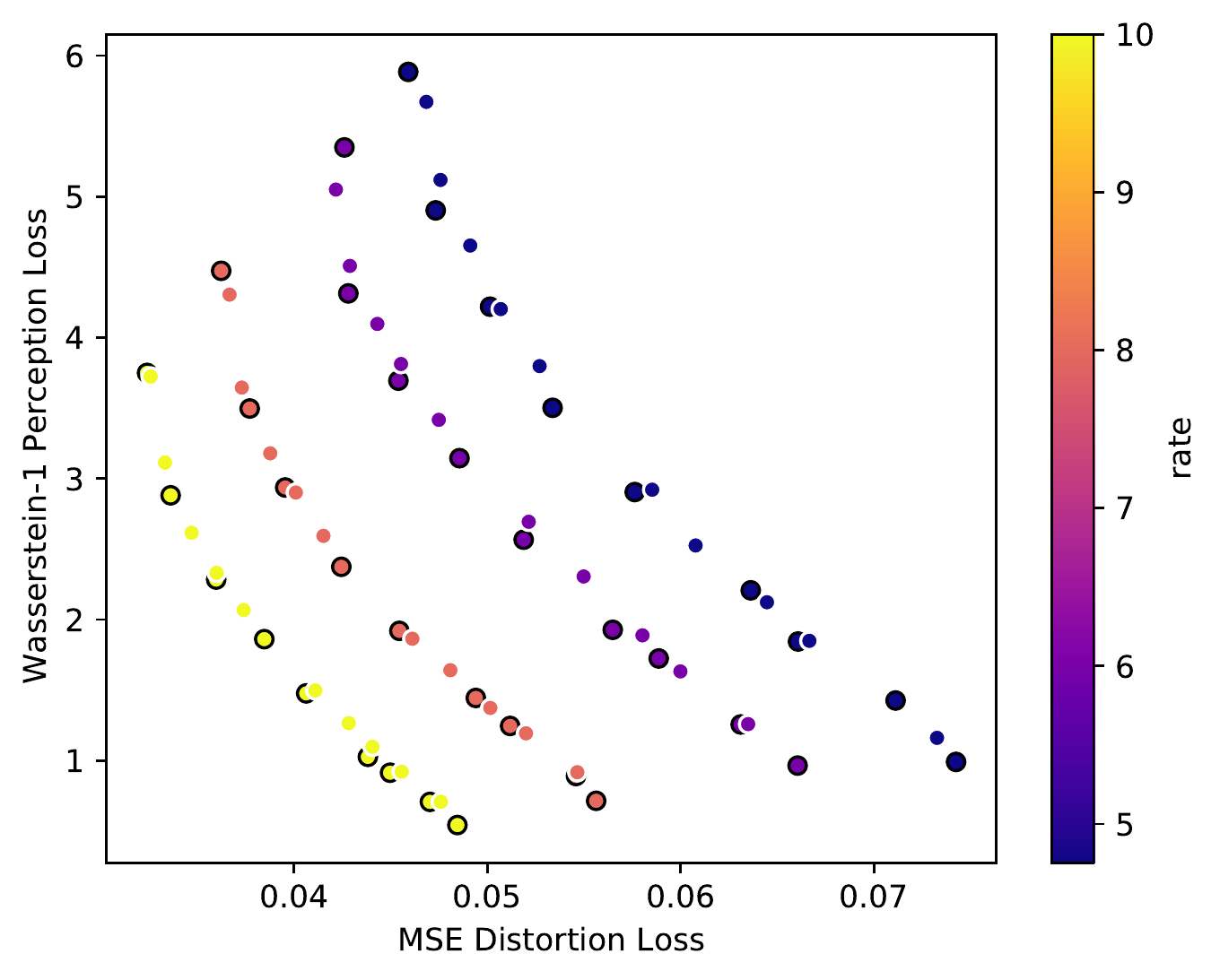}}
        \label{fig:rdp_plot_mnist}
    } 
    \subfigure[]{%
        \includegraphics[width=0.4\textwidth]{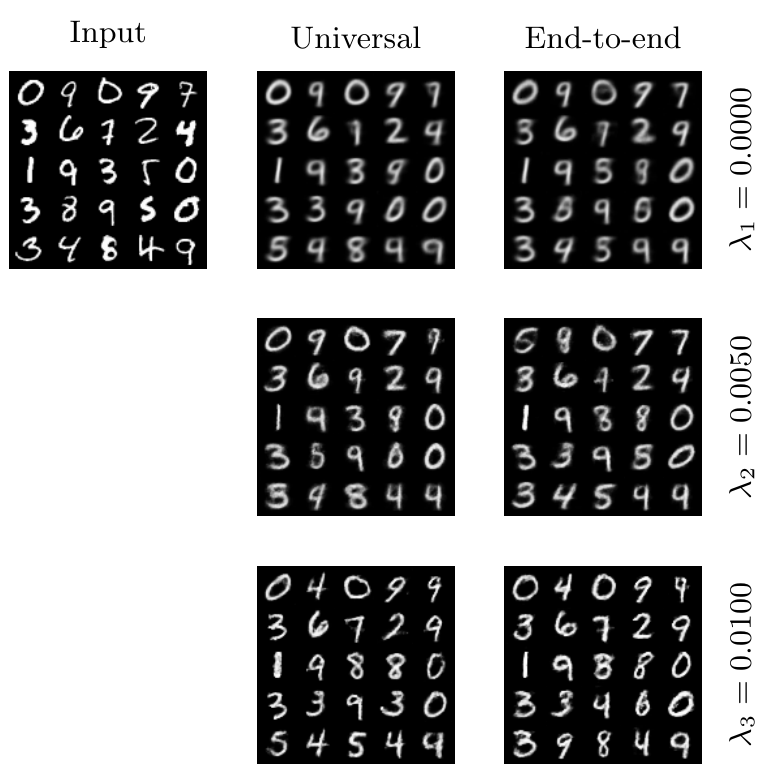}
        \label{fig:decoder_outputs_mnist}
    } \\[-3ex]
    \subfigure[]{%
        \raisebox{-0.2cm}{\includegraphics[width=0.5\textwidth]{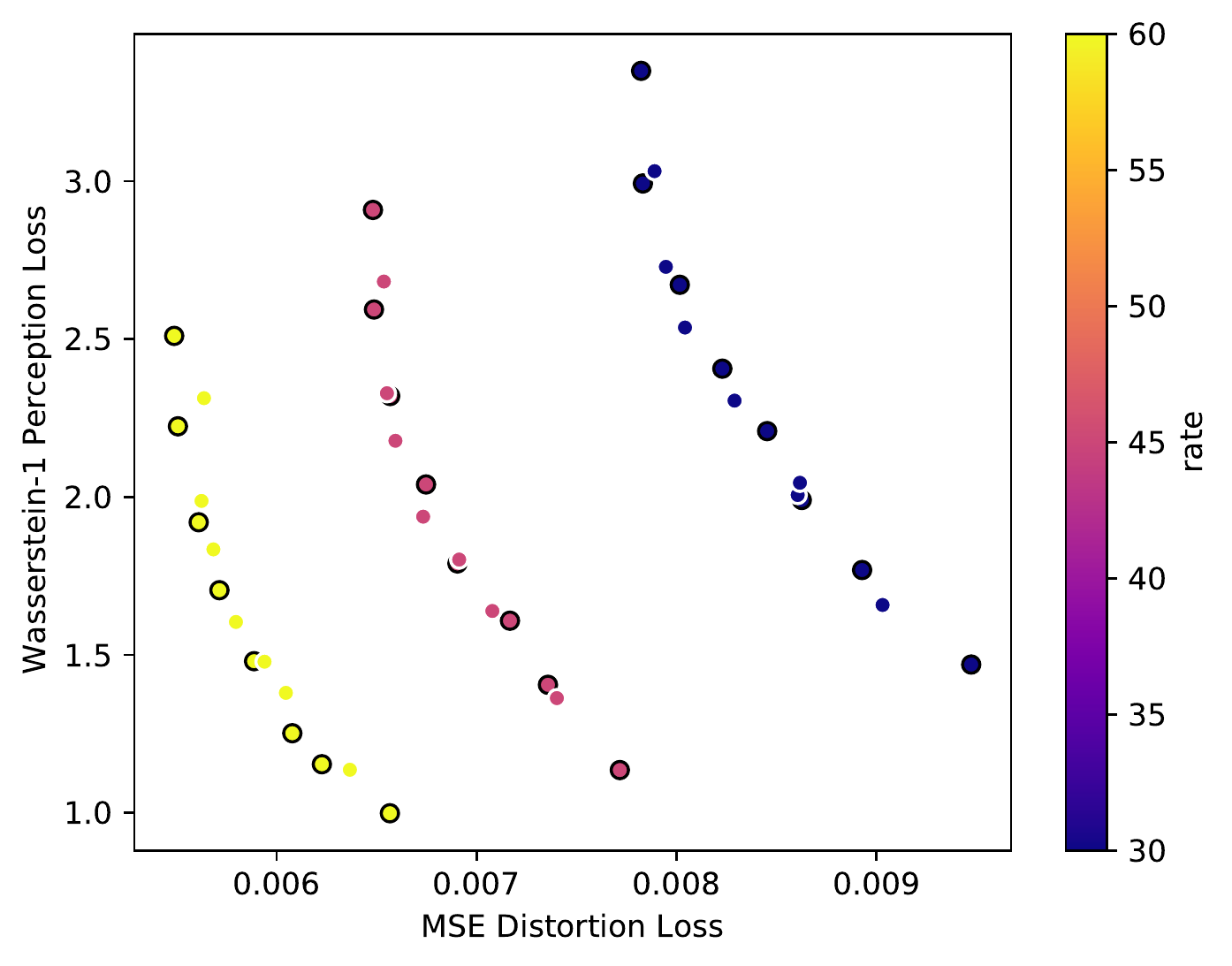}
        \label{fig:rdp_plot_svhn}}
    } 
    \subfigure[]{%
        \includegraphics[width=0.4\textwidth]{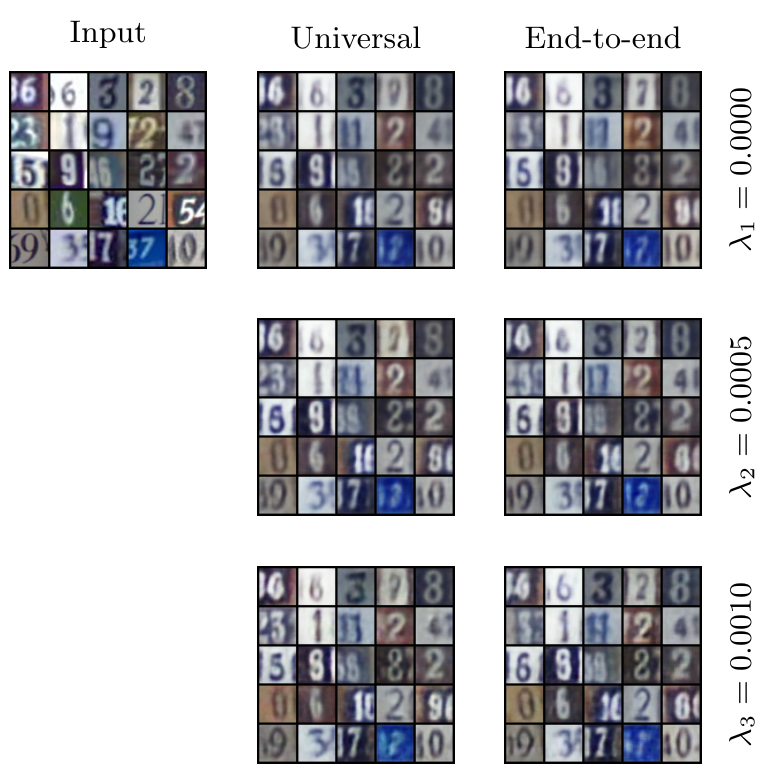}
        \label{fig:decoder_outputs_svhn}
    }
    \vspace*{-2mm}
    \caption{\subref{fig:rdp_plot_mnist} \subref{fig:rdp_plot_svhn} Rate-distortion-perception tradeoffs along various rates. Points with black outline are losses reported for the end-to-end encoder-decoder pairs trained jointly for a particular perception-distortion objective. Other points are the losses for universal models, in which decoders are trained over a frozen encoder optimized for small $P$ (MNIST: $\lambda=0.015$, SVHN: $\lambda=0.002$). Universal model performance is very close to performance of end-to-end models across all tradeoffs $\{\lambda_i$\}. \subref{fig:decoder_outputs_mnist} \subref{fig:decoder_outputs_svhn} Outputs of selected models (MNIST: $R=6$, SVHN: $R=60$). As the emphasis on perception loss $\lambda_i$ increases, the outputs become sharper. The visual quality of both the end-to-end and universal models are on average comparable for each $\lambda_i$. More experiment details are given in the supplementary.}
    \label{fig:changing_decoder}
\end{figure*}

The rate-distortion-perception tradeoff was observed as a result of applying GAN regularization within deep-learning based image compression \cite{tschannen2018deep, blau2019rethinking}. Therein, an entire end-to-end model is trained for each desired setting over rate, distortion, and perception. In practice it is undesirable to develop an entire system from scratch for each objective and we would like to reuse trained networks with frozen weights if possible. It is of interest to assess the distortion and perception penalties incurred by such model reusage, most naturally in the scenario of fixing a pre-trained encoder. 

Concretely, we refer to models where the encoder and decoder are trained jointly for an objective as \textit{end-to-end} models, and models for which some encoder is fixed in advance as (approximately) \textit{universal} models. The encoders used within the universal models are borrowed from the end-to-end models, and the choice of which to use will be discussed later in this section. Within the same dataset, universal models and end-to-end models using the same hyperparameter settings differ only in the trainability of the encoder. 

\subsection{Setup and Training}

The architecture we use is a stochastic autoencoder with GAN regualarization, wherein a single model consists of an encoder $f$, a decoder $g$, and a critic $h$. Details about the networks can be found in the supplementary; here, we summarize first the elements relevant to facilitating compression then the training procedure. Let $x$ be an input image. The final layer of the encoder consists of a tanh activation to produce a symbol $f(x) \in [-1,1]^d$, with the intent to divide this into $L$-level intervals of uniform length $2/(L-1)$ across $d$ dimensions for some $L$. This gives an upper bound of $d\log(L)$ for the model rate, and it was found to be only slightly suboptimal by Agustsson et al. \cite{agustsson2019generative}; we found the estimate to be off by at most 6\% on MNIST. To achieve high perceptual quality through generative modelling, stochasticity is necessary\footnote{This prevents us from passing noiseless quantizated representations to the decoder.} \cite{tschannen2018deep}. In accordance with the shared randomness assumption within the ouRDPF, our main experimental results use the universal/dithered quantization\footnote{The use of the word "universal" here is unrelated to our notion of "universality".}  \cite{schuchman1964dither, ziv1985universal, gray1993dithered, theis2021advantages} scheme where the sender and receiver both have access to a sample $u \sim U[-1/(L-1),+1/(L-1)]^d$. The sender computes
\begin{equation}\label{eqn:enc_q}
    z = \mbox{Quantize}(f(x)+u)
\end{equation}
and gives $z$ to the receiver. The receiver then reconstructs the image by feeding $z-u$ to the decoder. The soft gradient estimator of \cite{mentzer2018conditional} is used to backpropogate through the quantizer. Compared to alternate schemes where noise is added only at the decoder, this scheme has the advantage of reducing the quantization error by centering it around $f(x)$ and can be emulated on the agreement of a random seed. Since this is not always possible in practice, the results for a more restrictive quantization scheme where the sender and receiver do not have access to common randomness are included in Figure \ref{fig:changing_decoder_nq} in the supplementary.

The rest of the design follows closely the design of Blau \& Michaeli \cite{blau2019rethinking}. We first produce the end-to-end models, in which $f,g$ and $h$ are all trainable. We use MSE loss for the distortion metric and estimate the Wasserstein-1 perception metric. The loss function is given by
\begin{equation}\label{eqn:experimental_loss}
    \mathcal{L} = \E{\norm{X - \hat{X}}^2} + \lambda W_1(p_X, p_{\hat{X}}),
\end{equation}
where $p_{\hat{X}}$ is the reconstruction distribution induced by passing $X$ through $f$, transmitting the representations via (\ref{eqn:enc_q}) then subtracting the noise and decoding through $g$. The particular tradeoff point achieved by the model is controlled by the weight $\lambda$. Kanotorovich-Rubinstein duality allows us to write the Wasserstein-1 distance as
\begin{equation}\label{eqn:W_1_dual}
    W_1(p_X, p_{\hat{X}}) = \max_{h \in \mathcal{F}} \E{h(X)} - \E{h(\hat{X})},
\end{equation}
which expresses the objective as a min-max problem and allows us to treat it using GANs. Here, $\mathcal{F}$ is the set of all bounded 1-Lipschitz functions. In practice, this class is limited by the discriminator architecture and the Lipschitz condition is approximated with a gradient penalty \cite{gulrajani2017improved} term. Optimization alternates between minimizing over $f,g$ with $h$ fixed and maximizing over $h$ with $f,g$ fixed. In essence, $g$ is trained to produce reconstructions that are simultaneously low distortion and high perception, so it acts as both a decoder and a generator. The reported perception loss is estimated using Equation (\ref{eqn:W_1_dual}) through test set samples. Figure \ref{fig:architecture} provides an overview of the entire scheme.

After the end-to-end models are trained, their encoders can be lent to construct universal models. The parameters of $f$ are frozen we introduce a new decoder $g_1$ and critic $h_1$ trained to minimize
\begin{equation*}
    \mathcal{L}_1 = \E{\norm{X - \hat{X}_1}^2} + \lambda_1 W_1(p_X, p_{\hat{X}_1}),
\end{equation*}
where $\lambda_1$ is another tradeoff parameter and $p_{\hat{X}_1}$ is the new reconstruction distribution. The weights of $g_1$ are initialized from random while the weights of $h_1$ are initialized from $h$. This was done for stability and faster convergence but in practice, we found that initializing from random performed just as well given sufficient iterations. The rest of the training procedure follows that of the first stage. This second stage is repeated over many different parameters $\{\lambda_i\}$ to generate a tradeoff curve. Further model and experimental details can be found in the supplementary material.

\subsection{Results}

Figure \ref{fig:changing_decoder} shows rate-distortion-perception curves at multiple rates on MNIST and SVHN, obtained by varying $\lambda$ from 0 to a selected upper bound for which training with the given hyperparameters remained stable. Note that the rate for each individual curve is fixed through using the same quantizer across all models. As the rate is increased by introducing better quantizers, optimizing for distortion loss has the side effect of reducing perception loss. The rates are thus chosen to be low as the tension between distortion and perception is most visible then. The points outlined in black are losses for end-to-end models and the other points correspond to the universal models sharing an encoder trained from the end-to-end models. As can be seen, the universal models are able to achieve a tradeoff which is very close to the end-to-end models (with outputs that are visually comparable) despite operating with a fixed encoder. 

For any fixed rate, decreasing the perception loss $P$ induces outputs which are less blurry, at the cost of a reconstruction which is less faithful to the original input. This is especially evident at very low rates in which the compression system appears to act as a generative model. However, our experiments indicate that an encoder trained for small $P$ can also be used to produce a low-distortion reconstruction by training a new decoder. Conversely, training a decoder to produce reconstructions with high perceptual quality on top of an encoder trained only for distortion loss is also possible as the decoder is sufficiently expressive to act purely as a generative model.

\section{Discussion}

\textbf{Limitations}. One limitation of these experiments is that we can slightly reduce the distortion loss by using deterministic nearest neighbour quantization rather than universal quantization, but there would no longer be stochasticity to train the generative model. A comparison of quantization schemes for the case of $\lambda=0$ can be found in Table \ref{tbl:compare_quantizers} of the supplementary. It may be beneficial to employ more sophisticated quantization schemes and explore losses beyond MSE as well. 

\textbf{Potential Negative Societal Impacts}. 
The goal of our work is to advance perceptually-driven lossy compression, which conflicts with optimizing for distortion. We presume that this will be harmless in most multimedia applications but where reconstructions are used for classification or anomaly detection this may cause problems. For example, a low-rate face reconstruction deblurred by a GAN may lead to false identity recognition. 

\section{Conclusion}
The use of deep generative models in data compression has highlighted the tradeoff between optimizing for low distortion and high perceptual quality. Previous works have designed end-to-end systems in order to achieve points across this tradeoff. Our results suggest that this may not be necessary, in that fixing a good representation map and varying only the decoder is sufficient for image compression in practice. We have also established a theoretical framework to study this scheme and characterized its limits, giving bounds for the case of specific distributions and loss functions. Future work includes evaluating the scheme on more diverse architectures, as well as employing the scheme to high-resolution images and videos.

\newpage
\medskip

\nocite{liu2019classification}
\bibliographystyle{plainnat}
\bibliography{main}


\newpage
\appendix

\section{Theoretical Results}
\subsection{Gaussian Case}

Recall that for $X \sim \N(\mu_X,\sigma_X ^2)$,
\begin{equation}\label{eqn:rd_gaussian}
    R(D)={\begin{cases}{\frac {1}{2}}\log (\sigma _{X}^{2}/D),& 0\leq D\leq \sigma _{X}^{2},\\0,& D>\sigma _{X}^{2},\end{cases}}
\end{equation}
and in the first case the function is attained by some $p_{\hat{X}|X}$ with marginal $\hat{X} \sim \mathcal{N}(\mu_X, \sigma_X^2-D)$\cite{cover1999elements}.

\begin{reptheorem}{thm:gaussian_rdp}
For $X \sim \N(\mu_X,\sigma_X ^2)$, the rate-distortion-perception function under squared error distortion and squared $W_2$ distance is achieved by some $\hat{X}$ jointly Gaussian with $X$ and is given by
\begin{align*}
R(D, P)
&=\begin{cases}
    \frac{1}{2} \log \frac{\sigma_{X}^{2} (\sigma_{X}-\sqrt{P})^{2}}{\sigma_{X}^{2} (\sigma_{X}-\sqrt{P})^{2}-(\frac{\sigma_{X}^{2}+(\sigma_{X}-\sqrt{P})^{2}-D}{2})^{2}} \\
 &\hspace{-1.0in}\mbox{ if } \sqrt{P}<\sigma_{X}-\sqrt{|\sigma^2_X-D|}, \\
    \max\{\frac{1}{2}\log\frac{\sigma^2_X}{D},0\} &\hspace{-1.0in}\mbox{ if } \sqrt{P}\geq\sigma_{X}-\sqrt{|\sigma^2_X-D|}.
\end{cases}
\end{align*}
\end{reptheorem}

We will first need a Lemma from estimation theory. Let $\hat{X}$ be a random variable with $\E{\hat{X}} = \mu_{\hat{X}}$, $\Var{\hat{X}} = \sigma_{\hat{X}}^2$ and $\Cov{X,\hat{X}}=\theta$.
Let $\hat{X}_G$ be a random variable jointly Gaussian with $X$ with the same first and second order statistics as $\hat{X}$.
\begin{lemma}
\label{lem:guassian_var}
Given $\mu_{\hat{X}}$, $\sigma_{\hat{X}}^2$, and $\theta$, we have that
\begin{equation*}
    \E{(X-\E{X|\hat{X}_G})^2} \geq \E{(X-\E{X|\hat{X}})^2}.
\end{equation*}
\end{lemma}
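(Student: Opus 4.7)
The plan is to reduce the claim to two classical facts from estimation theory: for jointly Gaussian random variables the conditional mean is a linear function of the observation, and the conditional expectation is the minimum mean-square-error estimator among all measurable functions of the conditioning variable. Combining these two facts, the Gaussian side will be shown to equal the MSE of a specific affine estimator, which in turn upper-bounds the conditional-expectation MSE on the non-Gaussian side.

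First I would evaluate the left-hand side in closed form. Since $(X,\hat{X}_G)$ is jointly Gaussian with means $\mu_X,\mu_{\hat{X}}$, variances $\sigma_X^2,\sigma_{\hat{X}}^2$, and covariance $\theta$, the conditional mean is the affine function $\E{X\mid\hat{X}_G} = \mu_X + \frac{\theta}{\sigma_{\hat{X}}^2}(\hat{X}_G-\mu_{\hat{X}})$, and a standard computation yields the Schur-complement MSE $\sigma_X^2 - \theta^2/\sigma_{\hat{X}}^2$.

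Next I would introduce the candidate comparison estimator $\varphi(\hat{X}) := \mu_X + \frac{\theta}{\sigma_{\hat{X}}^2}(\hat{X}-\mu_{\hat{X}})$, built by applying the same affine recipe to $\hat{X}$ rather than $\hat{X}_G$. Expanding $\E{(X-\varphi(\hat{X}))^2}$ produces a quantity that depends on the joint distribution of $(X,\hat{X})$ only through $\mu_X,\mu_{\hat{X}},\sigma_X^2,\sigma_{\hat{X}}^2,\theta$; since these match the corresponding moments of $(X,\hat{X}_G)$ by construction, $\varphi(\hat{X})$ has exactly the same MSE as $\E{X\mid\hat{X}_G}$, namely $\sigma_X^2 - \theta^2/\sigma_{\hat{X}}^2$.

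Finally I would invoke the defining property of conditional expectation, which gives $\E{(X-\E{X\mid\hat{X}})^2} \leq \E{(X-\varphi(\hat{X}))^2}$ for this specific measurable $\varphi$. Chaining the two equalities from the previous steps with this inequality yields the lemma. There is no real obstacle here; the only point worth being careful about is that $\varphi$ is constructed from second-order statistics alone, so its MSE is invariant under replacing $\hat{X}$ with $\hat{X}_G$ — this is precisely where joint Gaussianity is leveraged to collapse the MMSE onto a linear estimator that is available to the non-Gaussian pair as well.
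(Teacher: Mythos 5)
Your argument is correct and is exactly the standard estimation-theoretic proof that the paper's citation (Willsky \& Wornell) points to: the MMSE for the jointly Gaussian pair equals the LMMSE, the LMMSE depends only on second-order statistics and hence is identical for $\hat{X}$ and $\hat{X}_G$, and the true conditional expectation $\E{X\mid\hat{X}}$ can only do at least as well as any affine estimator of $\hat{X}$. The paper itself defers to a reference rather than reproducing the argument, but what you wrote is the content of that reference.
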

The proof of this result can be found in a standard estimation theory reference, e.g. Chapter 3, page 134 of the 6.432 notes by Willsky \& Wornell \cite{wornell2005inference}.

\begin{proof}[Proof of Theorem~\ref{thm:gaussian_rdp}]
We shall show that there is no loss of optimality in assuming that $\hat{X}$ is jointly Gaussian with $X$. It is clear that $\E{(X-\hat{X})^2} = \E{(X-\hat{X}_G)^2}$, as the first and second order statistics are all given. Note that by expanding out $W_2(p_X, p_{\hat{X}})$, one can see that the optimal coupling is identified only through the cross-term between $X$ and $\hat{X}$; since every coupling of $p_X$ and $p_{\hat{X}}$ induces a Gaussian coupling of $p_X$ and $p_{\hat{X}_G}$ with the same covariance, it follows that
\begin{align}
    W_2^2(p_X, p_{\hat{X}}) \geq W_2^2(p_X, p_{\hat{X}_G}).
\end{align}
Finally, we have
\begin{equation}
\begin{aligned}
    I(X ; \hat{X}) &=h(X)-h(X | \hat{X}) \\
    & \geq h(X)-h(X-\E{X | \hat{X}}) \\
    & \labrel\geq{relctr:gaussian_entropy} h(X)-\frac{1}{2} \log (2 \pi e \E{(X-\E{X|\hat{X}})^2}) \\
    & \labrel\geq{relctr:guassian_var} h(X)-\frac{1}{2} \log (2 \pi e \E{(X-\E{X|\hat{X}_G})^2}) \\
    & = h(X)-h(X - \E{X|\hat{X}_G})) \\
    & \labrel={relctr:est_error} h(X)-h(X | \hat{X}_{G}) \\
    & = I(X ; \hat{X}_{G}),
\end{aligned}
\end{equation}
where~(\ref{relctr:gaussian_entropy}) is because the Gaussian distribution maximizes differential entropy for a given variance, (\ref{relctr:guassian_var}) follows from Lemma~\ref{lem:guassian_var} and (\ref{relctr:est_error}) is because the estimation error is independent of $\hat{X}_{G}$. Thus, it suffices to solve the problem
\begin{equation}
\begin{aligned}
R(D,P) = &\underset{p_{\hat{X}_G|{X}}}{\text{min}} \,\, I(X ; \hat{X}_{G}) \\
& \text{s.t.} \quad \E{(X-\hat{X}_G)^2} \leq D, \quad W_2^2(p_X, p_{\hat{X}_G}) \leq P.
\end{aligned}
\end{equation}
Note that we can write
\begin{equation}\label{eqn:gaussian_mse}
    \E{(X-\hat{X}_G)^2} = (\mu_X - \mu_{\hat{X}})^2 + \sigma^2_X+\sigma^2_{\hat{X}}-2\theta,
\end{equation}
and we have from standard results (e.g. minimizing \eqref{eqn:gaussian_mse}, or more generally \cite{dowson1982frechet}) that
\begin{equation}
W_2^2(p_X, p_{\hat{X}_G}) = (\mu_X - \mu_{\hat{X}})^2 + (\sigma_X - \sigma_{\hat{X}})^2.
\end{equation}
Finally, recall that the mutual information between the two Gaussian distributions is given by
\begin{align}
    I(X ; \hat{X}_{G}) &= \frac{1}{2} \log \frac{\sigma_X^2\sigma_{\hat{X}}^2}{\sigma_X^2\sigma_{\hat{X}}^2 - \theta^2}, \label{eqn:gaussian_mi}
\end{align}
so there is no loss of optimality in assuming $\mu_{\hat{X}} = \mu_X$ and $\theta \geq 0$.
Now we consider when each constraint is active. Suppose that $P$ was active and $D$ was inactive. Then
\begin{equation}
\begin{aligned}
    D &> \sigma_X^2 + \sigma_{\hat{X}}^2 - 2\theta \\
      &= \sigma_X^2 + (\sigma_X - \sqrt{P})^2 - 2\theta.
\end{aligned}
\end{equation}
Hence, we can decrease $\theta$ to reduce the mutual information until either $D$ is active or the rate is zero.

If $D$ is active, then the perception constraint is satisfied automatically when $(\sigma_{\hat{X}} - \sigma_X)^2 \leq P$, or $\sqrt{P} \geq \sigma_X - \sqrt{\abs{\sigma_X^2-D}}$ (here we have used the solution to $R(D)$ from \eqref{eqn:rd_gaussian}). When $\sqrt{P} < \sigma_X - \sqrt{\abs{\sigma_X^2-D}}$, both $P$ and $D$ are active, and consequently we have $\sigma_{\hat{X}}^2 = (\sigma_X-\sqrt{P})^2$ and $\theta=\frac{\sigma^2_X+\sigma^2_{\hat{X}}-D}{2}$. Noting that the other case is simply the solution to $R(D)$, this concludes the proof.
\end{proof}
Alternatively, we may express the minimum achievable distortion in terms of $P$ and $R$ as
\begin{align*}
D(P,R)=\begin{cases}\sigma^2_X+(\sigma_X-\sqrt{P})^2-2\sigma_X(\sigma_X-\sqrt{P})\sqrt{1-2^{-2R}}, & P<(\sigma_X-\sqrt{\sigma^2_X-\sigma^2_X2^{-2R}})^2,\\
\sigma^2_X2^{-2R}, & P\geq(\sigma_X-\sqrt{\sigma^2_X-\sigma^2_X2^{-2R}})^2.
\end{cases}
\end{align*}
For any fixed $R$, as $P$ increases from $0$ to $(\sigma_X-\sqrt{\sigma^2_X-\sigma^2_X2^{-2R}})^2$, $D(P,R)$ decreases from $2\sigma^2_X-2\sigma^2_X\sqrt{1-2^{-2R}}$ to $\sigma^2_X2^{-2R}$; further increasing $P$ does not affect $D(P,R)$ anymore.

Moreover, the proof of Theorem 1 can be modified to handle to the case 
$d(p_X, p_{\hat{X}})=\mbox{KL}(p_X, p_{\hat{X}})$, where $\mbox{KL}(p_X, p_{\hat{X}})=\int p_{\hat{X}}(x)\log\frac{p_{\hat{X}}(x)}{p_X(x)}\mathrm{d}x$ is the KL-divergence between $p_X$ and $p_{\hat{X}}$. Given $(\mu_{\hat{X}},\sigma^2_{\hat{X}})$, $\mbox{KL}(p_X, p_{\hat{X}})$ is minimized when $p_{\hat{X}}$ is a Gaussian distribution. We have that
\begin{align*}
    \mbox{KL}(p_X, p_{\hat{X}_G}) &= \frac{\sigma_{\hat{X}}^{2}-\sigma_{X}^{2}}{2 \sigma_{X}^{2}}+\frac{1}{2} \log \frac{\sigma_{X}^{2}}{\sigma_{\hat{X}}^{2}}, \\
    W_2^2(p_X, p_{\hat{X}_G}) &= (\sigma_X - \sigma_{\hat{X}})^2.
\end{align*}
When $\sigma_{\hat{X}} \leq \sigma_X$, both functions are monotonically decreasing in $\sigma_{\hat{X}}$. This implies that the rate-distortion-perception functions under $\mbox{KL}(p_X, \cdot)$ and $W_2^2(p_X, \cdot)$ also share a one-to-one correspondence in $P$. 

\subsection{Achievability of Universal Representations}

Before moving on to the achievability of universal representations, we first discuss the functional representations lemmas which play an integral part in the proof. The functional representation lemma states that for jointly distributed random variables $X$ and $Y$, there exists a random variable $U$ independent of $X$, and function $\phi$ such that $Y = \phi(X,U)$. Here, $U$ is not necessarily unique. The strong functional representation lemma \cite{li2018strong} states further that there exists a $U$ which is informative of $Y$ in the sense that
\begin{equation*}
    H(Y|U) \leq I(X;Y) + \log(I(X;Y) + 1) + 4.
\end{equation*}
Note that $X$ and $Y$ may be continuous random variables, and the entropy is still well-defined as long as $Y|U=u$ is discrete for each $u$. The construction given in \cite{li2018strong} satisfies this property.

\begin{reptheorem}{thm:universal_achievability}\leavevmode
\begin{enumerate}[(a)]
    \item $R^*(\Theta) \leq R(\Theta) + \log(R(\Theta) + 1) + 5$.
    \item $R^*(\Theta) \geq R(\Theta)$.
\end{enumerate}
\end{reptheorem}
\begin{proof}[Proof of Theorem~\ref{thm:universal_achievability}]\leavevmode
(a) Let $Z$ be jointly distributed with $X$ 
such that for any $(D,P)\in\Theta$, there exists  $p_{\hat{X}_{D,P}|Z}$ satisfying $\mathbb{E}[\Delta(X,\hat{X}_{D,P})]\leq D$ and $d(p_X,p_{\hat{X}_{D,P}})\leq P$.
It follows by the strong functional representation lemma that there exist a random variable $V$, independent of $X$, and a deterministic function $\phi$ such that $Z=\phi(X,V)$  and $H(\phi(X,V)|V)\leq I(X;Z)+\log(I(X;Z)+1)+4$. So with $V$ available at both the encoder and the decoder, we can use a class of prefix-free binary codes indexed by $V$ with the expected codeword length  no greater than $I(X;Z)+\log(I(X;Z)+1)+5$ to lossless represent $Z$. Now it suffices for the decoder to simulate $p_{\hat{X}_{D,P}|Z}$. Specifically, it follows by the functional representation lemma that there exists a random variable $V_{D,P}$, independent of $(X,V)$, and a deterministic function $\psi_{D,P}$ such that $\hat{X}_{D,P}=\psi_{D,P}(Z,V_{D,P})$. Note that $V$ and $V_{D,P}$ can be extracted from random seed $U$. 

(b) For any random variable $U$, encoding function $f_U:\mathcal{X}\rightarrow\mathcal{C}_U$, and decoding functions $g_{U,D,P}:\mathcal{C}_U\rightarrow\hat{\mathcal{X}}$, $(D,P)\in\Theta$ satisfying $\mathbb{E}[\Delta(X,\hat{X}_{D,P})]\leq D$ and $d(p_X,p_{\hat{X}_{D,P}})\leq P$, we have
\begin{align*}
\mathbb{E}[\ell(f_U(X))]
&\geq H(f_U(X)|U)\\
&= I(X;f_U(X)|U)\\
&= I(X;f_U(X),U)\\	
&\geq R(\Theta),
\end{align*}	
where the last inequality follows by defining $(f_U(X),U)$ as $Z$, which satisfies the conditions in the definition of $R(\Theta)$.
\end{proof}

\begin{reptheorem}{thm:pd_gaussian_universality}
Let $X \sim \N(\mu_X,\sigma_X ^2)$ be a scalar Gaussian source and assume MSE and $W_2^2(\cdot,\cdot)$ losses. Let $\Theta$ be any non-empty set of $(D,P)$ pairs. Then
\begin{equation}
    A(\Theta)=0.
\end{equation}
Moreover, for any representation $Z$ jointly Gaussian with $X$ such that 
\begin{equation}
    I(X;Z) = \sup_{(D,P) \in \Theta} R(D,P),
\end{equation}
we have
\begin{equation}
    \Theta \subseteq \Omega(p_{Z|X}) = \Omega(I(X;Z)).
\end{equation}
\end{reptheorem}

\begin{proof}[Proof of Theorem~\ref{thm:pd_gaussian_universality}]
Let $R=\sup_{(D,P)\in\Theta}R(D,P)$. It is clear that $\Theta\subseteq\Omega(R)$.
The distortion-perception tradeoff with respect to $R$, i.e., the lower boundary of $\Omega(R)$, is given by
\begin{align*}
D=\sigma^2_X+(\sigma_X-\sqrt{P})^2-2\sigma_X(\sigma_X-\sqrt{P})\sqrt{1-2^{-2R}},\quad P\in[0,(\sigma_X-\sqrt{\sigma^2_X-\sigma^2_X2^{-2R}})^2].
\end{align*}
Every point in $\Omega(R)$ is dominated in a component-wise manner by some $(D,P)$ on this tradeoff. Let $Z$ be jointly Gaussian with $X$ such that $I(X;Z)=R$. Note that $I(X;Z)=R$ implies $\rho^2_{XZ}=1-2^{-2R}$, where $\rho_{XZ}=\frac{\mathbb{E}[(X-\mu_X)(Z-\mu_Z)]}{\sigma_X\sigma_Z}$. For any $(D,P)$ on the tradeoff, define $\hat{X}_{D,P}=\mbox{sign}(\rho_{XZ})\frac{\sigma_X-\sqrt{P}}{\sigma_Z}(Z-\mu_Z)+\mu_X$, where $\mbox{sign}(\rho_{XZ})=1$ if $\rho_{XZ}\geq 0$ and $\mbox{sign}(\rho_{XZ})=-1$ otherwise. 
One may verify by direct substitution that
\begin{align*}
W^2_2(p_X,p_{\hat{X}_{D,P}})&=(\sigma_X-\sigma_{\hat{X}_{D,P}})^2=P,\\
\mathbb{E}[(X-\hat{X}_{D,P})^2]&=\sigma^2_X+\sigma^2_{\hat{X}_{D,P}}-2\sigma_X(\sigma_X-\sqrt{P})|\rho_{XZ}|\\
&=\sigma^2_X+(\sigma_X-\sqrt{P})^2-2\sigma_X(\sigma_X-\sqrt{P})\sqrt{1-2^{-2R}}\\
&=D.
\end{align*} 
This shows that $\Omega(p_{Z|X})=\Omega(R)$, which further implies $A(\Theta)=0$.
\end{proof}

\begin{proposition}[Equivalence of zero rate penalty and full distortion-perception region]\label{prop:equivalence}
Suppose the following regularity conditions hold:
\begin{enumerate}[1)]
    \item $\sup_{(D,P)\in\Omega(R)}R(D,P)=R'$,
    \item the infimum in the definition of $R(\Omega(R'))$ is attainable.
\end{enumerate}
Then the equality $A(\Omega(R'))=0$ holds if and only if there exists some representation $Z$ with $I(X;Z)=R'$ such that $\Omega(p_{Z|X})=\Omega(I(X;Z))$.
\end{proposition}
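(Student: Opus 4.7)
The plan is to reduce both implications to the same observation: membership $p_{Z|X}\in\mathcal{P}_{Z|X}(\Omega(R'))$ is equivalent to the inclusion $\Omega(R')\subseteq\Omega(p_{Z|X})$, and by data processing any reconstruction $\hat{X}$ obtained from $Z$ satisfies $R(D,P)\leq I(X;\hat{X})\leq I(X;Z)$. This gives the reverse inclusion $\Omega(p_{Z|X})\subseteq\Omega(I(X;Z))$ for free, so that "full region" really means $\Omega(p_{Z|X})=\Omega(R')$ whenever $I(X;Z)=R'$. Regularity condition (1) lets me rewrite $A(\Omega(R'))=0$ as the scalar identity $R(\Omega(R'))=R'$, and condition (2) lets me replace an infimum by an attaining $Z^{*}$.

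For the $(\Leftarrow)$ direction, suppose $Z$ satisfies $I(X;Z)=R'$ and $\Omega(p_{Z|X})=\Omega(R')$. Then $p_{Z|X}$ is feasible in the definition of $R(\Omega(R'))$, yielding the upper bound $R(\Omega(R'))\leq R'$. For the matching lower bound I use (1) to obtain a point $(D^{*},P^{*})\in\Omega(R')$ with $R(D^{*},P^{*})=R'$; then for any $Z'\in\mathcal{P}_{Z|X}(\Omega(R'))$ there exists $\hat{X}_{D^{*},P^{*}}$ obtained from $Z'$ meeting the $(D^{*},P^{*})$ constraints, so by data processing and the definition of the iRDPF,
\begin{equation*}
I(X;Z')\;\geq\;I(X;\hat{X}_{D^{*},P^{*}})\;\geq\;R(D^{*},P^{*})\;=\;R'.
\end{equation*}
Taking the infimum over $Z'$ gives $R(\Omega(R'))\geq R'$, hence $A(\Omega(R'))=0$.

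For the $(\Rightarrow)$ direction, assume $A(\Omega(R'))=0$, i.e.\ $R(\Omega(R'))=R'$. By condition (2) there is an attaining $Z^{*}$ with $I(X;Z^{*})=R'$ and $Z^{*}\in\mathcal{P}_{Z|X}(\Omega(R'))$, which directly gives $\Omega(R')\subseteq\Omega(p_{Z^{*}|X})$. For the converse inclusion, I take any $(D,P)\in\Omega(p_{Z^{*}|X})$, pick a decoder $\hat{X}_{D,P}$ from $Z^{*}$ witnessing it, and apply data processing: $R(D,P)\leq I(X;\hat{X}_{D,P})\leq I(X;Z^{*})=R'$, so $(D,P)\in\Omega(R')$. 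Thus $\Omega(p_{Z^{*}|X})=\Omega(R')=\Omega(I(X;Z^{*}))$, as required.

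The main obstacle is conceptual rather than technical: it is ensuring that the two regularity hypotheses are used in the right places and cannot be circumvented. Without (2), the infimum defining $R(\Omega(R'))$ might only be approached by a sequence $Z_{n}$, and one could not pick out a single $Z^{*}$ with $\Omega(p_{Z^{*}|X})=\Omega(R')$. Without (1), the supremum $\sup_{(D,P)\in\Omega(R')}R(D,P)$ might be strictly below $R'$ or unattained, breaking both the rewriting of $A(\Omega(R'))=0$ as $R(\Omega(R'))=R'$ and the choice of the hardest point $(D^{*},P^{*})$ that drives the lower bound in the reverse direction. Once these two hooks are in place, the equivalence follows from nothing more than the Markov/data-processing structure already built into the definition of $\Omega(p_{Z|X})$.
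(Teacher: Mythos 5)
Your argument follows essentially the same route as the paper's: upper bound $R(\Omega(R'))\le R'$ from the hypothesized $Z$, the nonnegativity of $A$ (which you rederive via data processing) for the converse inequality, and condition (2) to extract a minimizer $Z^{*}$ whose region is pinned between $\Omega(R')$ and $\Omega(I(X;Z^{*}))$ by the two easy inclusions. One small slip in the $(\Leftarrow)$ direction: condition (1) asserts only that the \emph{supremum} equals $R'$, not that it is attained, so you cannot ``obtain a point $(D^{*},P^{*})\in\Omega(R')$ with $R(D^{*},P^{*})=R'$''; instead observe that $I(X;Z')\ge R(D,P)$ for every $(D,P)\in\Omega(R')$ and take the supremum before the infimum over $Z'$ (equivalently, just invoke $A(\Theta)\ge 0$ as the paper does), which repairs the step without any further change.
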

\begin{proof}[Proof of Proposition~\ref{prop:equivalence}]
If there exists some representation $Z$ with $I(X;Z)=R'$ such that $\Omega(p_{Z|X})=\Omega(I(X;Z))$, then $R(\Omega(R'))\leq R'$. Now under condition 1), we must have $A(\Omega(R'))\leq 0$, which implies $A(\Omega(R'))=0$ as $A(\Omega(R'))$ must be nonnegative.

Under condition 2), there exists some representation $Z$ with $I(X;Z)=R(\Omega(R'))$ such that $\Omega(p_{Z|X})\supseteq\Omega(R')$. If $A(\Omega(R'))=0$, then $R(\Omega(R'))=\sup_{(D,P)\in\Omega(R)}R(D,P)$, which together with condition 1) yields $R(\Omega(R'))=R$. Note that $I(X;Z)=R'$ implies $\Omega(p_{Z|X})\subseteq\Omega(R')$, and consequently we must have $\Omega(p_{Z|X})=\Omega(R')$.
\end{proof}

\begin{reptheorem}{thm:general_universal_pd}
Assume MSE loss and any perception measure $d(\cdot,\cdot)$. Let $Z$ be any arbitrary representation of $X$. Then
\begin{equation*}
    \normalfont
    \Omega(p_{Z|X}) \subseteq \left\{ (D,P) : D \geq \E{\|X-\tilde{X}\|^2} + \inf_{p_{\hat{X}}: d(p_X, p_{\hat{X}}) \leq P} W^2_2(p_{\tilde{X}}, p_{\hat{X}}) \right\}\subseteq\mbox{cl}(\Omega(p_{Z|X})),
\end{equation*}
where $\tilde{X} = \E{X|Z}$ is the reconstruction minimizing squared error distortion with $X$ under the representation $Z$ and $cl(\cdot)$ denotes set closure. In particular, the two extreme points $(D^{(a)},P^{(a)})=(\mathbb{E}[\|X-\tilde{X}\|^2], d(p_X,p_{\tilde{X}}))$ and $(D^{(b)},P^{(b)})=(\mathbb{E}[\|X-\tilde{X}\|^2]+W^2_2(p_{\tilde{X}},p_{X}),0)$ are contained in $\mbox{cl}(\Omega(p_{Z|X}))$.
\end{reptheorem}

\begin{proof}[Proof of Theorem~\ref{thm:general_universal_pd}]
For any $(D,P)\in\Omega(p_{Z|X})$, there exists some $\hat{X}_{D,P}$ jointly distributed with $(X,Z)$ such that $X\leftrightarrow Z\leftrightarrow\hat{X}_{D,P}$ form a Markov chain, $\mathbb{E}[\Delta(X,\hat{X}_{D,P})]]\leq D$, and $d(p_X,p_{\hat{X}_{D,P}})\leq P$. Note that
\begin{align*}
D&\geq\mathbb{E}[\|X-\hat{X}_{D,P}\|^2]\\
&=\mathbb{E}[\|X-\tilde{X}\|^2]+\mathbb{E}[\|\tilde{X}-\hat{X}_{D,P}\|^2]\\
&\geq \mathbb{E}[\|X-\tilde{X}\|^2]+W^2_2(p_{\tilde{X}},p_{\hat{X}_{D,P}})\\
&\geq \mathbb{E}[\|X-\tilde{X}\|^2]+\inf_{p_{\hat{X}}:d(p_X,p_{\hat{X}})\leq P}W^2_2(p_{\tilde{X}},p_{\hat{X}}).
\end{align*}
Therefore, we have $\Omega(p_{Z|X})\subseteq\{(D,P): D\geq \mathbb{E}[\|X-\tilde{X}\|^2]+\inf_{p_{\hat{X}}:d(p_X,p_{\hat{X}})\leq P}W^2_2(p_{\tilde{X}},p_{\hat{X}})\}$.

On the other hand, given 
\[(D',P')\in\{(D,P): D\geq \mathbb{E}[\|X-\tilde{X}\|^2]+\inf_{p_{\hat{X}}:d(p_X,p_{\hat{X}})\leq P}W^2_2(p_{\tilde{X}},p_{\hat{X}})\},\] 
for any $\epsilon>0$, we can find some $p_{\hat{X}'}$ such that $d(p_{X},p_{\hat{X}'})\leq P'$ and $D'+\epsilon\geq\mathbb{E}[\|X-\tilde{X}\|^2]+W^2_2(p_{\tilde{X}},p_{\hat{X}'})$. Let $\hat{X}'$ be jointly distributed with $(X,Z)$ such that  $X\leftrightarrow Z\leftrightarrow\hat{X}'$ form a Markov chain and $\mathbb{E}[\|\tilde{X}-\hat{X}'\|^2]\leq W^2_2(p_{\tilde{X}},p_{\hat{X}'})+\epsilon$. It is possible to find such $\hat{X}'$ by the Markov condition. Note that
\begin{align*}
\mathbb{E}[\|X-\hat{X}'\|^2]&=\mathbb{E}[\|X-\tilde{X}\|^2]+\mathbb{E}[\|\tilde{X}-\hat{X}'\|^2]\\
&\leq \mathbb{E}[\|X-\tilde{X}\|^2]+W^2_2(p_{\tilde{X}},p_{\hat{X}'})+\epsilon\\
&\leq D'+2\epsilon.
\end{align*}
Therefore, we have 
\[\{(D,P): D\geq \mathbb{E}[\|X-\tilde{X}\|^2]+\inf_{p_{\hat{X}}:d(p_X,p_{\hat{X}})\leq P}W^2_2(p_{\tilde{X}},p_{\hat{X}})\}\subseteq\mbox{cl}(\Omega(p_{Z|X})).\]

Choosing $p_{\hat{X}}=p_{\tilde{X}}$ and $p_{\hat{X}}=p_X$ shows respectively that $(D^{(a)},P^{(a)})$ and $(D^{(b)},P^{(b)})$ are contained in $\{(D,P): D\geq \mathbb{E}[\|X-\tilde{X}\|^2]+\inf_{p_{\hat{X}}:d(p_X,p_{\hat{X}})\leq P}W^2_2(p_{\tilde{X}},p_{\hat{X}})\}$.
\end{proof}

\textbf{Quantitative results for the additive and multiplicative gaps}. Since $P_3=0$ (i.e., $p_{\hat{X}_{D_3,P_3}}=p_X$), it follows that
\begin{align}
D_3=\mathbb{E}[\|X-\hat{X}_{D_3,P_3}\|^2]=2\sigma^2_X-2\mathbb{E}[(X-\mu_X)^T(\hat{X}_{D_3,P_3}-\mu_X)].\label{eq:D3}
\end{align} 
 Note that $I(X;\mathbb{E}[X|\hat{X}_{D_3,P_3}])\leq I(X;\hat{X}_{D_3,P_3})=R(D_1,\infty)$, which implies $\mathbb{E}[\|X-\mathbb{E}[X|\hat{X}_{D_3,P_3}]\|^2]\geq D_1$. Let $c=\frac{2\sigma^2_X-D_3}{2\sigma^2_X}$. 
We have 
\begin{align*}
D_1&\leq\mathbb{E}[\|X-\mathbb{E}[X|\hat{X}_{D_3,P_3}]\|^2]\\
&\leq\mathbb{E}[\|X-\mu_X-c(\hat{X}_{D_3,P_3}-\mu_X)\|^2]\\
&=(1+c^2)\sigma^2_X-2c\mathbb{E}[(X-\mu_X)^T(\hat{X}_{D_3,P_3}-\mu_X)]\\
& \labrel={relctr:gap} \frac{4\sigma^2_XD_3-D^2_3}{4\sigma^2_X},
\end{align*}
where (\ref{relctr:gap}) is due to (\ref{eq:D3}). So
\begin{align*}
D_3\geq 2\sigma^2_X-2\sigma_X\sqrt{\sigma^2_X-D_1},
\end{align*}
which together with the fact that $D^{(b)}\leq 2D_1$ implies
\begin{align*}
&D^{(b)}-D_3\leq 2D_1-2\sigma^2_X+2\sigma_X\sqrt{\sigma^2_X-D_1},\\
&\frac{D^{(b)}}{D_3}\leq\frac{D_1}{\sigma^2_X-\sigma_X\sqrt{\sigma^2_X-D_1}}.
\end{align*}
It is easy to verify that
\begin{align*}
&\frac{1}{2}\sigma^2_X\geq 2D_1-2\sigma^2_X+2\sigma_X\sqrt{\sigma^2_X-D_1}\stackrel{D_1\approx 0\mbox{ or }\sigma^2_X}{\approx}0,\\
&2\geq\frac{D_1}{\sigma^2_X-\sigma_X\sqrt{\sigma^2_X-D_1}}\stackrel{D_1\approx\sigma^2_X}{\approx}1.
\end{align*}

A similar argument can be used to bound the gap between $(D_1,P_1)$ and the upper-left extreme point $(\tilde{D}^{(a)},\tilde{P}^{(a)})$ of blue curve. Note that
\begin{align*}
\tilde{D}^{(a)}=\mathbb{E}[\|X-\mathbb{E}[X|\hat{X}_{D_3,P_3}]\|^2]\leq\frac{4\sigma^2_XD_3-D^2_3}{4\sigma^2_X},
\end{align*}
which together with the fact that $D_1\geq\frac{1}{2}D_3$ implies
\begin{align*}
&\tilde{D}^{(a)}-D_1\leq\frac{1}{2}D_3-\frac{D^2_3}{4\sigma^2_X},\\
&\frac{\tilde{D}^{(a)}}{D_1}\leq 2-\frac{D_3}{2\sigma^2_X}.
\end{align*}
Finally, this implies
\begin{align}
&\frac{1}{4}\sigma^2_X\geq\frac{1}{2}D_3-\frac{D^2_3}{4\sigma^2_X}\stackrel{D_3\approx 0\mbox{ or }2\sigma^2_X}{\approx}0,\\
&2\geq 2-\frac{D_3}{2\sigma^2_X}\stackrel{D_3\approx 2\sigma^2_X}{\approx} 1.
\end{align}

We have previously dealt with the one-shot setting. Now we consider the case where we jointly encode an i.i.d. sequence $X^n$ where each symbol has marginal distribution $p_X$. Here we assume that $d(\cdot,\cdot)$ is convex in its second argument.

\begin{definition}
	Let $\Theta$ be an arbitrary set of $(D,P)$ pairs. A $\Theta$-universal representation of asymptotic rate $R$ is said to exist if  we can find a sequence of random variables $U^{(n)}$, encoding functions $f^{(n)}_{U^{(n)}}:\mathcal{X}^n\rightarrow\mathcal{C}^{(n)}_U$ and decoding functions $g^{(n)}_{U^{(n)},D,P}:\mathcal{C}^{(n)}_{U^{(n)}}\rightarrow\hat{\mathcal{X}}^n$, $(D,P)\in\Theta$, 
	satisfying
	\begin{align}
	&\frac{1}{n}\sum\limits_{i=1}^n\mathbb{E}[\Delta(X(i),\hat{X}_{D,P}(i))]\leq D,\label{eq:constraintD}\\
	&d\left(p_X,\frac{1}{n}\sum\limits_{i=1}^n p_{\hat{X}_{D,P}(i)}\right)\leq P\label{eq:constraintP}
	\end{align}
	such that
	\begin{align*}
	\limsup\limits_{n\rightarrow\infty}\frac{1}{n}\mathbb{E}[\ell(f^{(n)}_{U^{(n)}}(X^n))]\leq R,
	\end{align*}
	where $\hat{X}^n_{D,P}\triangleq g^{(n)}_{U^{(n)},D,P}(f^{(n)}_{U^{(n)}}(X^n))$.
	The minimum of such  $R$ with respect to $\Theta$ is denoted as $R^{(\infty)}(\Theta)$.
\end{definition}
\begin{theorem}\label{thm:asymptotic_universality}
$R^{(\infty)}(\Theta) = R(\Theta)$.
\end{theorem}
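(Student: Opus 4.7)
The plan is to establish both directions by bridging to the one-shot result: achievability via a product construction plus Theorem~\ref{thm:universal_achievability}, and the converse via standard single-letterization with an enlarged auxiliary representation.

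For achievability, I would take a near-optimal one-shot transform $p_{Z|X}\in\mathcal{P}_{Z|X}(\Theta)$ with $I(X;Z)\leq R(\Theta)+\epsilon$ together with decoding kernels $\{p_{\hat{X}_{D,P}|Z}\}_{(D,P)\in\Theta}$, then lift to block length $n$ via the product $p_{Z^n|X^n}=\prod_i p_{Z|X}$ and per-letter decoders $p_{\hat{X}^n_{D,P}|Z^n}=\prod_i p_{\hat{X}_{D,P}|Z}$. Because each per-letter marginal of $\hat{X}_{D,P}(i)$ equals the original $p_{\hat{X}_{D,P}}$, the average in (\ref{eq:constraintP}) also equals $p_{\hat{X}_{D,P}}$ and the perception constraint holds; the distortion constraint (\ref{eq:constraintD}) follows by i.i.d.\ averaging. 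Treating $X^n$ as a ``super-letter'' and applying Theorem~\ref{thm:universal_achievability} with the block representation $Z^n$ produces a universal code of expected length at most $nI(X;Z)+\log(nI(X;Z)+1)+5$; dividing by $n$ and letting $n\to\infty$ then $\epsilon\to 0$ gives $R^{(\infty)}(\Theta)\leq R(\Theta)$.

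For the converse, I would fix any admissible scheme with rate $R$, write $W=f^{(n)}_{U^{(n)}}(X^n)$, introduce a time-sharing variable $T$ uniform on $\{1,\ldots,n\}$ and independent of everything else, and form the single-letter representation $Z=(W,U^{(n)},X^{T-1},T)$ with $\hat{X}_{D,P}:=\hat{X}_{D,P}(T)$. One checks that $X_T\sim p_X$, the Markov chain $X_T\to Z\to\hat{X}_{D,P}$ holds because $\hat{X}_{D,P}(T)$ is a deterministic function of $(W,U^{(n)},T)\subseteq Z$, and the single-letter distortion and perception constraints coincide with the per-letter-averaged constraints (\ref{eq:constraintD}) and (\ref{eq:constraintP}), so $p_{Z|X_T}\in\mathcal{P}_{Z|X_T}(\Theta)$. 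The rate bound then follows from
\[
\mathbb{E}[\ell(W)]\geq H(W\mid U^{(n)})=I(X^n;W\mid U^{(n)})=\sum_{i=1}^n I(X_i;W,U^{(n)},X^{i-1})=n\,I(X_T;Z)\geq n\,R(\Theta),
\]
using that $W$ is deterministic given $(X^n,U^{(n)})$, that $X^n\perp U^{(n)}$, and the chain rule together with $X_i\perp X^{i-1}$.

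The main subtlety I anticipate is on the achievability side: a single product codebook must simultaneously serve every $(D,P)\in\Theta$, which is exactly what the iuRDPF permits, and it is the i.i.d.\ structure of the per-letter marginals that makes the averaged perception constraint (\ref{eq:constraintP}) collapse back to $d(p_X,p_{\hat{X}_{D,P}})\leq P$. In the converse, the delicate step is absorbing both the random seed $U^{(n)}$ and the prefix $X^{T-1}$ into $Z$ so that the Markov chain is preserved under time-sharing and the mutual information bound lands exactly at $R(\Theta)$ rather than a weaker quantity.
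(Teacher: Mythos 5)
Your proposal matches the paper's proof in both directions: achievability lifts a near-optimal one-shot kernel to a product channel $p_{Z^n|X^n}=\prod_i p_{Z|X}$ and runs the strong-functional-representation/universal-quantization argument of Theorem~\ref{thm:universal_achievability} at the block level with per-letter decoders, and the converse applies the same chain rule and time-sharing single-letterization (the paper simply drops the harmless $X^{T-1}$ component from the auxiliary $Z$ that you retain). Both directions are sound and the route is essentially identical to the paper's.
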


\begin{remark}\label{rem:relaxed}
	The same conclusion holds if constraint (\ref{eq:constraintD}) is replaced with
	\begin{align}
	\mathbb{E}[\Delta(X(i),\hat{X}_{D,P}(i))]\leq D,\quad i=1,\cdots,n,\label{eq:constraintD2}
	\end{align}
	and/or constraint (\ref{eq:constraintP}) is replaced with
	\begin{align}
	d(p_X,p_{\hat{X}_{D,P}(i)})\leq P,\quad i=1,\cdots,n.\label{eq:constraintP2}
	\end{align}
	Note that (\ref{eq:constraintD2}) and (\ref{eq:constraintP2}) are more restrictive than (\ref{eq:constraintD}) and (\ref{eq:constraintP}), respectively, as
	\begin{align*}
	&\mbox{(\ref{eq:constraintD2})}\Rightarrow\mbox{(\ref{eq:constraintD})},\\
	&\mbox{(\ref{eq:constraintP2})}\Rightarrow\frac{1}{n}\sum\limits_{i=1}^nd(p_X,p_{\hat{X}_{D,P}(i)}) \leq P\Rightarrow\mbox{(\ref{eq:constraintP})}. 
	\end{align*}
	Moreover, it is easy to verify that  under constraints (\ref{eq:constraintD2}) and (\ref{eq:constraintP2}), Theorem \ref{thm:asymptotic_universality} holds without the convexity assumption on $d(\cdot,\cdot)$.
\end{remark}

\begin{proof}[Proof of Theorem ~\ref{thm:asymptotic_universality}]
Let $Z$ be jointly distributed with $X$ such that for any $(D,P)\in\Theta$, there exists $p_{\hat{X}_{D,P}|Z}$ satisfying $\mathbb{E}[\Delta(X,\hat{X}_{D,P})]\leq D$ and $d(p_X,p_{\hat{X}_{D,P}})\leq P$.
Construct 
\begin{align*} p_{Z^n|X^n}\triangleq\prod_{i=1}^np_{Z(i)|X(i)}
\end{align*}
 with $p_{Z(i)|X(i)}=p_{Z|X}$, $i=1,\cdots,n$. 
It follows by the strong functional representation lemma that there exists a random variable $V^{(n)}$, independent of $X^n$, and a deterministic function $\phi^{(n)}$ such that $Z^n=\phi(X^n,V^{(n)})$  and $H(\phi(X^n,V^{(n)})|V^{(n)})\leq I(X^n;Z^n)+\log(I(X^n;Z^n)+1)+4$. So with $V^{(n)}$ available at both the encoder and the decoder, we can use a class of prefix-free binary codes indexed by $V^{(n)}$ with the expected codeword length no greater than $I(X^n;Z^n)+\log(I(X^n;Z^n)+1)+5$ to lossless represent $Z^n$. Moreover, by the functional representation lemma, there exist a random variable $V_{D,P}$, independent of $(X^n;V^{(n)})$, and a deterministic function $\psi_{D,P}$ such that $p_{\psi_{D,P}(Z(i),V_{D,P})|Z(i)}=p_{\hat{X}_{D,P}|Z}$. Note that $V^{(n)}$ and $V_{D,P}$ can be extracted from random seed $U^{(n)}$. Define $\hat{X}_{D,P}(i)=\psi_{D,P}(Z(i),V_{D,P})$, $i=1,\cdots,n$. It is easy to verify that 
\begin{align*}
&\frac{1}{n}\sum\limits_{i=1}^n\mathbb{E}[\Delta(X(i),\hat{X}_{D,P}(i))]=\mathbb{E}[\Delta(X,\hat{X}_{D,P})]\leq D,\\
&d\left(p_X,\frac{1}{n}\sum\limits_{i=1}^n p_{\hat{X}_{D,P}(i)}\right)=d(p_X,p_{\hat{X}_{D,P}})\leq P.
\end{align*}
Moreover, notice that
\begin{align*}
&\frac{1}{n}I(X^n;Z^n)+\frac{1}{n}\log(I(X^n;Z^n)+1)+\frac{5}{n}\\
&=I(X;Z)+\frac{1}{n}\log(nI(X;Z)+1)+\frac{5}{n}\\
&\stackrel{n\rightarrow\infty}{\rightarrow} I(X;Z).
\end{align*}
This proves that $R^{(\infty)}(\Theta)\leq R(\Theta)$.

For any random variable $U^{(n)}$, encoding function $f^{(n)}_{U^{(n)}}:\mathcal{X}^n\rightarrow\mathcal{C}^{(n)}_{U^{(n)}}$ and decoding function $g^{(n)}_{U^{(n)},D,P}:\mathcal{C}^{(n)}_{U^{(n)}}\rightarrow\hat{\mathcal{X}}^n$, $(D,P)\in\Theta$ satisfying  (\ref{eq:constraintD}) and (\ref{eq:constraintP}), we have
\begin{align*}
\frac{1}{n}\mathbb{E}[\ell(f^{(n)}_{U^{(n)}}(X^n))]
&\geq\frac{1}{n}H(f^{(n)}_{U^{(n)}}(X^n)|U^{(n)})\\
&=\frac{1}{n}I(X^n;f^{(n)}_{U^{(n)}}(X^n)|U^{(n)})\\
&=\frac{1}{n}\sum\limits_{i=1}^nI(X(i);f^{(n)}_{U^{(n)}}(X^n)|U^{(n)},X^{i-1})\\
&=\frac{1}{n}\sum\limits_{i=1}^nI(X(i);f^{(n)}_{U^{(n)}}(X^n),U^{(n)},X^{i-1})\\
&\geq\frac{1}{n}\sum\limits_{i=1}^nI(X(i);f^{(n)}_{U^{(n)}}(X^n),U^{(n)})\\
&=I(X(T);f^{(n)}_{U^{(n)}}(X^n),U^{(n)}|T)\\
&=I(X(T);f^{(n)}_{U^{(n)}}(X^n),U^{(n)},T),
\end{align*}
where $T$ is uniformly distributed over $\{1,\cdots,n\}$ and is independent of $X^n$ and $U^{(n)}$. Note that $\hat{X}_{D,P}(T)$ is a function of $(f^{(n)}_{U^{(n)}}(X^n),U^{(n)},T)$ for any $(D,P)\in\Theta$. 
Since
\begin{align*}
&p_{X(T)}=p_X,\\
&\mathbb{E}[\Delta(X(T),\hat{X}_{D,P}(T))]=\frac{1}{n}\sum\limits_{i=1}^n\mathbb{E}[\Delta(X(i),\hat{X}_{D,P}(i))]\leq D,\\
&d(p_{X(T)},p_{\hat{X}_{D,P}(T)})=d\left(p_{X},\frac{1}{n}\sum\limits_{i=1}^np_{\hat{X}_{D,P}(i)}\right)\leq P,
\end{align*} 
it follows that
\begin{align*}
I(X(T);f^{(n)}_U(X^n),U,T)\geq R(\Theta).
\end{align*}
This completes the proof.
\end{proof}

\subsection{Successive Refinement} \label{subsec:sr}

We now study the case where the rate is not fixed in advance. Bits are sent in two stages as opposed to all at once, with the hope that the reconstructions produced at both stages perform near-optimally in both perception and distortion compared to what can be achieved by one-stage communication at both the lower rate and the higher rate. Two-stage procedures arise frequently under practical constraints, and previous works have considered this only under distortion losses. We address the extension of universal representations to this setting within the successive refinement \cite{equitz1991successive} framework.

\begin{definition}[Two-stage Coding]
Given two sets of $(D,P)$ pairs $\Theta_1$ and $\Theta_2$,
we say rate pair $(R_1,R_2)$ is (operationally) achievable if there exists random variable $U$, encoding functions
\begin{equation*}
    f_{U}:\mathcal{X}\rightarrow\mathcal{C}_{U}, \quad
    f_{U,f_U(X)}:\mathcal{X}\rightarrow\mathcal{C}_{U,f_{U}(X)}, \quad 
\end{equation*}
and decoding functions
\begin{equation*}
     g_{U,D_1,P_1}:\mathcal{C}_{U}\rightarrow\hat{\mathcal{X}}, \quad
     g_{U,f_U(X),D_2,P_2}:\mathcal{C}_{U,f_{U}(X)}\rightarrow\hat{\mathcal{X}}
\end{equation*}

for each $(D_1,P_1)\in\Theta_1$ and $(D_2,P_2)\in\Theta_2$, such that
\begin{gather*}
    \mathbb{E}[\ell(f_{U}(X))]\leq R_1, \quad \mathbb{E}[\ell(f_{U,f_U(X)}(X))]\leq R_2, \\
    \mathbb{E}[\Delta(X,\hat{X}_{1,D_1,P_1})]\leq D_1, \quad \mathbb{E}[\Delta(X,\hat{X}_{2,D_2,P_2})]\leq D_2, \\
    d(p_{X},p_{\hat{X}_{1,D_1,P_1}})\leq P_1, \quad d(p_{X},p_{\hat{X}_{2,D_2,P_2}})\leq P_2,
\end{gather*}
 where $\hat{X}_{1,D_1,P_1}= g_{U,D_1,P_1}(f_{U}(X))$ and $\hat{X}_{2,D_2,P_2}= g_{U,f_U(X),D_2,P_2}(f_{U,f_U(X)}(X))$. The closure of the set of such $(R_1,R_2)$ is denoted as 
$\mathcal{R}^*(\Theta_1,\Theta_2)$.
\end{definition}
Here, $f_{U}$ acts with each $g_{U,D_1,P_1}$ forming a low rate encoder-decoder pair to meet each constraint $(D_1,P_1) \in \Theta_1$. Thereafter, $f_{U,f_U(X)}$ encodes additional information about the source which is combined with the low rate encoding to produce a high rate reconstruction through $g_{U,f_U(X),D_2,P_2}$ meeting each constraint $(D_2,P_2) \in \Theta_2$. 

\begin{definition}[Inner and outer bounds]
Define
\begin{align*}
\underline{\mathcal{R}}(\Theta_1,\Theta_2)
&=\bigcup\limits_{p_{Z_{1},Z_{2}|X}}\{(R_1,R_2)\in\mathbb{R}^2_+: R_1\geq I(X;Z_{1})+\log(I(X;Z_{1})+1)+5,\\
&R_1+R_2\geq I(X;Z_{1},Z_{2})+\log(I(X;Z_{1})+1)+\log(I(X;Z_{2}|Z_{1})+1)+10\},\\
\overline{\mathcal{R}}(\Theta_1,\Theta_2)
&=\bigcup\limits_{p_{Z_{1},Z_{2}|X}}\{(R_1,R_2)\in\mathbb{R}^2_+:R_1\geq I(X;Z_{1}), R_1+R_2\geq I(X;Z_{1},Z_{2})\}
\end{align*}
with the unions taken over $p_{Z_{1},Z_{2}|X}$ such that for any $(D_1,P_1)\in\Theta_1$ and $(D_2,P_2)\in\Theta_2$, there exists
\begin{equation*}
    p_{\hat{X}_{1,D_1,P_1}|Z_1} \quad \text{and} \quad p_{\hat{X}_{2,D_2,P_2}|Z_2}
\end{equation*}
satisfying
\begin{gather*}
    \mathbb{E}[\Delta(X,\hat{X}_{1,D_1,P_1})]\leq D_1, \quad \mathbb{E}[\Delta(X,\hat{X}_{2,D_2,P_2})]\leq D_2, \\
    d(p_X,p_{\hat{X}_{1,D_1,P_1}})\leq P_1, \quad d(p_X,p_{\hat{X}_{2,D_2,P_2}})\leq P_2.
\end{gather*}
\end{definition}
We now characterize the operational definition in terms of these information rate regions.

\begin{theorem}\label{thm:refinement_bounds}
$\mbox{cl}(\underline{\mathcal{R}}(\Theta_1,\Theta_2))\subseteq\mathcal{R}^*(\Theta_1,\Theta_2)\subseteq\mbox{cl}(\overline{\mathcal{R}}(\Theta_1,\Theta_2))$.
\end{theorem}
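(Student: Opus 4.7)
\begin{proofoutline}
The plan is to prove the two inclusions separately, with the inner bound by a two-stage coding construction based on the strong functional representation lemma (SFRL) and the outer bound by a standard information-theoretic converse. The structure closely mirrors the proof of \theoremref{thm:universal_achievability}, just split across two stages with careful bookkeeping of the Markov and independence structure.

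For the inner bound $\mbox{cl}(\underline{\mathcal{R}}(\Theta_1,\Theta_2))\subseteq\mathcal{R}^*(\Theta_1,\Theta_2)$, I would fix an arbitrary $p_{Z_1,Z_2|X}$ realizing a point in $\underline{\mathcal{R}}(\Theta_1,\Theta_2)$. First apply SFRL to obtain a random variable $V_1$ independent of $X$ and a function $\phi_1$ such that $Z_1=\phi_1(X,V_1)$ and $H(Z_1\mid V_1)\leq I(X;Z_1)+\log(I(X;Z_1)+1)+4$. Then apply SFRL to the conditional distribution $p_{Z_2\mid X,Z_1}$ to get a random variable $V_2$ independent of $(X,V_1)$ and a function $\phi_2$ such that $Z_2=\phi_2(X,Z_1,V_2)$ with $H(Z_2\mid Z_1,V_2)\leq I(X;Z_2\mid Z_1)+\log(I(X;Z_2\mid Z_1)+1)+4$. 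The first-stage encoder uses a prefix-free code indexed by $V_1$ to losslessly transmit $Z_1$, costing at most $I(X;Z_1)+\log(I(X;Z_1)+1)+5$ bits in expectation. The second-stage encoder uses a prefix-free code indexed by $(Z_1,V_2)$ to losslessly transmit $Z_2$, and since $Z_1$ has already been sent this adds at most $I(X;Z_2\mid Z_1)+\log(I(X;Z_2\mid Z_1)+1)+5$ bits. Summing and applying the chain rule $I(X;Z_1)+I(X;Z_2\mid Z_1)=I(X;Z_1,Z_2)$ yields the required bound on $R_1+R_2$. Finally, each decoder simulates its target conditional $p_{\hat{X}_{1,D_1,P_1}\mid Z_1}$ and $p_{\hat{X}_{2,D_2,P_2}\mid Z_2}$ via additional functional-representation randomness packaged inside the shared seed $U=(V_1,V_2,\text{decoder randomness})$; this matches every distortion and perception constraint by construction of the union.

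For the outer bound $\mathcal{R}^*(\Theta_1,\Theta_2)\subseteq\mbox{cl}(\overline{\mathcal{R}}(\Theta_1,\Theta_2))$, I would start from any achievable scheme $(U,f_U,f_{U,f_U(X)},\{g_{\cdot}\})$ and define the auxiliaries
\begin{equation*}
Z_1=(f_U(X),U),\qquad Z_2=(f_U(X),f_{U,f_U(X)}(X),U).
\end{equation*}
Since the reconstructions $\hat{X}_{1,D_1,P_1}$ and $\hat{X}_{2,D_2,P_2}$ are deterministic functions of $Z_1$ and $Z_2$ respectively, the Markov chains $X\leftrightarrow Z_1\leftrightarrow\hat{X}_{1,D_1,P_1}$ and $X\leftrightarrow Z_2\leftrightarrow\hat{X}_{2,D_2,P_2}$ hold, and the distortion and perception constraints in the definition of $\overline{\mathcal{R}}(\Theta_1,\Theta_2)$ are inherited from the operational scheme. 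Using $\mathbb{E}[\ell(\cdot)]\geq H(\cdot\mid U)$ for each stage, independence of $U$ from $X$, and the identities $I(X;f_U(X)\mid U)=I(X;f_U(X),U)=I(X;Z_1)$ and the analogous expansion for the joint codewords, I obtain $R_1\geq I(X;Z_1)$ and $R_1+R_2\geq I(X;Z_1,Z_2)$, establishing membership in $\overline{\mathcal{R}}(\Theta_1,\Theta_2)$. Closure is then needed because $\mathcal{R}^*$ is defined as a closure.

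The main obstacle I anticipate is the achievability side, specifically the conditional application of SFRL: I need $V_2$ to be independent not just of $X$ but of $(X,V_1)$, so that the first-stage encoding is not disturbed, and I need the two prefix-free codes to be truly incremental (the receiver must be able to decode $Z_1$ before the second code arrives, so the second code is a variable-length prefix-free code conditioned on the already-known pair $(Z_1,V_2)$). Packing all the auxiliary randomness $(V_1,V_2,\text{decoder simulation seeds})$ into the single shared variable $U$ without inflating the rate is a bookkeeping chore but not conceptually hard. The converse side is straightforward, since the definition of $\overline{\mathcal{R}}$ contains no $\log$ overhead and the chain rule applies cleanly once the auxiliaries are chosen as above.
\end{proofoutline}
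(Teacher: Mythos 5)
Your proposal is correct and follows essentially the same route as the paper: two-stage achievability via the strong functional representation lemma applied first to $p_{Z_1|X}$ and then conditionally to $p_{Z_2|X,Z_1}$ (with decoder simulation via the plain functional representation lemma folded into the shared seed), and a converse that sets $Z_1=(f_U(X),U)$, $Z_2=(f_U(X),f_{U,f_U(X)}(X),U)$ and bounds expected codeword lengths below by entropies and then mutual informations using independence of $U$ from $X$. The only cosmetic difference is that you assert $I(X;f_U(X)\mid U)=I(X;f_U(X),U)$ (which holds since $X\perp U$), while the paper states it as an inequality in the direction needed for the lower bound.
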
 

\begin{proof}[Proof of Theorem~\ref{thm:refinement_bounds}]
(a) Let $Z_1$ and $Z_2$ be jointly distributed with $X$  such that for any $(D_1,P_1)\in\Theta_1$ and $(D_2,P_2)\in\Theta_2$, there exist $p_{\hat{X}_{1,D_1,P_1}|Z_1}$ and $p_{\hat{X}_{2,D_2,P_2}|Z_2}$ satisfying
$\mathbb{E}[\Delta(X,\hat{X}_{1,D_1,P_1})]\leq D_1$, $\mathbb{E}[\Delta(X,\hat{X}_{2,D_2,P_2})]\leq D_2$, $d(p_X,p_{\hat{X}_{1,D_1,P_1}})\leq P_1$, and $d(p_X,p_{\hat{X}_{2,D_2,P_2}})\leq P_2$.
It follows by the strong functional representation lemma that there exist a random variable $V_1$, independent of $X$, and a deterministic function $\phi_1$ such that $Z_1=\phi_1(X,V_1)$  and $H(\phi_1(X,V_1)|V_1)\leq I(X;Z_1)+\log(I(X;Z_1)+1)+4$; moreover, there exist a random variable $V_2$, independent of $(X,V_1)$, and a deterministic function $\phi_2$ such that $Z_2=\phi_2(X,Z_1,V_2)$  and $H(\phi_2(X,Z_1,V_2)|Z_1,V_2)\leq I(X;Z_2|Z_1)+\log(I(X;Z_2|Z_1)+1)+4$.
So with $(V_1,V_2)$ available at both the encoder and the decoder, we can use a class of prefix-free binary codes indexed by $V_1$ with the expected codeword length  no greater than $I(X;Z_1)+\log(I(X;Z_1)+1)+5$ to lossless represent $Z_1$ and then use a class of prefix-free binary codes indexed by $(Z_1,V_2)$ with the expected codeword length  no greater than $I(X;Z_2|Z_1)+\log(I(X;Z_2|Z_1)+1)+5$ to lossless represent $Z_2$. Note that in the first stage we can send the codeword used to represent $Z_1$ and with a certain probability the codeword used to represent $Z_2$. 

Now it suffices for the decoder to simulate $p_{\hat{X}_{1,D_1,P_1}|Z_1}$ and $p_{\hat{X}_{2,D_2,P_2}|Z_2}$. Specifically, it follows by the functional representation lemma that there exist random variables 
\begin{equation*}
    V_{1,D_1,P_1} \quad \text{and} \quad V_{2,D_2,P_2},
\end{equation*}
independent of $(X,V_1,V_2)$, and deterministic functions 
\begin{equation*}
    \psi_{1,D_1,P_1} \quad \text{and} \quad \psi_{2,D_2,P_2}
\end{equation*}
such that 
\begin{align*}
    \hat{X}_{1,D_1,P_1}&=\psi_{1,D_1,P_1}(Z_1,V_{1,D_1,P_1}), \\
    \hat{X}_{2,D_2,P_2}&=\psi_{2,D_2,P_2}(Z_2,V_{2,D_2,P_2}).
\end{align*}
This proves the desired result. 

(b) For any random variable $U$, encoding functions $f_{U}:\mathcal{X}\rightarrow\mathcal{C}_{U}$, $f_{U,f_U(X)}:\mathcal{X}\rightarrow\mathcal{C}_{U,f_{U}(X)}$ and decoding functions $g_{U,D_1,P_1}:\mathcal{C}_{U}\rightarrow\hat{\mathcal{X}}$, $(D_1,P_1)\in\Theta_1$, $g_{U,f_U(X),D_2,P_2}:\mathcal{C}_{U,f_{U}(X)}\rightarrow\hat{\mathcal{X}}$, $(D_2,P_2)\in\Theta_2$, satisfying 
$\mathbb{E}[\Delta(X,\hat{X}_{1,D_1,P_1})]\leq D_1$,  $\mathbb{E}[\Delta(X,\hat{X}_{2,D_2,P_2})]\leq D_2$, $d(p_{X},p_{\hat{X}_{1,D_1,P_1}})\leq P_1$, and $d(p_{X},p_{\hat{X}_{2,D_2,P_2}})\leq P_2$,
we have
\begin{align*}
\mathbb{E}[\ell(f_{U}(X))]
&\geq H(f_{U}(X)|U)\\
&=I(X;f_{U}(X)|U)\\
&\geq I(X;f_{U}(X),U)\\
&=I(X;Z_1)
\end{align*}
and
\begin{align*}
\mathbb{E}[\ell(f_{U}(X))]+\mathbb{E}[\ell(f_{U,f_U(X)}(X))]
&\geq H(f_{U}(X)|U)+H(f_{U,f_U(X)}(X)|U,f_{U}(X))\\
&\geq I(X;f_{U}(X)|U)+I(X;f_{U,f_U(X)}(X)|U,f_{U}(X))\\
&= I(X;f_{U}(X),f_{U,f_U(X)}(X)|U)\\
&= I(X;f_{U}(X),f_{U,f_U(X)}(X),U)\\
&=I(X;Z_1,Z_2),
\end{align*}
where we define $Z_1=(f_U(X),U)$ and $Z_2=(f_{U}(X),f_{U,f_U(X)}(X),U)$.
So $(R_1,R_2)\in\overline{\mathcal{R}}(\Theta_1,\Theta_2)$ for any $(R_1,R_2)$ with $R_1\geq\mathbb{E}[\ell(f_{U}(X))]$ and $R_2\geq \mathbb{E}[\ell(f_{U,f_U(X)}(X))]$. This completes the proof.
\end{proof}

\begin{definition}[Asymptotic rate region]

Given two sets of $(D,P)$ pairs $\Theta_1$ and $\Theta_2$,
we say rate pair $(R_1,R_2)$ is asymptotically achievable if  there exists a sequence of random variables $U^{(n)}$, encoding functions 
\begin{equation*}
    f^{(n)}_{U^{(n)}}:\mathcal{X}^n\rightarrow\mathcal{C}^{(n)}_{U^{(n)}}, \quad f^{(n)}_{U^{(n)},f^{(n)}_U(X^n)}:\mathcal{X}^n\rightarrow\mathcal{C}^{(n)}_{U^{(n)},f^{(n)}_{U^{(n)}}(X^n)} \\
\end{equation*}
and decoding functions
\begin{equation*}
    g^{(n)}_{U^{(n)},D_1,P_1}:\mathcal{C}^{(n)}_{U^{(n)}}\rightarrow\hat{\mathcal{X}}^n, \quad g^{(n)}_{U^{(n)},f^{(n)}_{U^{(n)}}(X^n),D_2,P_2}:\mathcal{C}^{(n)}_{U^{(n)},f^{(n)}_{U^{(n)}}(X^n)}\rightarrow\hat{\mathcal{X}}^n
\end{equation*}
$(D_1,P_1)\in\Theta_1$, $(D_2,P_2)\in\Theta_2$, satisfying
\begin{align}
&\frac{1}{n}\sum\limits_{i=1}^n\mathbb{E}[\Delta(X(i),\hat{X}_{1,D_1,P_1}(i))]\leq D_1,\quad \frac{1}{n}\sum\limits_{i=1}^n\mathbb{E}[\Delta(X(i),\hat{X}_{2,D_2,P_2}(i))]\leq D_2,\label{eq:seqasympD}\\
&d\left(p_{X},\frac{1}{n}\sum\limits_{i=1}^np_{\hat{X}_{1,D_1,P_1}(i)}\right)\leq P_1,\quad d\left(p_{X},\frac{1}{n}\sum\limits_{i=1}^np_{\hat{X}_{2,D_2,P_2}(i)}\right)\leq P_2\label{eq:seqasympP}
\end{align}
such that
\begin{align*}
\limsup\limits_{n\rightarrow\infty}\frac{1}{n}\mathbb{E}[\ell(f^{(n)}_{U^{(n)}}(X^n))]\leq R_1,\quad \limsup\limits_{n\rightarrow\infty}\frac{1}{n}\mathbb{E}[\ell(f^{(n)}_{U^{(n)},f^{(n)}(X^n)}(X^n))]\leq R_2,
\end{align*}
where 
\begin{equation*}
    \hat{X}^n_{1,D_1,P_1}= g^{(n)}_{U^{(n)},D_1,P_1}(f^{(n)}_{U^{(n)}}(X^n))
\end{equation*}
and
\begin{equation*}
    \hat{X}^n_{2,D_2,P_2}= g^{(n)}_{U^{(n)},f^{(n)}_{U^{(n)}}(X^n),D_2,P_2}(f^{(n)}_{U^{(n)},f^{(n)}(X^n)}(X^n)).
\end{equation*}
The set of such $(R_1,R_2)$ is denoted as  $\mathcal{R}^{(\infty)}(\Theta_1,\Theta_2)$.

\end{definition}
\begin{theorem}\label{thm:refinement_asymptotic}
$\mathcal{R}^{(\infty)}(\Theta_1,\Theta_2)=\mbox{cl}(\overline{\mathcal{R}}(\Theta_1,\Theta_2))$.
\end{theorem}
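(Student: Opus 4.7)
The plan is to prove both inclusions as asymptotic analogs of Theorem \ref{thm:refinement_bounds}, exploiting the fact that the $O(\log n)$ penalties arising from the strong functional representation lemma are $o(n)$ and hence disappear after normalizing by the block length.

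For achievability, $\mbox{cl}(\overline{\mathcal{R}}(\Theta_1,\Theta_2)) \subseteq \mathcal{R}^{(\infty)}(\Theta_1,\Theta_2)$, I would mimic the construction in part~(a) of the proof of Theorem \ref{thm:refinement_bounds}, applied to the i.i.d.\ product distribution $p_{Z_1^n, Z_2^n \mid X^n} = \prod_{i=1}^n p_{Z_1,Z_2\mid X}$. Two successive applications of the strong functional representation lemma produce seeds $V_1^{(n)}, V_2^{(n)}$ (packaged together with the per-coordinate decoding seeds into $U^{(n)}$) and deterministic maps $\phi_1^{(n)}, \phi_2^{(n)}$ with $Z_1^n = \phi_1^{(n)}(X^n, V_1^{(n)})$ and $Z_2^n = \phi_2^{(n)}(X^n, Z_1^n, V_2^{(n)})$, satisfying
\begin{align*}
H(Z_1^n \mid V_1^{(n)}) &\leq nI(X;Z_1) + \log(nI(X;Z_1)+1) + 4, \\
H(Z_2^n \mid Z_1^n, V_2^{(n)}) &\leq nI(X;Z_2\mid Z_1) + \log(nI(X;Z_2\mid Z_1)+1) + 4.
\end{align*}
Prefix-free codes indexed by $V_1^{(n)}$ and $(Z_1^n, V_2^{(n)})$ then yield per-symbol expected rates converging to $I(X;Z_1)$ and $I(X;Z_2\mid Z_1)$ respectively. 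A per-coordinate application of the functional representation lemma simulates $\hat{X}_{j,D_j,P_j}(i)$ from $Z_j(i)$, and since the tuples $(X(i), Z_1(i), Z_2(i), \hat{X}_{1,D_1,P_1}(i), \hat{X}_{2,D_2,P_2}(i))$ are i.i.d.\ copies of the single-letter one, the averaged constraints (\ref{eq:seqasympD})--(\ref{eq:seqasympP}) reduce to the single-letter conditions already satisfied by $p_{Z_1,Z_2\mid X}$.

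For the converse $\mathcal{R}^{(\infty)}(\Theta_1,\Theta_2) \subseteq \mbox{cl}(\overline{\mathcal{R}}(\Theta_1,\Theta_2))$, I would introduce a time-sharing variable $T$ uniform on $\{1,\dots,n\}$, independent of $(X^n, U^{(n)})$, and single-letterize via
\begin{equation*}
\tilde{Z}_1 = (f^{(n)}_{U^{(n)}}(X^n),\, U^{(n)},\, X^{T-1},\, T), \qquad \tilde{Z}_2 = \bigl(\tilde{Z}_1,\, f^{(n)}_{U^{(n)}, f^{(n)}_{U^{(n)}}(X^n)}(X^n)\bigr).
\end{equation*}
The chain-rule computation from the proof of Theorem \ref{thm:asymptotic_universality}, extended using $H(A,B\mid U) = H(A\mid U) + H(B\mid U,A)$, yields
\begin{align*}
\tfrac{1}{n}\mathbb{E}[\ell(f^{(n)}_{U^{(n)}}(X^n))] &\geq I(X(T); \tilde{Z}_1), \\
\tfrac{1}{n}\mathbb{E}[\ell(f^{(n)}_{U^{(n)}}(X^n))] + \tfrac{1}{n}\mathbb{E}[\ell(f^{(n)}_{U^{(n)}, f^{(n)}_{U^{(n)}}(X^n)}(X^n))] &\geq I(X(T); \tilde{Z}_1, \tilde{Z}_2).
\end{align*}
Since $\hat{X}_{1,D_1,P_1}(T)$ is a function of $\tilde{Z}_1$ and $\hat{X}_{2,D_2,P_2}(T)$ is a function of $\tilde{Z}_2$, the distortion conditions translate by linearity of expectation, and the perception conditions translate exactly because $p_{X(T)} = p_X$ and $p_{\hat{X}_{j,D_j,P_j}(T)} = \frac{1}{n}\sum_{i=1}^n p_{\hat{X}_{j,D_j,P_j}(i)}$ coincides with the left side of (\ref{eq:seqasympP}). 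Taking closures yields membership in $\mbox{cl}(\overline{\mathcal{R}}(\Theta_1,\Theta_2))$.

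The main obstacle I anticipate is largely bookkeeping rather than conceptual: one must choose the augmenting variables in $\tilde{Z}_1, \tilde{Z}_2$ so that (i)~each decoded symbol $\hat{X}_{j,D_j,P_j}(T)$ is genuinely a function of $\tilde{Z}_j$ alone, certifying the existence of $p_{\hat{X}_{j,D_j,P_j}\mid Z_j}$ required in the definition of $\overline{\mathcal{R}}$, and (ii)~the chain rule on the sum rate closes without a stray Markov gap. Including $X^{T-1}$ and $T$ in $\tilde{Z}_1$ is the standard fix. It is worth emphasizing that no convexity assumption on $d(\cdot,\cdot)$ is invoked here, since constraint (\ref{eq:seqasympP}) is posed on the mixture marginal rather than per-symbol, so the single-letterized perception constraint holds with equality rather than through Jensen's inequality.
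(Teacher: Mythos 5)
Your plan follows the paper's own proof closely: the same i.i.d.\ single-letter product construction with two nested applications of the strong functional representation lemma on the achievability side, and the same $T$-based time-sharing single-letterization on the converse side. The variant choice $\tilde{Z}_1 = (f^{(n)}_{U^{(n)}}(X^n), U^{(n)}, X^{T-1}, T)$ is fine --- it just turns the paper's final dropping of $X^{i-1}$ into an equality, and $\hat{X}_{1,D_1,P_1}(T)$ remains a deterministic function of $\tilde{Z}_1$, so no issue arises. Your observation that the mixture-marginal constraint makes the converse go through without convexity of $d$ is also correct.

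There is one genuine gap on the achievability side. Your construction achieves only the corner point $\bigl(I(X;Z_1)+o(1),\; I(X;Z_2\mid Z_1)+o(1)\bigr)$ and the points that dominate it component-wise. But $\overline{\mathcal{R}}$ is the region $\{R_1 \geq I(X;Z_1),\; R_1 + R_2 \geq I(X;Z_1,Z_2)\}$, which contains points with $R_2 < I(X;Z_2\mid Z_1)$ compensated by $R_1 > I(X;Z_1)$. The corner point does not dominate those, so your argument leaves them uncovered. The paper closes this with a rate-splitting step: ``in the first stage we can send the codeword used to represent $Z_1^n$ and with a certain probability the codeword used to represent $Z_2^n$,'' which shifts a tunable fraction of the refinement description into the first stage and sweeps out the entire line $R_1 + R_2 = I(X;Z_1,Z_2)$, $R_1 \geq I(X;Z_1)$. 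Without that step (or an equivalent re-layering of the two-stage code), the inclusion $\mbox{cl}(\overline{\mathcal{R}}(\Theta_1,\Theta_2)) \subseteq \mathcal{R}^{(\infty)}(\Theta_1,\Theta_2)$ is not established in full.
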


\begin{remark}
	Remark \ref{rem:relaxed} is applicable here as well. 
\end{remark}

\begin{proof}[Proof of Theorem~\ref{thm:refinement_asymptotic}]
Let $Z_1$ and $Z_2$ be jointly distributed with $X$ such that for any $(D_1,P_1)\in\Theta_1$ and $(D_2,P_2)\in\Theta_2$, there exist $p_{\hat{X}_{1,D_1,P_1}|Z_1}$ and $p_{\hat{X}_{2,D_2,P_2}|Z_2}$ satisfying
	$\mathbb{E}[\Delta(X,\hat{X}_{1,D_1,P_1})]\leq D_1$, $\mathbb{E}[\Delta(X,\hat{X}_{2,D_2,P_2})]\leq D_2$, $d(p_X,p_{\hat{X}_{1,D_1,P_1}})\leq P_1$, and $d(p_X,p_{\hat{X}_{2,D_2,P_2}})\leq P_2$.
		Construct 
	\begin{align*} p_{Z^n_1Z^n_2|X^n}\triangleq\prod_{i=1}^np_{Z_1(i)Z_2(i)|X(i)}
	\end{align*}
	with $p_{Z_1(i)Z_2(i)|X(i)}=p_{Z_1Z_2|X}$, $i=1,\cdots,n$. 
		It follows by the strong functional representation lemma that there exist a random variable $V^{(n)}_1$, independent of $X^n$, and a deterministic function $\phi_1$ such that $Z^n_1=\phi_1(X^n,V^{(n)}_1)$  and $H(\phi_1(X^n,V^{(n)}_1)|V^{(n)}_1)\leq I(X^n;Z^n_1)+\log(I(X^n;Z^n_1)+1)+4$; moreover, there exist a random variable $V^{(n)}_2$, independent of $(X^n,V^{(n)}_1)$, and a deterministic function $\phi_2$ such that $Z^n_2=\phi_2(X^n,Z^n_1,V^{(n)}_2)$  and $H(\phi_2(X^n,Z^n_1,V^{(n)}_2)|Z^n_1,V^{(n)}_2)\leq I(X^n;Z^n_2|Z^n_1)+\log(I(X^n;Z^n_2|Z^n_1)+1)+4$.
	So with $(V^{(n)}_1,V^{(n)}_2)$ available at both the encoder and the decoder, we can use a class of prefix-free binary codes indexed by $V^{(n)}_1$ with the expected codeword length  no greater than $I(X^n;Z^n_1)+\log(I(X^n;Z^n_1)+1)+5$ to lossless represent $Z^n_1$ and then use a class of prefix-free binary codes indexed by $(Z^n_1,V^{(n)}_2)$ with the expected codeword length  no greater than $I(X^n;Z^n_2|Z^n_1)+\log(I(X^n;Z^n_2|Z^n_1)+1)+5$ to lossless represent $Z^n_2$. 
	
	Note that in the first stage we can send the codeword used to represent $Z^n_1$ and with a certain probability the codeword used to represent $Z^n_2$. 
	Moreover, by the functional representation lemma, there exist random variables $V_{1,D_1,P_1}$ and $V_{2,D_2,P_2}$, independent of $(X^n;V^{(n)}_1,V^{(n)}_2)$, and deterministic functions $\psi_{1,D_1,P_1}$ and $\psi_{2,D_2,P_2}$ such that $p_{\psi_{1,D_1,P_1}(Z_1(i),V_{1,D_1,P_1})|Z_1(i)}=p_{\hat{X}_{1,D_1,P_1}|Z_1}$ and $p_{\psi_{2,D_2,P_2}(Z_2(i),V_{2,D_2,P_2})|Z_2(i)}=p_{\hat{X}_{2,D_2,P_2}|Z_2}$. Define $\hat{X}_{1,D_1,P_1}(i)=\psi_{1,D_1,P_1}(Z_1(i),V_{1,D_1,P_1})$ and $\hat{X}_{2,D_2,P_2}(i)=\psi_{2,D_2,P_2}(Z_2(i),V_{2,D_2,P_2})$, $i=1,\cdots,n$. It is easy to verify that 
	\begin{align*}
	&\frac{1}{n}\sum\limits_{i=1}^n\mathbb{E}[\Delta(X(i),\hat{X}_{k,D_k,P_k}(i))]=\mathbb{E}[\Delta(X,\hat{X}_{k,D_k,P_k})]\leq D_k,\quad k=1,2,\\
	&d\left(p_X,\frac{1}{n}\sum\limits_{i=1}^n p_{\hat{X}_{k,D_k,P_k}(i)}\right)=d(p_X,p_{\hat{X}_{k,D_k,P_k}})\leq P_k,\quad k=1,2.
	\end{align*}
	Furthermore,
	\begin{align*}
	&\frac{1}{n}I(X^n;Z^n_1)+\frac{1}{n}\log(I(X^n;Z^n_1)+1)+\frac{5}{n}\\
	&=I(X;Z_1)+\frac{1}{n}\log(nI(X;Z_1)+1)+\frac{5}{n}\\
	&\stackrel{n\rightarrow\infty}{\rightarrow} I(X;Z_1)
	\end{align*}
	and
	\begin{align*}
	&\frac{1}{n}I(X^n;Z^n_1)+\frac{1}{n}\log(I(X^n;Z^n_1)+1)+\frac{5}{n}+\frac{1}{n}I(X^n;Z^n_2|Z^n_1)+\frac{1}{n}\log(I(X^n;Z^n_2|Z^n_1)+1)+\frac{5}{n}\\
	&=\frac{1}{n}I(X^n;Z^n_1,Z^n_2)+\frac{1}{n}\log(I(X^n;Z^n_1)+1)+\frac{1}{n}\log(I(X^n;Z^n_2|Z^n_1)+1)+\frac{10}{n}\\
	&=I(X;Z_1,Z_2)+\frac{1}{n}\log(nI(X;Z_1)+1)+\frac{1}{n}\log(nI(X;Z_2|Z_1)+1)+\frac{10}{n}\\
	&\stackrel{n\rightarrow\infty}{\rightarrow}I(X;Z_1,Z_2).
	\end{align*}
		 This proves that  $\mbox{cl}(\overline{\mathcal{R}}(\Theta_1,\Theta_2))\subseteq R^{(\infty)}(\Theta_1,\Theta_2)$.

	For any random variable $U^{(n)}$, encoding functions $f^{(n)}_{U^{(n)}}:\mathcal{X}^n\rightarrow\mathcal{C}^{(n)}_{U^{(n)}}$, $f^{(n)}_{U^{(n)},f^{(n)}_{U^{(n)}}(X^n)}:\mathcal{X}^n\rightarrow\mathcal{C}^{(n)}_{U^{(n)},f^{(n)}_{U^{(n)}}(X^n)}$ and decoding functions $g^{(n)}_{U^{(n)},D_1,P_1}:\mathcal{C}^{(n)}_{U^{(n)}}\rightarrow\hat{\mathcal{X}}^n$, $(D_1,P_1)\in\Theta_1$, $g^{(n)}_{U^{(n)},f^{(n)}_U(X^n),D_2,P_2}:\mathcal{C}_{U^{(n)},f^{(n)}_{U}(X^n)}\rightarrow\hat{\mathcal{X}}^n$, $(D_2,P_2)\in\Theta_2$, satisfying  (\ref{eq:seqasympD}) and (\ref{eq:seqasympP}), we have
	\begin{align*}
	\frac{1}{n}\mathbb{E}[\ell(f^{(n)}_{U^{(n)}}(X^n))]&\geq \frac{1}{n}H(f^{(n)}_{U^{(n)}}(X^n)|U^{(n)})\\
	&=\frac{1}{n}I(X^n;f^{(n)}_{U^{(n)}}(X^n)|U^{(n)})\\
	&=\frac{1}{n}I(X^n;f^{(n)}_{U^{(n)}}(X^n),U^{(n)})\\
	&=\frac{1}{n}\sum\limits_{i=1}^nI(X(i);f^{(n)}_{U^{(n)}}(X^n),U^{(n)}|X^{i-1})\\
	&=\frac{1}{n}\sum\limits_{i=1}^nI(X(i);f^{(n)}_{U^{(n)}}(X^n),U^{(n)},X^{i-1})\\
	&\geq \frac{1}{n}\sum\limits_{i=1}^nI(X(i);f^{(n)}_{U^{(n)}}(X^n),U^{(n)})\\
	&=I(X(T);f^{(n)}_{U^{(n)}}(X^n),U^{(n)}|T)\\
	&=I(X(T);f^{(n)}_{U^{(n)}}(X^n),U^{(n)},T)\\
	&=I(X(T);Z_1)
	\end{align*}
	and
	\begin{align*}
	&\frac{1}{n}\mathbb{E}[\ell(f^{(n)}_{U^{(n)}}(X^n))]+\frac{1}{n}\mathbb{E}[\ell(f^{(n)}_{U^{(n)},f^{(n)}_{U^{(n)}}(X^n)}(X^n))]\\
	&\geq \frac{1}{n}H(f^{(n)}_{U^{(n)}}(X^n)|U^{(n)})+\frac{1}{n}H(f^{(n)}_{U^{(n)},f^{(n)}_U(X^n)}(X^n)|U^{(n)},f^{(n)}_{U^{(n)}}(X^n))\\
	&\geq \frac{1}{n}I(X^n;f^{(n)}_{U^{(n)}}(X^n)|U^{(n)})+\frac{1}{n}I(X^n;f^{(n)}_{U,f_U(X^n)}(X^n)|U^{(n)},f^{(n)}_{U^{(n)}}(X^n))\\
	&= \frac{1}{n}I(X^n;f^{(n)}_{U^{(n)}}(X^n),f^{(n)}_{U^{(n)},f^{(n)}_{U^{(n)}}(X^n)}(X^n)|U^{(n)})\\
	&= \frac{1}{n}I(X^n;f^{(n)}_{U^{(n)}}(X^n),f^{(n)}_{U^{(n)},f^{(n)}_{U^{(n)}}(X^n)}(X^n),U^{(n)})\\
	&= \frac{1}{n}\sum\limits_{i=1}^nI(X(i);f^{(n)}_{U^{(n)}}(X^n),f^{(n)}_{U^{(n)},f^{(n)}_{U^{(n)}}(X^n)}(X^n),U^{(n)}|X^{i-1})\\
		&= \frac{1}{n}\sum\limits_{i=1}^nI(X(i);f^{(n)}_{U^{(n)}}(X^n),f^{(n)}_{U^{(n)},f^{(n)}_{U^{(n)}}(X^n)}(X^n),U^{(n)},X^{i-1})\\
		&\geq \frac{1}{n}\sum\limits_{i=1}^nI(X(i);f^{(n)}_{U^{(n)}}(X^n),f^{(n)}_{U^{(n)},f^{(n)}_{U^{(n)}}(X^n)}(X^n),U^{(n)})\\
			&= I(X(T);f^{(n)}_{U^{(n)}}(X^n),f^{(n)}_{U^{(n)},f^{(n)}_{U^{(n)}}(X^n)}(X^n),U^{(n)}|T)\\
				&= I(X(T);f^{(n)}_{U^{(n)}}(X^n),f^{(n)}_{U^{(n)},f^{(n)}_{U^{(n)}}(X^n)}(X^n),U^{(n)},T)\\
			&= I(X(T);Z_1,Z_2),
	\end{align*}
	where $T$ is uniformly distributed over $\{1,\cdots,n\}$ and is independent of $(X^n,U^{(n)})$, and we define $Z_1=(f^{(n)}_{U^{(n)}}(X^n),U^{(n)},T)$ and $Z_2=(f^{(n)}_{U^{(n)}}(X^n),f^{(n)}_{U^{(n)},f^{(n)}_{U^{(n)}}(X^n)}(X^n),U^{(n)},T)$.
	Since
	\begin{align*}
	&p_{X(T)}=p_X,\\
	&\mathbb{E}[\Delta(X(T),\hat{X}_{k,D_k,P_k}(T))]=\frac{1}{n}\sum\limits_{i=1}^n\mathbb{E}[\Delta(X(i),\hat{X}_{k,D_k,P_k}(i))]\leq D_k,\quad k=1,2,\\
	&d(p_{X(T)},p_{\hat{X}_{k,D_k,P_k}(T)})=d\left(p_{X},\frac{1}{n}\sum\limits_{i=1}^np_{\hat{X}_{k,D_k,P_k}(i)}\right)\leq P_k,\quad k=1,2,
	\end{align*} 
	and $\hat{X}_{k,D_k,P_k}(T)$ is a function of $Z_k$, $k=1,2$,
	we must have $(R_1,R_2)\in\overline{\mathcal{R}}(\Theta_1,\Theta_2)$ for any $(R_1,R_2)$ with $R_1\geq\frac{1}{n}\mathbb{E}[\ell(f^{(n)}_{U}(X^n))]$ and $R_2\geq \frac{1}{n}\mathbb{E}[\ell(f^{(n)}_{U,f^{(n)}_U(X^n)}(X^n))]$. This completes the proof.
\end{proof}
\begin{definition}\label{def:refinement}
We say that $\Theta_1$ can be successively refined to $\Theta_2$ if $(R(\Theta_1),R(\Theta_2)-R(\Theta_1))\in \mbox{cl}(\overline{\mathcal{R}}(\Theta_1,\Theta_2))$
\end{definition}



\begin{remark}\label{rem:refinement}
To show the asymptotic feasibility of successive refinement from $\Theta_1$ to $\Theta_2$, it suffices to find $p_{Z_1,Z_2|X}$ such that 
\begin{align*}
    I(X;Z_1)= R(\Theta_1), \quad I(X;Z_1,Z_2)= R(\Theta_2),
\end{align*}
and for any $(D_1,P_1)\in\Theta_1$ and $(D_2,P_2)\in\Theta_2$, there exists 
\begin{equation*}
    p_{\hat{X}_{1,D_1,P_1}|Z_1} \quad \text{and} \quad p_{\hat{X}_{2,D_2,P_2}|Z_2}
\end{equation*}
satisfying 
\begin{gather*}
    \mathbb{E}[\Delta(X;\hat{X}_{1,D_1,P_1})]\leq D_1, \quad \mathbb{E}[\Delta(X;\hat{X}_{2,D_2,P_2})]\leq D_2, \\
    d(p_X,p_{\hat{X}_{1,D_1,P_1}})\leq P_1, \quad d(p_X,p_{\hat{X}_{2,D_2,P_2}})\leq P_2.
\end{gather*}
\end{remark}

In the Gaussian case, it is easy to show that successive refinement from $\Theta_1$ to $\Theta_2$ is always asymptotically feasible for $R(\Theta_2)\geq R(\Theta_1)$.

\begin{theorem}\label{thm:gaussian_refinability}
Let $X\sim\mathcal{N}(\mu_X,\sigma^2_X)$ be a scalar Gaussian source and assume MSE and $W^2_2(\cdot,\cdot)$ losses. Let $\Theta_1$ and $\Theta_2$ be arbitrary non-empty sets of $(D,P)$ pairs with $R(\Theta_1)\leq R(\Theta_2)$. Then $(R(\Theta_1),R(\Theta_2)-R(\Theta_1))\in\mathcal{R}^{(\infty)}(\Theta_1,\Theta_2)$, i.e., successive refinement from $\Theta_1$ to $\Theta_2$ is  feasible.
\end{theorem}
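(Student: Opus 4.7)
The plan is to invoke Remark \ref{rem:refinement}, which reduces asymptotic feasibility of successive refinement from $\Theta_1$ to $\Theta_2$ to the construction of a single joint distribution $p_{Z_1, Z_2 \mid X}$ satisfying $I(X;Z_1) = R(\Theta_1)$, $I(X; Z_1, Z_2) = R(\Theta_2)$, and such that for every $(D_k, P_k) \in \Theta_k$ there exists a conditional $p_{\hat{X}_{k, D_k, P_k} \mid Z_k}$ meeting the corresponding distortion and perception constraints. In the Gaussian setting Theorem \ref{thm:pd_gaussian_universality} yields $R(\Theta_k) = \sup_{(D,P) \in \Theta_k} R(D,P)$, so it suffices to realize each $Z_k$ as a scalar Gaussian representation of $X$ with the prescribed mutual information.

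To realize such a pair, I would build a nested Gaussian test channel. Write $R_k = R(\Theta_k)$ and choose $\sigma_2^2 > 0$ so that $Z_2 = X + N_2$ with $N_2 \sim \N(0, \sigma_2^2)$ independent of $X$ satisfies $I(X;Z_2) = R_2$; explicitly, $\sigma_2^2 = \sigma_X^2/(2^{2R_2}-1)$. Next, choose $\sigma_1^2 \geq 0$ so that $Z_1 = Z_2 + N_1$ with $N_1 \sim \N(0, \sigma_1^2)$ independent of $(X, N_2)$ satisfies $I(X;Z_1) = R_1$; the hypothesis $R_1 \leq R_2$ is precisely what makes such a $\sigma_1^2$ exist (with $\sigma_1^2 = \infty$ or a degenerate constant $Z_1$ handling $R_1 = 0$). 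By construction $X \leftrightarrow Z_2 \leftrightarrow Z_1$ is a Markov chain, so $I(X; Z_1, Z_2) = I(X; Z_2) = R_2$, and both $(X, Z_1)$ and $(X, Z_2)$ are jointly Gaussian.

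Decoder existence at each layer then follows from two applications of Theorem \ref{thm:pd_gaussian_universality}: since $(X, Z_k)$ is jointly Gaussian with $I(X; Z_k) = R_k = \sup_{(D,P) \in \Theta_k} R(D,P)$, the theorem yields $\Theta_k \subseteq \Omega(p_{Z_k \mid X})$, delivering the required $p_{\hat{X}_{k, D_k, P_k} \mid Z_k}$ for every $(D_k, P_k) \in \Theta_k$. Assembling the pieces, Remark \ref{rem:refinement} (together with Theorem \ref{thm:refinement_asymptotic}, which identifies $\mathcal{R}^{(\infty)}(\Theta_1, \Theta_2)$ with $\mbox{cl}(\overline{\mathcal{R}}(\Theta_1, \Theta_2))$) places $(R(\Theta_1), R(\Theta_2) - R(\Theta_1))$ in $\mathcal{R}^{(\infty)}(\Theta_1, \Theta_2)$.

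The main obstacle, such as it is, lies in handling degenerate cases ($R_1 = 0$, $R_1 = R_2$, or suprema not attained on $\Theta_k$); each is resolved by an appropriate limit or by collapsing a stage of the two-stage code. The substantive content is the observation that the single-representation universality of Theorem \ref{thm:pd_gaussian_universality} is \emph{compatible} with a Gaussian refinement structure: one can nest two such representations along the rate axis via additive Gaussian noise without sacrificing distortion-perception universality at either layer, which is the Gaussian-specific fact that makes this theorem stronger than what holds for general sources.
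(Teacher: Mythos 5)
Your proposal is correct and follows essentially the same route as the paper's proof: construct a nested jointly Gaussian chain $X \leftrightarrow Z_2 \leftrightarrow Z_1$ with $I(X;Z_2)=R(\Theta_2)$ and $I(X;Z_1)=R(\Theta_1)$, apply Theorem~\ref{thm:pd_gaussian_universality} at each layer to get $\Theta_k\subseteq\Omega(p_{Z_k|X})$, and invoke Remark~\ref{rem:refinement} together with Theorem~\ref{thm:refinement_asymptotic}. The only difference is cosmetic: you build forward test channels ($Z_2=X+N_2$, $Z_1=Z_2+N_1$), whereas the paper builds backward ones ($X=Z_2+N_2$, $Z_2=Z_1+N_1$, with $Z_1$, $N_1$, $N_2$ mutually independent); the latter parametrization is slightly cleaner at $R(\Theta_1)=0$ since $Z_1$ degenerates to a constant rather than requiring $\sigma_1^2\to\infty$, but both give the same Markov structure and mutual informations and thus the same conclusion.
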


\begin{proof}[Proof of Theorem \ref{thm:gaussian_refinability}]
Let $Z_2=Z_1+N_1$ and $X=Z_2+N_2$, where 
\begin{gather*}
    Z_1\sim\mathcal{N}(\mu_X,\sigma^2_X(1-2^{-2R(\Theta_1)})), \\
    N_1\sim\mathcal{N}(0,\sigma^2_X(2^{-2R(\Theta_1)}-2^{-2R(\Theta_2)})), \\
    N_2\sim\mathcal{N}(0,\sigma^2_X2^{-2R(\Theta_2)})
\end{gather*}
are mutually independent. It is easy to verify that 
$I(X;Z_1)=R(\Theta_1)$ and $I(X;Z_1,Z_2)=I(X;Z_2)=R(\Theta_2)$.
In view of Theorem \ref{thm:pd_gaussian_universality}, we have $\Theta_i\subseteq\Omega(p_{Z_i|X})=\Omega(R(\Theta_i))$, $i=1,2$. So successive refinement from $\Theta_1$ to $\Theta_2$ is indeed asymptotically feasible. 
\end{proof}

\begin{theorem}[Approximate refinability under the iRDPF]\label{thm:approx_refinement}
Assume MSE loss and any perception measure $d(\cdot,\cdot)$. Let $m$ be the dimension of $X$ and
\begin{align*}
\delta_R(\sigma^2_N)=R(\Theta_1)-R(\frac{\sigma^2_X\sigma^2_N}{\sigma^2_X+\sigma^2_N},\infty)+\frac{m}{2}\log\frac{(D^*_1+\sigma^2_N)(D^*_2+\sigma^2_N)}{\sigma^4_N},
\end{align*}
where
\begin{align*}
    D^*_1=\inf\{D'_1: (D'_1,P'_1)\in\Theta_1\mbox{ for some }P'_1\}, \\
    D^*_2=\inf\{D'_2: (D'_2,P'_2)\in\Theta_1\mbox{ for some }P'_2\}.
\end{align*}
Then for any non-empty $\Theta_1$ and $\Theta_2$,
\begin{equation*}
    (R(\Theta_1), R(\Theta_2)-R(\Theta_1)+\inf_{\sigma^2_N>0}\delta_R(\sigma^2_N))\in\mathcal{R}^{(\infty)}(\Theta_1,\Theta_2).
\end{equation*} 
\end{theorem}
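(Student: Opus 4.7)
By Theorem~\ref{thm:refinement_asymptotic}, $\mathcal{R}^{(\infty)}(\Theta_1,\Theta_2)=\mbox{cl}(\overline{\mathcal{R}}(\Theta_1,\Theta_2))$, so the plan is to exhibit, for every $\sigma_N^2>0$ and $\epsilon>0$, a joint $p_{Z_1,Z_2|X}$ with $Z_k$ serving $\Theta_k$ and satisfying $I(X;Z_1)\le R(\Theta_1)+\epsilon$ together with $I(X;Z_1,Z_2)\le R(\Theta_2)+\delta_R(\sigma_N^2)+\epsilon$, and then to pass to the closure and infimise over $\sigma_N^2$.

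\textbf{Ingredients and construction.} For each $\epsilon>0$, pick near-optimal representations $Z_k^\dagger\in\mathcal{P}_{Z|X}(\Theta_k)$ with $I(X;Z_k^\dagger)\le R(\Theta_k)+\epsilon$ and set $\tilde{X}_k=\E{X|Z_k^\dagger}$. Some decoder from $Z_k^\dagger$ achieves distortion at most $D_k^*+\epsilon$ and the conditional mean minimises MSE, so $\E{\|X-\tilde{X}_k\|^2}\le D_k^*+\epsilon$. Following the noise-embedding idea of \cite{lastras2001all}, introduce $N\sim\mathcal{N}(0,\sigma_N^2 I_m)$ independent of $(X,Z_1^\dagger,Z_2^\dagger)$ and define the joint pair by $Z_1=Z_1^\dagger$ and $Z_2$ designed so that (i) $Z_2$ encodes $Z_2^\dagger$ (hence $Z_2$ serves $\Theta_2$ via the original $\Theta_2$-decoder) and (ii) $Z_2$ carries a noisy bridge built from $\tilde{X}_1,\tilde{X}_2,N$ that couples the two stages.

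\textbf{Rate bookkeeping.} The first-stage rate is immediate: $I(X;Z_1)=I(X;Z_1^\dagger)\le R(\Theta_1)+\epsilon$. For the sum rate, the idea is to decompose $I(X;Z_1,Z_2)$ by the chain rule and control three contributions. First, the Shannon lower bound applied to the Gaussian-augmented source $W=X+N$ gives $I(X;X+N)\ge R\!\bigl(\tfrac{\sigma_X^2\sigma_N^2}{\sigma_X^2+\sigma_N^2},\infty\bigr)$, yielding the $-R(\tfrac{\sigma_X^2\sigma_N^2}{\sigma_X^2+\sigma_N^2},\infty)$ term in $\delta_R$. Second, the maximum-entropy inequalities $h(\tilde{X}_k+N)\le\tfrac{m}{2}\log(2\pi e(D_k^*+\sigma_N^2))$ together with $h(N)=\tfrac{m}{2}\log(2\pi e\sigma_N^2)$ generate the $\tfrac{m}{2}\log\tfrac{(D_1^*+\sigma_N^2)(D_2^*+\sigma_N^2)}{\sigma_N^4}$ contribution. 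Third, the trivial bound $I(X;Z_1^\dagger)\le R(\Theta_1)+\epsilon$ supplies the $+R(\Theta_1)$ piece. Adding the $R(\Theta_2)+\epsilon$ contribution from $I(X;Z_2^\dagger)$ and sending $\epsilon\downarrow0$ gives $I(X;Z_1,Z_2)\le R(\Theta_2)+\delta_R(\sigma_N^2)$.

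\textbf{Main obstacle.} The delicate point is engineering the noisy bridge in $Z_2$ so that the three contributions above add into $\delta_R(\sigma_N^2)$ rather than compounding into something larger; the correct Lastras--Berger-style coupling is what makes the joint mutual information split cleanly. The Shannon-lower-bound step is also subtle for a general, possibly non-Gaussian source: it relies on $R(\cdot,\infty)$ coinciding with the classical rate-distortion function of the source, and it uses the MMSE bounds from the ingredients step so that the maximum-entropy estimates on $h(\tilde{X}_k+N)$ are available. Once these pieces are in place, infimising over $\sigma_N^2>0$ and closing up in the rate region completes the proof.
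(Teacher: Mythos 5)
Your proposal has the right ingredients at a high level — the reduction via Theorem~\ref{thm:refinement_asymptotic}, the choice of near-optimal $Z_1,Z_2$, and the use of an auxiliary Gaussian $N$ in the spirit of Lastras--Berger — but the construction you describe for $Z_2$ is a genuine misconception, and it would invalidate the rate bound.

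\textbf{The noisy bridge is never sent.} You propose to design $Z_2$ so that it ``encodes $Z_2^\dagger$ \emph{and} carries a noisy bridge built from $\tilde{X}_1,\tilde{X}_2,N$.'' If $Z_2$ carried any additional information beyond $Z_2^\dagger$, that could only \emph{increase} $I(X;Z_1,Z_2)$, which defeats the purpose. In the paper's proof the construction is much simpler: one takes $Z_1$ and $Z_2$ to be (near) optimizers for $R(\Theta_1)$ and $R(\Theta_2)$ respectively and couples them on a common probability space so that $Z_1\leftrightarrow X\leftrightarrow Z_2$ form a Markov chain — nothing else is added to $Z_2$. The noisy variable $X+N$ appears exclusively inside the analysis: one writes $I(X;Z_1,Z_2)-I(X;Z_2)=I(X;Z_1|Z_2)\le I(X;Z_1,X+N|Z_2)=I(X;X+N|Z_2)+I(X;Z_1|Z_2,X+N)$, and the Markov chain is precisely what lets one continue $I(X;Z_1|Z_2,X+N)\le I(X,Z_2;Z_1|X+N)=I(X;Z_1|X+N)$ (since $Z_2$ is conditionally independent of $Z_1$ given $X$). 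From there one expands $I(X;Z_1|X+N)=I(X;X+N|Z_1)+I(X;Z_1)-I(X;X+N)$, uses $I(X;Z_1)\le R(\Theta_1)+\epsilon$, and lower-bounds $I(X;X+N)\ge R(\tfrac{\sigma_X^2\sigma_N^2}{\sigma_X^2+\sigma_N^2},\infty)$ via an explicit affine estimator achieving that MSE, not the Shannon lower bound as you wrote. The ``main obstacle'' you flag — making the three contributions ``add into $\delta_R$ rather than compound'' — dissolves once one realizes the chain-rule decomposition above does this bookkeeping automatically under the Markov coupling; there is no bridge to engineer.

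\textbf{A secondary slip.} You write $h(\tilde{X}_k+N)\le\tfrac{m}{2}\log(2\pi e(D_k^*+\sigma_N^2))$ where $\tilde{X}_k=\E{X|Z_k}$. The second moment of $\tilde{X}_k$ is $\sigma_X^2-\mathrm{MMSE}$, which is \emph{not} bounded above by $D_k^*$; it is the residual $X-\tilde{X}_k$ that has mean-square at most $D_k^*$. The paper bounds $I(X-\E{X|Z_k};\,X-\E{X|Z_k}+N)\le\tfrac{m}{2}\log\tfrac{D_k^*+\sigma_N^2}{\sigma_N^2}$ using the maximum-entropy property of the Gaussian applied to the \emph{error}, not to the conditional mean. (There is also a minor normalization mismatch: the paper takes $N\sim\mathcal{N}(0,\tfrac{\sigma_N^2}{m}I_m)$ so that $\sigma_N^2$ is the \emph{total} noise power, consistent with $\sigma_X^2=\E{\|X-\mu_X\|^2}$; under your $N\sim\mathcal{N}(0,\sigma_N^2 I_m)$ the $\tfrac{m}{2}\log$ factors would need to be reworked.) With these two corrections — drop the bridge, couple by the Markov chain and use $X+N$ purely as an auxiliary in the mutual-information chain, and bound the entropy of the residuals rather than the conditional means — your outline aligns with the paper's argument.
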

\begin{remark}\label{rem:refinement_loss_dimensional}
We have
\begin{align*}
\delta_R(\frac{\sigma^2_XD^*_1}{\sigma^2_X-D^*_1})=R(\Theta_1)-R(D^*_1,\infty)+\frac{m}{2}\log\frac{(\sigma^2_X(D^*_1+D^*_2)-D^*_1D^*_2)(2\sigma^2_X-D^*_1)}{\sigma^4_XD^*_1}.
\end{align*}
In particular, $\delta_R(\frac{\sigma^2_XD^*_1}{\sigma^2_X-D^*_1})\leq m$ when $R(\Theta_1)=R(D^*_1,\infty)$ and $D^*_2\leq D^*_1$. In the scalar case, this shows that the penalty for refinement (as opposed to sending all bits at once) is not more than 1 bit.
\end{remark}

\begin{proof}[Proof of Theorem \ref{thm:approx_refinement}]
This proof is an adaptation of the result from \citet{lastras2001all}.
For any $\epsilon>0$, we can find $Z_k$ with $I(X;Z_k)\leq R(\Theta_k)+\epsilon$ such that for any $(D_k,P_k)\in\Theta_k$, there exists $p_{\hat{X}_{k,D_k,P_k}|Z_k}$ satisfying $\mathbb{E}[\|X-\hat{X}_{k,D_k,P_k}\|^2]\leq D_k$ and $d(p_X,p_{\hat{X}_{k,D_k,P_k}})\leq P_k$, $k=1,2$. We define $X$, $Z_1$, and $Z_2$ in the same probability space such that $Z_1\leftrightarrow X\leftrightarrow Z_2$ form a Markov chain. It suffices to show that $I(X;Z_1,Z_2)\leq R(\Theta_2)+\delta_R(\sigma^2_N)+2\epsilon$ for any $\sigma^2_N>0$. 


Let $N\sim\mathcal{N}(0,\frac{\sigma^2_N}{m}I_m)$ be an $m$-dimensional (multivariate) Gaussian random variable independent of $(X,Z_1,Z_2)$.
We have
\begin{align}
I(X;Z_1,Z_2)-I(X;Z_2)&=I(X;Z_1|Z_2)\nonumber\\
&\leq I(X;Z_1, X+N|Z_2)\nonumber\\
&=I(X;X+N|Z_2)+I(X;Z_1|Z_2,X+N).\label{eq:sub}
\end{align}
Note that
\begin{align}
I(X;X+N|Z_2)&=I(X-\mathbb{E}[X|Z_2];X-\mathbb{E}[X|Z_2]+N|Z_2)\nonumber\\&\leq I(Z_2, X-\mathbb{E}[X|Z_2];X-\mathbb{E}[X|Z_2]+N)\nonumber\\
&=I(X-\mathbb{E}[X|Z_2];X-\mathbb{E}[X|Z_2]+N)\nonumber\\
&\leq\frac{m}{2}\log\frac{D^*_2+\sigma^2_N}{\sigma^2_N}\label{eq:sub1}
\end{align}
and
\begin{align}
&I(X;Z_1|Z_2,X+N)\nonumber\\
&\leq I(X,Z_2;Z_1|X+N)\nonumber\\
&=I(X;Z_1|X+N)\nonumber\\
&=I(X;Z_1,X+N)-I(X;X+N)\nonumber\\
&=I(X;X+N|Z_1)+I(X;Z_1)-I(X;X+N)\nonumber\\
&\leq I(X;X+N|Z_1)+R(\Theta_1)+\epsilon-I(X;X+N)\nonumber\\
&\stackrel{(\beta)}{\leq} I(X;X+N|Z_1)+R(\Theta_1)+\epsilon-R(\frac{\sigma^2_X\sigma^2_N}{\sigma^2_X+\sigma^2_N},\infty)\nonumber\\
&=I(X-\mathbb{E}[X|Z_1];X-\mathbb{E}[X|Z_1]+N|Z_1)+R(\Theta_1)-R(\frac{\sigma^2_X\sigma^2_N}{\sigma^2_X+\sigma^2_N},\infty)+\epsilon\nonumber\\
&\leq I(Z_1,X-\mathbb{E}[X|Z_1];X-\mathbb{E}[X|Z_1]+N)+R(\Theta_1)-R(\frac{\sigma^2_X\sigma^2_N}{\sigma^2_X+\sigma^2_N},\infty)+\epsilon\nonumber\\
&=I(X-\mathbb{E}[X|Z_1];X-\mathbb{E}[X|Z_1]+N)+R(\Theta_1)-R(\frac{\sigma^2_X\sigma^2_N}{\sigma^2_X+\sigma^2_N},\infty)+\epsilon\nonumber\\
&\leq\frac{m}{2}\log\frac{D^*_1+\sigma^2_N}{\sigma^2_N}+R(\Theta_1)-R(\frac{\sigma^2_X\sigma^2_N}{\sigma^2_X+\sigma^2_N},\infty)+\epsilon\label{eq:sub2},
\end{align}
where $(\beta)$ is because $\mathbb{E}[\|X-\frac{\sigma^2_X}{\sigma^2_X+\sigma^2_N}(X+N)-\frac{\sigma^2_N}{\sigma^2_X+\sigma^2_N}\mu_X\|^2]=\frac{\sigma^2_X\sigma^2_N}{\sigma^2_X+\sigma^2_N}$ and consequently $I(X;X+N)\geq R(\frac{\sigma^2_X\sigma^2_N}{\sigma^2_X+\sigma^2_N},\infty)$. Substituting (\ref{eq:sub1}) and (\ref{eq:sub2}) into (\ref{eq:sub}) gives 
\begin{align*}
I(X;Z_1,Z_2)-I(X;Z_2)\leq\delta_R(\sigma^2_N)+\epsilon,
\end{align*}
which further implies
\begin{align*}
I(X;Z_1,Z_2)
\leq R(\Theta_2)+\delta_R(\sigma^2_N)+2\epsilon.
\end{align*}
This completes the proof.
\end{proof}

\newpage
\section{Experiments}
Training lasted 30 epochs for MNIST and 80 epochs for SVHN, and alternates between training the encoder and decoder with the critic fixed and training the critic with the encoder and decode fixed. The learning rate was decayed by a factor of 5 after 20 epochs for MNIST, and after 25 epochs for SVHN. All models were trained with the Adam optimizer. The batch size used was 64. All training was performed on a Tesla V100 GPU. Training a single model takes about 10 minutes and 30 minutes for MNIST and SVHN, respectively. We used the standard train/test splits.

\subsection{Comparison of Quantizers}

\begin{figure*}[!ht] 
    \vspace*{-4mm}
    \centering
    \subfigure[]{%
        \raisebox{-0.3cm}{\includegraphics[width=0.5\textwidth]{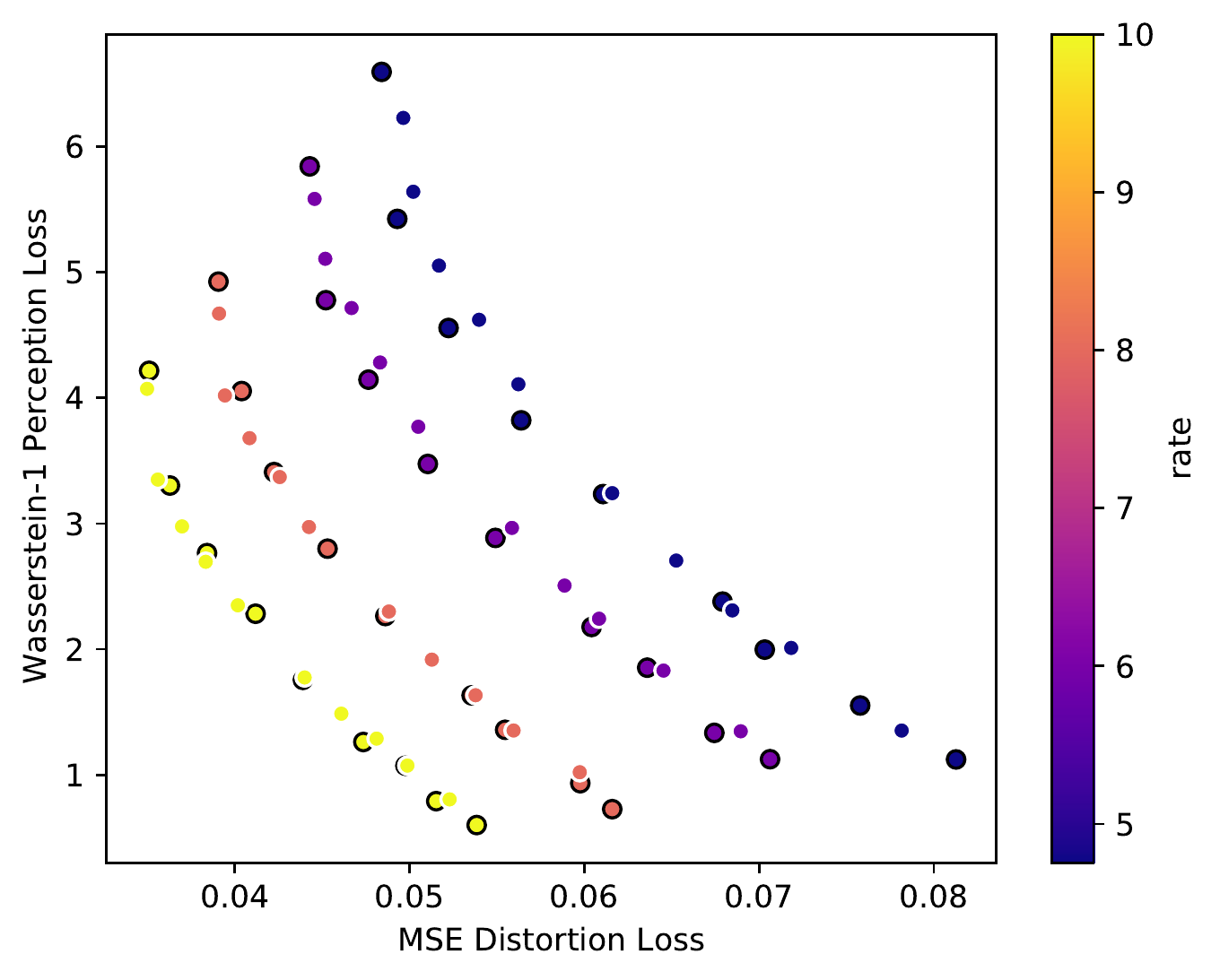}}
        \label{fig:rdp_plot_mnist_nq}
    } 
    \subfigure[]{%
        \includegraphics[width=0.4\textwidth]{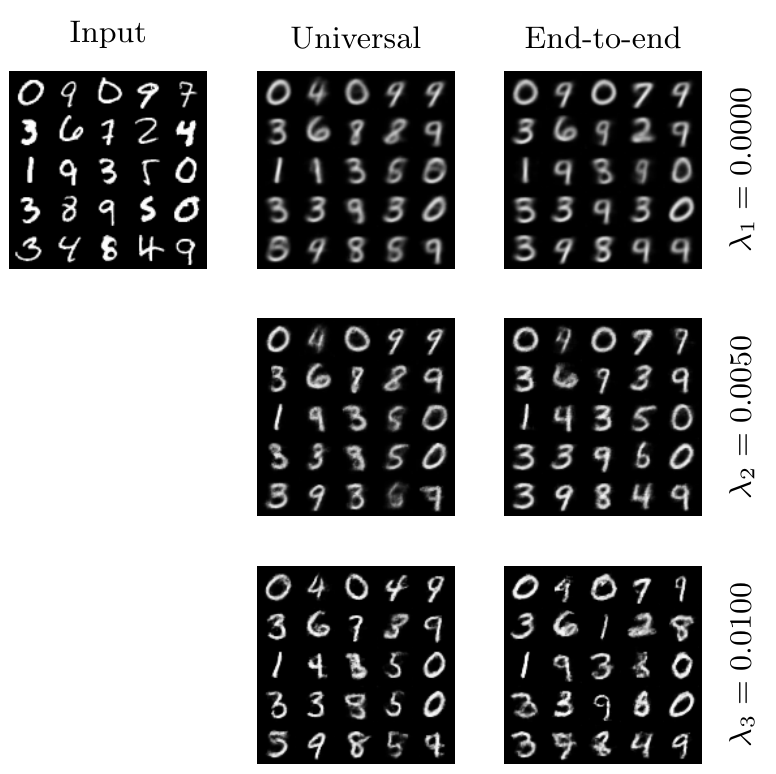}
        \label{fig:decoder_outputs_mnist_nq}
    } \\[-3ex]
    \subfigure[]{%
        \raisebox{-0.2cm}{\includegraphics[width=0.5\textwidth]{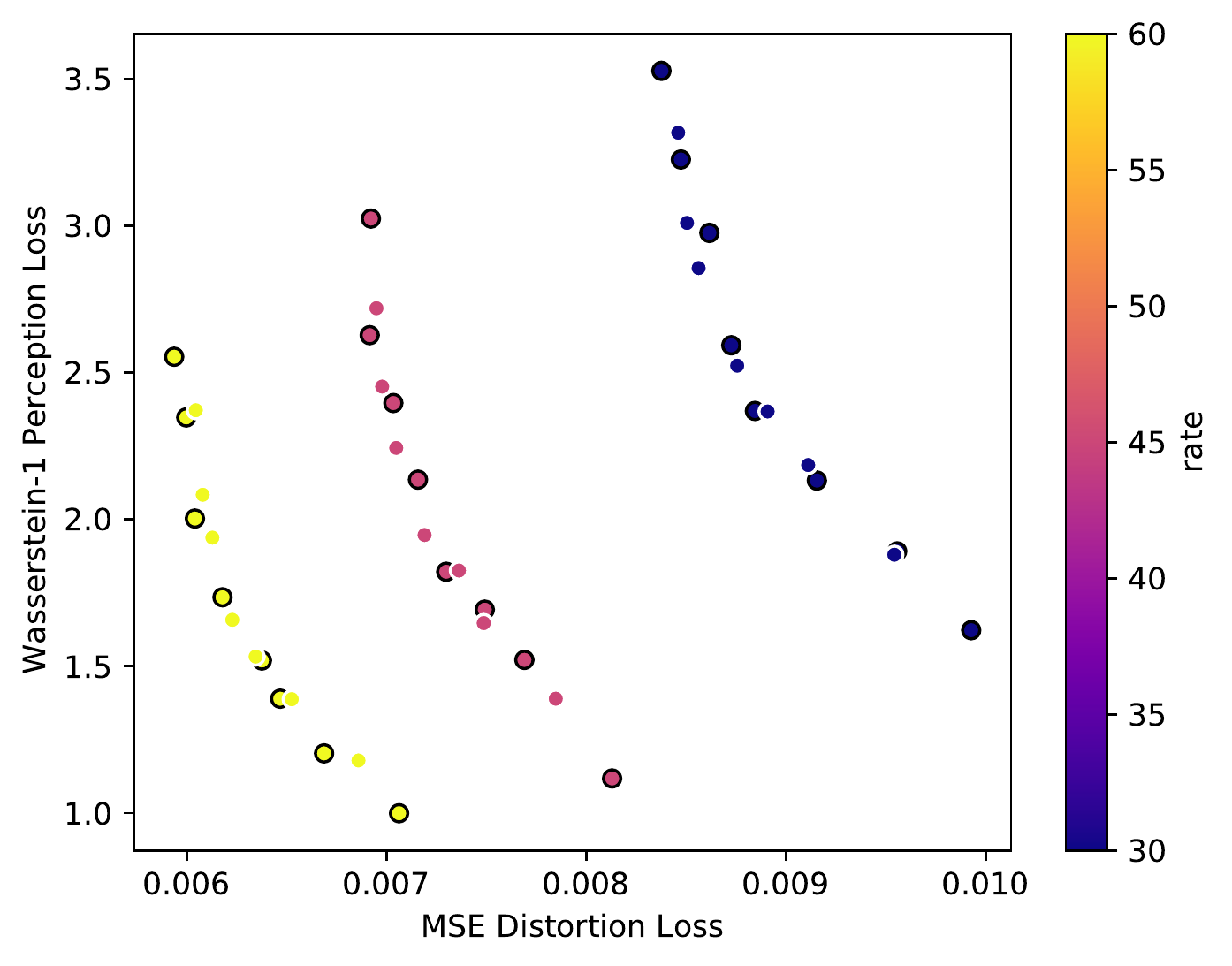}
        \label{fig:rdp_plot_svhn_nq}}
    } 
    \subfigure[]{%
        \includegraphics[width=0.4\textwidth]{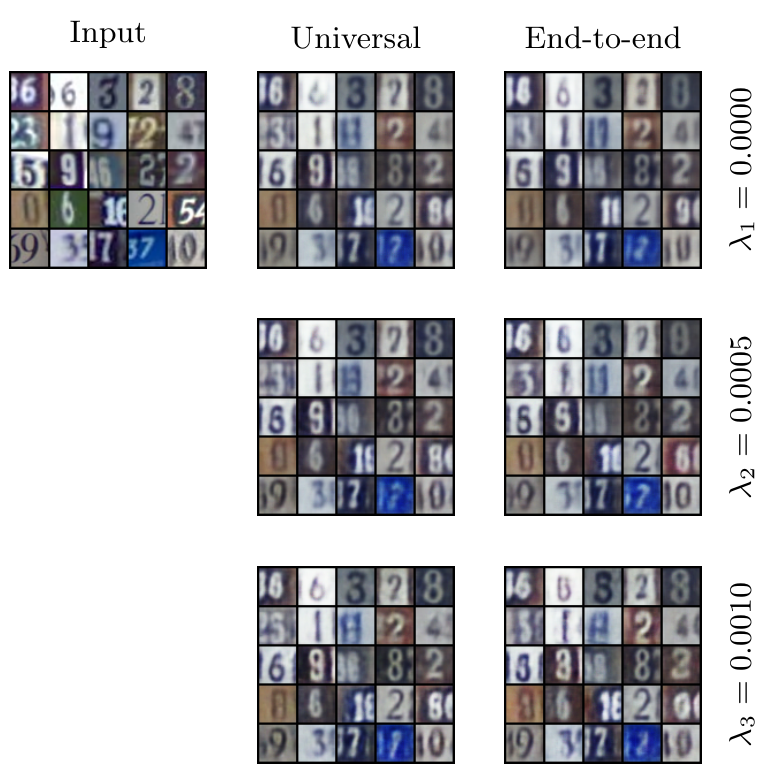}
        \label{fig:decoder_outputs_svhn_nq}
    }
    \vspace*{-2mm}
    \caption{\subref{fig:rdp_plot_mnist_nq} \subref{fig:rdp_plot_svhn_nq} Rate-distortion-perception tradeoffs for NQ. \subref{fig:decoder_outputs_mnist_nq} \subref{fig:decoder_outputs_svhn_nq} The visual quality of both the end-to-end and universal models are on average comparable for each $\lambda_i$ (MNIST: $R=6$, SVHN:$R=60$.)}
    \label{fig:changing_decoder_nq}
\end{figure*}

Let $\mathcal{C}$ be the set of quantization centers, each containing $L$ levels distributed uniformly between $[-1,+1]$ along each dimension $d$. Let $x$ be the input and $f(x)$ the output of the encoder before quantization. We compare the performance of deterministic quantization (DQ), universal quantization (UQ), and noisy quantization (NQ). All quantizers use a soft gradient estimator (equation (3) of \citet{mentzer2018conditional}) during backpropogation.

\textbf{Deterministic quantization (DQ)}. The sender computes
\begin{equation*}
    z = \argmin_{c \in \mathcal{C}} \,\norm{f(x) - c}
\end{equation*}
and sends $z$ to the receiver. The receiver decodes the image by passing $z$ through the decoder. This is the most straightforward method of quantization but lacks the stochasticity required to train an effective generative model.

\textbf{Quantization with noise added (NQ)}. The sender computes
\begin{equation*}
    z = \argmin_{c \in \mathcal{C}} \,\norm{f(x) - c}
\end{equation*}
and sends $z$ to the receiver. The receiver samples $u \sim U[-1/(L-1),+1/(L-1)]^d$ and decodes the image by passing $z+u$ through the decoder. Note that there is no information loss as the noise range is almost surely below the quantization interval. This scheme was used by \citet{blau2019rethinking}.

\textbf{Universal quantization (UQ) \cite{ziv1985universal, theis2021advantages}}. We assume the sender and receiver have access to $u \sim U[-1/(L-1),+1/(L-1)]^d$. The sender computes
\begin{equation*}
    z = \argmin_{c \in \mathcal{C}} \,\norm{f(x) + u - c}
\end{equation*}
and sends $z$ to the receiver. The receiver decodes the image by passing $z-u$ through the decoder. This quantization scheme produces stochastic input for the decoder while reducing the quantization error incurred by NQ. This is also known as a subtractive dither \cite{schuchman1964dither,gray1993dithered} in literature.

We demonstrate in Figure \ref{fig:changing_decoder_nq} that the NQ scheme is still able to produce universal representations within the operational tradeoff it achieves. The results of the comparison when optimizing only for MSE loss are given in Table \ref{tbl:compare_quantizers}. Both DQ and UQ perform better than NQ. Although DQ performs slightly better, UQ is still highly effective. 

\begin{table}[h]
    \centering
    \caption{Comparison of MSE distortion losses using deterministic quantization (DQ), universal quantization (UQ), and noisy quantization (NQ) when optimizing an end-to-end model only for distortion loss ($\lambda=0$) on MNIST.}
    \begin{tabular}[t]{ cccc }
    \specialrule{.1em}{.05em}{.05em}
    $R$ &  MSE (DQ) & MSE (UQ) & MSE (NQ)  \\
    \hline
    4.75 & 0.0442 & 0.0459 & 0.0484 \\
    6 & 0.0412 & 0.0426 & 0.0443  \\
    8 & 0.0358 & 0.0362 & 0.0391 \\
    10 & 0.0315 & 0.0324 & 0.0351  \\
    \hline
    \end{tabular}
    \label{tbl:compare_quantizers}
    \vspace{-2mm}
\end{table}

\subsection{Error Intervals}
We provide error intervals across 5 trials for a subset of the universality experiments given in Figure \ref{fig:changing_decoder} on MNIST here. Each trial consists of training a new end-to-end model ($\lambda=0.015$, $R=4.75$), then using the resultant encoder to train universal models across all tradeoff points. The results are very consistent across each trial.

\begin{table}[h]
    \centering
    \npdecimalsign{.}
    \nprounddigits{4}
    \npfourdigitnosep
    \centering
    \caption{MSE distorion losses across 5 trials.}
    \begin{tabular}[t]{ c*{10}{n{1}{3}} }
    \specialrule{.1em}{.05em}{.05em}
    $\lambda$ &  0 & 0.0025 & 0.004 & 0.005 & 0.006 & 0.008 & 0.009 & 0.01 & 0.011 & 0.013 \\
    \hline
    $\max$ & 0.04698133841 & 0.04767988871 & 0.04925610498 & 0.05073149751 & 0.05306199814 & 0.05826972425 & 0.06135429069 & 0.06476736069 & 0.06808048487 & 0.07285446425 \\
    $\min$ & 0.0466472581 & 0.04743135969 & 0.04898957784 & 0.05043101062 & 0.05260248482 & 0.05770776172 & 0.060615999 & 0.0639543136 & 0.06649355094 & 0.07151468347\\
    average & 0.04677975426 & 0.04752715255 & 0.04907283386 & 0.05063049545 & 0.05284103875 & 0.05793994342 & 0.06087566242 & 0.06434133798 & 0.06702717443 & 0.07203992605 \\
    \hline
    \end{tabular}
    \label{tbl:error_d}
    \vspace{-2mm}
\end{table}

\begin{table}[h]
    \centering
    \npdecimalsign{.}
    \nprounddigits{4}
    \npfourdigitnosep
    \centering
    \caption{Wasserstein-1 perception losses across 5 trials.}
    \begin{tabular}[t]{ c*{10}{n{1}{3}} }
    \specialrule{.1em}{.05em}{.05em}
    $\lambda$ &  0 & 0.0025 & 0.004 & 0.005 & 0.006 & 0.008 & 0.009 & 0.01 & 0.011 & 0.013 \\
    \hline
    $\max$ & 5.801371415 & 5.146362782 & 4.648241043 & 4.269188563 & 3.767230988 & 2.985510349 & 2.559560776 & 2.122129997 & 1.903276642 & 1.325606902 \\
    $\min$ & 5.578687032 & 5.061223666 & 4.58152167 & 4.181228956 & 3.757590453 & 2.892798026 & 2.51053079 & 2.001982013 & 1.684220393 & 1.19579049\\
    average & 5.712282213 & 5.108671824 & 4.617729378 & 4.221145217 & 3.763610379 & 2.944573466 & 2.525941372 & 2.077848427 & 1.829639093 & 1.271844578 \\
    \hline
    \end{tabular}
    \label{tbl:error_p}
    \vspace{-2mm}
\end{table}

\subsection{Refinement Experiments} \label{subsec:exp_sr}

So far, we have enforced the decoders in universal models to use only the representations produced by the universal encoder, producing a tradeoff curve along perception and distortion at fixed rate. We now consider the scenario where the rate is varied by designing \textit{refinement} models which generalize the universal models in the previous section by taking in extra bits through a (trainable) refining encoder in addition to the bits produced by the initial encoder.

Like the universal models, training the refinement models is broken into two analogous stages. The objective and procedure of the first stage is identical to that of the universal models and produces a universal encoder $f$ to be used across multiple models with frozen weights, and a low-rate decoder $g$. In the second stage, the refinement model introduces a new high-rate decoder $g_{1}^+$ building upon representations from both the universal encoder $f$ and a secondary refining encoder $f_{1}^+$. The refining encoder and decoder are both trained along with a critic $h_{1}^+$, while the universal encoder is held fixed. We use the alternating training procedure as with the universal models. Bits are sent in two stages so that either low rate or high rate reconstructions   
\begin{align}
    \hat{X}_{1}^{(1)}&=g(f(X)), \\
    \hat{X}_{1}^{(2)}&=g_{1}^+(f(X),f_{1}^+(X))
\end{align}
are possible. One may take the view that $f_{1}^+$ will embed auxiliary details about the input to supplement the information extracted by $f$. Since $f$ is held fixed while $g_{1}^+$ is being trained, we expect that there should be a performance gap between the refinement model and an end-to-end model with full flexibility in training an encoder. In Figure \ref{fig:RDP_Tradeoff_refinement}, we find that the gap is not sizeable in practice, with the visual quality of the refinement models similar to the end-to-end models of the same rate.

\begin{figure*}[!ht] 
    \vspace*{-4mm}
    \centering
    \subfigure[]{%
        \raisebox{0cm}{\includegraphics[width=0.4\textwidth]{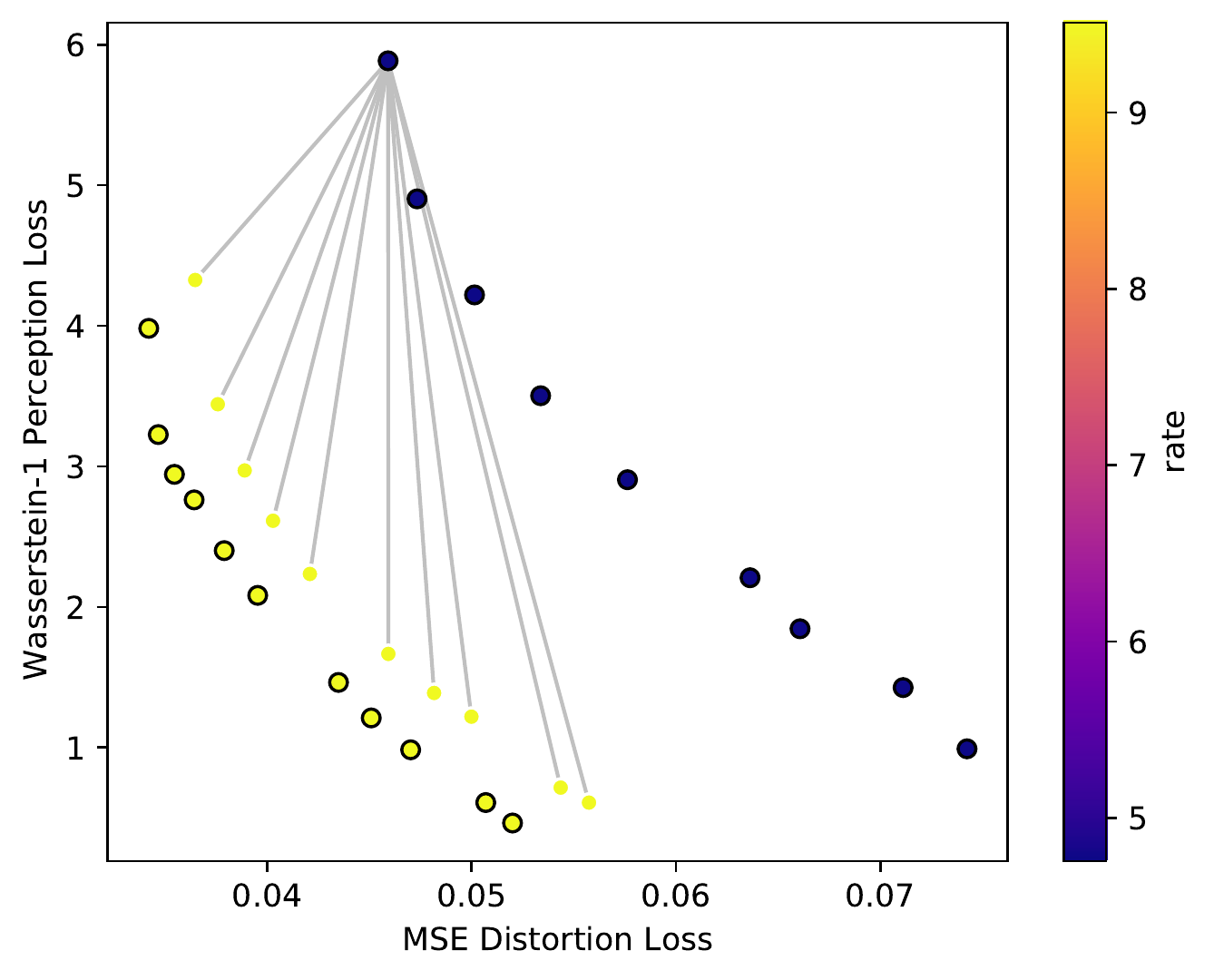}}
        \label{fig:rdp_plot_refinement_mnist}
    } 
    \subfigure[]{%
        \includegraphics[width=0.55\textwidth]{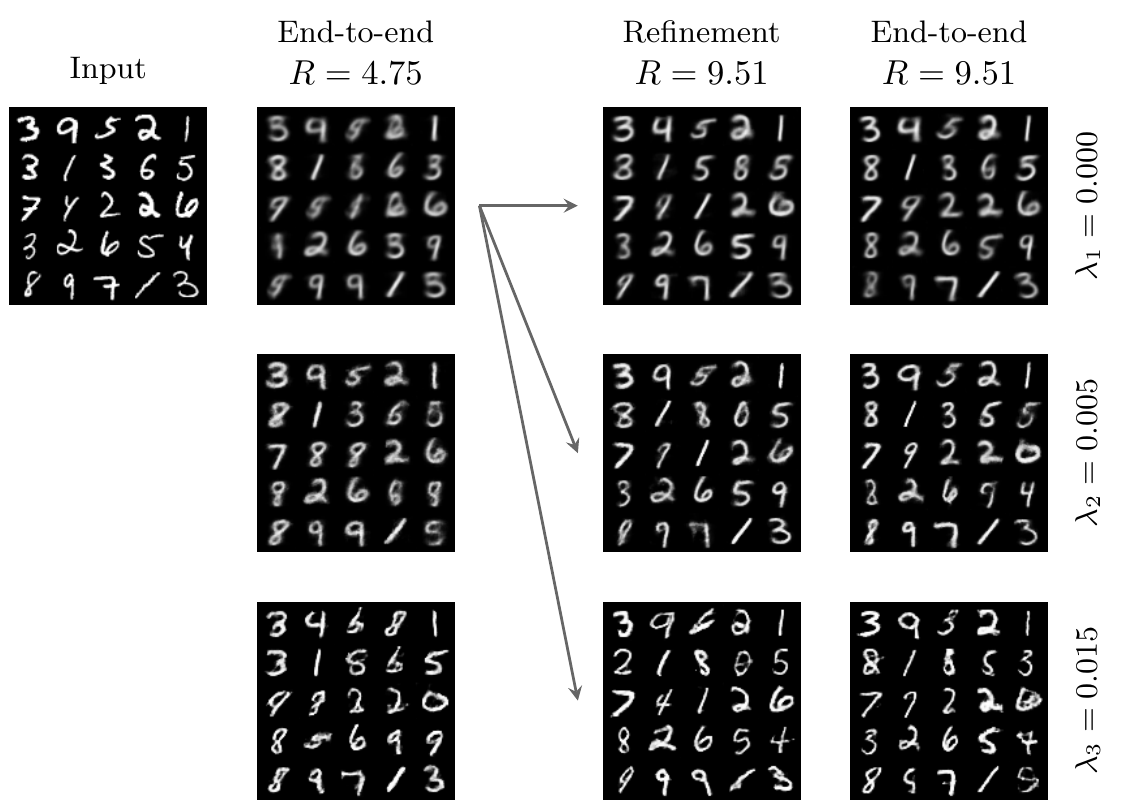}
        \label{fig:decoder_outputs_refinement_mnist}
    } \\[-2ex]
    \subfigure[]{%
        \raisebox{0cm}{\includegraphics[width=0.4\textwidth]{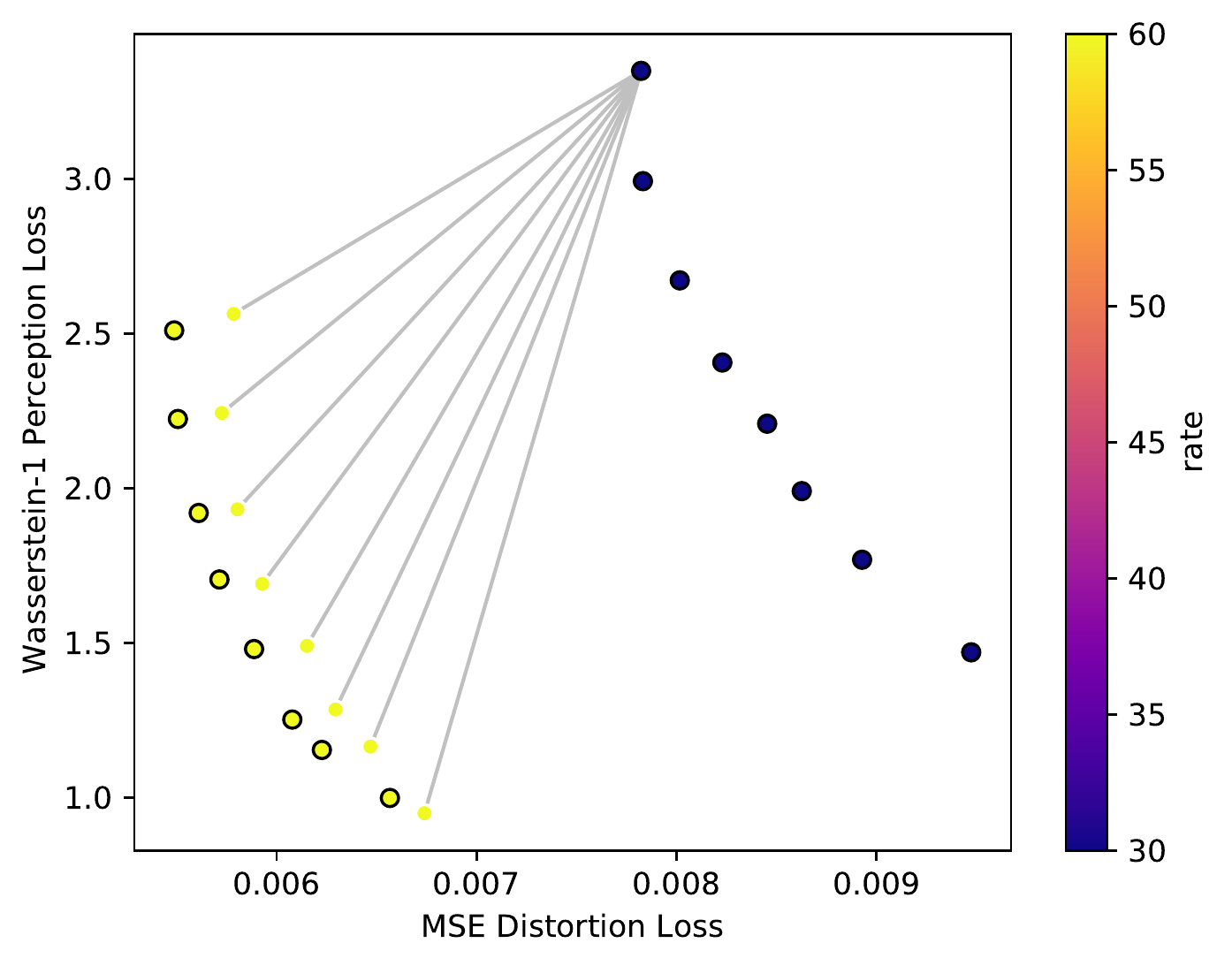}}
        \label{fig:rdp_plot_refinement_svhn}
    } 
    \subfigure[]{%
        \includegraphics[width=0.55\textwidth]{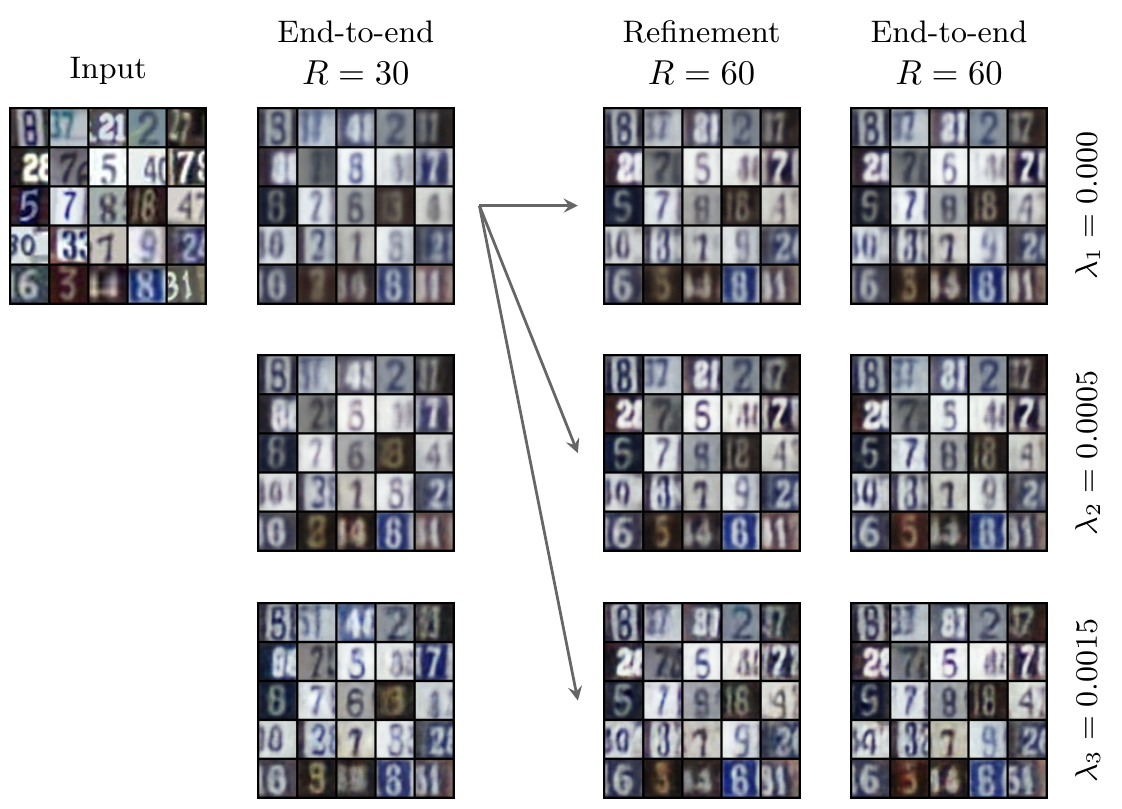}
        \label{fig:decoder_outputs_refinement_svhn}
    }
    \vspace*{-2mm}
    \caption{\subref{fig:rdp_plot_refinement_mnist} \subref{fig:rdp_plot_refinement_svhn} Rate refinability of MNIST and SVHN. Points with black outline are losses for the end-to-end models. Points without outline are the losses for the refinement models, which were trained with a encoder optimized for only distortion loss ($\lambda=0$). For fair comparison, the parameter count of an end-to-end encoder at high rate is approximately equal to the sum of the parameter counts for the universal encoder and refining encoder in the refinement model. Refinement from $\lambda=0$ performs closest to end-to-end models of the same rate, but any $\lambda>0$ can be refined. \subref{fig:decoder_outputs_refinement_mnist} \subref{fig:decoder_outputs_refinement_svhn} Outputs of selected models. Visual reconstruction of refinement models is similar to that of high-rate end-to-end models across all tradeoffs.}
    \label{fig:RDP_Tradeoff_refinement}
\end{figure*}

\subsection{Architecture}

The architectures used for the experiments are given as follows. Here each row represents a group of layers. $d$ denotes the latent dimension and $L$ the number of quantization levels per dimension, with $R=d \log L$. The widths of the layers may be varied for some experiments (e.g. to facilitate fair comparison in parameter count between the refinement models and end-to-end models). The quantizer performs hard nearest-neighbour quantization on the forward pass and uses a soft relaxation given by Equation (3) in \cite{mentzer2018conditional} during the backward pass. The bin centers for quantization are spaced evenly in $[-1,1]$ for each dimension. The type of compression systems are denoted by E for end-to-end, U for (perception-distortion) universal and R for refinement.

\subsubsection{MNIST}

The universality experiments build off of the encoders produced by the end-to-end experiments of the same rate with $\lambda=0.015$. The refinement experiment in row 2 of the right table builds off the universal encoder produced by the end-to-end model of row 1 with $\lambda=0,0.015$. For fair comparison, the parameter count of an end-to-end encoder at $R=9.51$ is approximately equal to the sum of the parameter counts for the universal encoder and refining encoder in the refinement model at $R=9.51$. 

\begin{table}[h]
    \centering
    \caption{Network and quantizer settings for MNIST. Left table: models shown in Figure \ref{fig:changing_decoder}\subref{fig:rdp_plot_mnist}. Right table: models shown in Figure \ref{fig:RDP_Tradeoff_refinement}\subref{fig:rdp_plot_mnist}. }
    \begin{tabular}[t]{ cccc }
    \specialrule{.1em}{.05em}{.05em}
    System & $R$ & $d$ & $L$  \\
    \hline
    E+U & $4.75$ & $3$ & $3$  \\
    E+U & $6$ & $3$ & $4$  \\
    E+U & $8$ & $4$ & $4$  \\
    E+U & $10$ & $5$ & $4$  \\
    \hline
    \end{tabular}
    \quad 
    \begin{tabular}[t]{ ccccc }
    \specialrule{.1em}{.05em}{.05em}
    System & $R$ & $d$ & $L$  \\
    \hline
    E & $4.75$ & $3$ & $3$  \\
    R & $9.51$ & $3+3$ & $3$  \\
    E & $9.51$ & $6$ & $3$  \\
    \hline
    \end{tabular}
    
    \vspace{-2mm}
\end{table}
\begin{table}[h]
    \vspace{-3mm}
    \centering
    \small
    \caption{The tradeoff coefficients used across all rates in each experiment for MNIST.}
    \begin{tabular}[t]{ |c|p{5cm}| }
    \hline
    System & Tradeoff coefficients   \\
    \hline
    E (Figure \ref{fig:changing_decoder}\subref{fig:rdp_plot_mnist})  & $\lambda=$ 0, 0.0033, 0.005, 0.0066, 0.008, 0.01, 0.011, 0.013, 0.015   \\
    \hline
    U (Figure \ref{fig:changing_decoder}\subref{fig:rdp_plot_mnist}) & $\lambda_i=$ 0, 0.0025, 0.004, 0.005, 0.006, 0.008, 0.009, 0.01, 0.011, 0.013   \\
    \hline
    E (Figure \ref{fig:RDP_Tradeoff_refinement}\subref{fig:rdp_plot_mnist}) & $\lambda=$ 0, 0.0033, 0.005, 0.0066, 0.008, 0.01, 0.011, 0.013, 0.015  \\
    \hline
    R (Figure \ref{fig:RDP_Tradeoff_refinement}\subref{fig:rdp_plot_mnist}) & $\lambda_i=$ 0, 0.0025, 0.004, 0.005, 0.006, 0.008, 0.009, 0.01, 0.013, 0.015  \\
    \hline
    \end{tabular}
    \label{tbl:lambdas_mnist}
    \vspace{-3mm}
\end{table}

\begin{table}[h]
    \centering
    \caption{Model architectures for MNIST. l-ReLU denotes Leaky ReLU. Refer to code for parameter settings.}
    \begin{tabular}[t]{ |c| }
    \multicolumn{1}{c}{Encoder} \\
	\hline 
        Input  \\ 
        \hline 
        Flatten  \\ 
        \hline 
        Linear, BatchNorm2D, l-ReLU \\
        \hline 
        Linear, BatchNorm2D, l-ReLU \\
        \hline 
        Linear, BatchNorm2D, l-ReLU \\
        \hline 
        Linear, BatchNorm2D, l-ReLU \\
        \hline 
        Linear, BatchNorm2D, Tanh \\
        \hline
        Quantizer \\
    \hline
    \end{tabular}
    \quad
    \begin{tabular}[t]{ |c| }
    \multicolumn{1}{c}{Decoder} \\
	\hline 
        Input  \\ 
        \hline 
        Linear, BatchNorm1D, l-ReLU \\
        \hline 
        Linear, BatchNorm1D, l-ReLU \\
        \hline
        Unflatten \\ 
        \hline 
        ConvT2D, BatchNorm2D, l-ReLU \\
        \hline 
        ConvT2D, BatchNorm2D, l-ReLU \\
        \hline
        ConvT2D, BatchNorm2D, Sigmoid \\
    \hline
    \end{tabular}
    \quad
    \begin{tabular}[t]{ |c| }
    \multicolumn{1}{c}{Critic} \\
	\hline 
        Input  \\ 
        \hline 
        Conv2D, l-ReLU \\
        \hline 
        Conv2D, l-ReLU \\
        \hline
        Conv2D, l-ReLU \\
        \hline
        Linear  \\  
    \hline
    \end{tabular}
\end{table}

\begin{table}[!htb]
    \centering
    \caption{Hyperparameters used for training MNIST models across all rates, including for universal/refining encoders. $\alpha$ is the learning rate, $(\beta_1, \beta_2)$ are the parameters for Adam, and $\lambda_{\text{GP}}$ is the gradient penalty coefficient.\\}
    \begin{tabular}{ccccc}
    \specialrule{.1em}{.05em}{.05em} 
    & $\alpha$ & $\beta_1$ & $\beta_2$ & $\lambda_{\text{GP}}$ \\ 
    \hline
    Encoder & $10^{-2}$ & $0.5$ & $0.9$ & - \\ 
    Decoder & $10^{-2}$ & $0.5$ & $0.9$ & - \\ 
    Critic & $2 \times 10^{-4}$ & $0.5$ & $0.9$ & $10$ \\ 
    \hline
    \end{tabular}
\end{table}

\subsubsection{SVHN}

The experiments are similar to MNIST, with the main difference being in the encoder architecture. The universality experiments build off of the encoders produced by the end-to-end experiments of the same rate with $\lambda=0.002$. The refinement experiment in row 2 of the right table builds off the universal encoder produced by the end-to-end model of row 1 with $\lambda=0,0.002$. For fair comparison, the parameter count of an end-to-end encoder at $R=60$ is approximately equal to the sum of the parameter counts for the universal encoder and refining encoder in the refinement model at $R=30$. 

\begin{table}[h]
    \centering
    \caption{Network and quantizer settings for SVHN. Left table: models shown in Figure \ref{fig:changing_decoder}\subref{fig:rdp_plot_svhn}. Right table: models shown in Figure \ref{fig:RDP_Tradeoff_refinement}\subref{fig:rdp_plot_svhn}. }
    \begin{tabular}[t]{ ccccc }
    \specialrule{.1em}{.05em}{.05em}
    System & $R$ & $d$ & $L$  \\
    
    \hline
    E+U & $30$ & $10$ & $8$  \\
    E+U & $45$ & $15$ & $8$  \\
    E+U & $60$ & $20$ & $8$  \\
    \hline
    \end{tabular}
    \quad 
    \begin{tabular}[t]{ ccccc }
    \specialrule{.1em}{.05em}{.05em}
    System & $R$ & $d$ & $L$  \\
    \hline
    E & $30$ & $10$ & $8$  \\
    R & $60$ & $10+10$ & $8$  \\
    E & $60$ & $20$ & $8$  \\
    \hline
    \end{tabular}
    \label{tbl:quantizer_settings}
\end{table}
\begin{table}[!htb]
    \vspace{-3mm}
    \centering
    \small
    \caption{The tradeoff coefficients used across all rates in each experiment for SVHN.}
    \begin{tabular}[t]{ |c|p{5cm}| }
    \hline
    System & Tradeoff coefficients   \\
    \hline
    E (Figure \ref{fig:changing_decoder}\subref{fig:rdp_plot_svhn})  & $\lambda=$ 0, 0.00025, 0.0005, 0.00075, 0.001, 0.00125, 0.0015, 0.002   \\
    \hline
    U (Figure \ref{fig:changing_decoder}\subref{fig:rdp_plot_svhn}) & $\lambda_i=$ 0, 0.0003, 0.0005, 0.0008, 0.001, 0.0012, 0.0017   \\
    \hline
    E (Figure \ref{fig:RDP_Tradeoff_refinement}\subref{fig:rdp_plot_svhn}) & $\lambda=$ 0, 0.00025, 0.0005, 0.00075, 0.001, 0.00125, 0.0015, 0.002  \\
    \hline
    R (Figure \ref{fig:RDP_Tradeoff_refinement}\subref{fig:rdp_plot_svhn}) & $\lambda_i=$ 0, 0.00025, 0.0005, 0.00075, 0.001, 0.00125, 0.0015, 0.002  \\
    \hline
    \end{tabular}
    \label{tbl:lambdas_svhn}
    \vspace{-3mm}
\end{table}
\begin{table}[!htb]
    \centering
    \caption{Model architectures for SVHN. Refer to code for parameter settings.}
    \begin{tabular}[t]{ |c| }
    \multicolumn{1}{c}{Encoder} \\
	\hline 
        Input  \\ 
        \hline
        Conv2D, l-ReLU  \\  
        \hline
        Conv2D, l-ReLU  \\  
        \hline
        Conv2D, l-ReLU  \\ 
        \hline
        Flatten \\
        \hline
        Linear, Tanh \\
        \hline
        Quantizer \\
    \hline
    \end{tabular}
    \quad
    \begin{tabular}[t]{ |c| }
    \multicolumn{1}{c}{Decoder} \\
	\hline 
        Input  \\ 
        \hline 
        Linear, BatchNorm1D, l-ReLU \\
        \hline 
        Linear, BatchNorm1D, l-ReLU \\
        \hline
        Unflatten \\ 
        \hline 
        ConvT2D, BatchNorm2D, l-ReLU \\
        \hline 
        ConvT2D, BatchNorm2D, l-ReLU \\
        \hline
        ConvT2D, BatchNorm2D, l-ReLU \\
        \hline
        ConvT2D, BatchNorm2D, Sigmoid \\
    \hline
    \end{tabular}
    \quad
    \begin{tabular}[t]{ |c| }
    \multicolumn{1}{c}{Critic} \\
	\hline 
        Input  \\ 
        \hline 
        Conv2D, l-ReLU \\
        \hline 
        Conv2D, l-ReLU \\
        \hline
        Conv2D, l-ReLU \\
        \hline
        Linear  \\  
    \hline
    \end{tabular}
\end{table}
\begin{table}[!htb]
    \centering
    \caption{Hyperparameters used for training. $\alpha$ is the learning rate, $(\beta_1, \beta_2)$ are the parameters for Adam, and $\lambda_{\text{GP}}$ is the gradient penalty coefficient.}
    \begin{tabular}{ccccc}
    \specialrule{.1em}{.05em}{.05em} 
    & $\alpha$ & $\beta_1$ & $\beta_2$ & $\lambda_{\text{GP}}$ \\ 
    \hline
    Encoder & $10^{-4}$ & $0.5$ & $0.999$ & - \\ 
    Decoder & $10^{-4}$ & $0.5$ & $0.999$ & - \\ 
    Critic & $10^{-4}$ & $0.5$ & $0.999$ & $10$ \\ 
    \hline
    \end{tabular}
\end{table}

\end{document}